\newcommand{\typeof}{0} %
\newcommand{\lv}[1]{\ifthenelse{\equal{\typeof}{0}}{#1}{}}
\newcommand{\sv}[1]{\ifthenelse{\equal{\typeof}{0}}{}{#1}}
\newcommand{\lvsv}[2]{\ifthenelse{\equal{\typeof}{0}}{#1}{#2}}
\title{The Geometry of Types}
  \author{Ugo Dal Lago \and Barbara Petit}
  \date{}
\newtheorem{definition}{Definition}[section]  
\newtheorem{theorem}{Theorem}[section]
\newtheorem{lemma}[theorem]{Lemma}
\newtheorem{proposition}[theorem]{Proposition}
\newtheorem{corollary}[theorem]{Corollary}
\newenvironment{proof}{\begin{trivlist}
       \item[\hskip \labelsep {\bfseries Proof.}]}{\hfill $\Box$ \end{trivlist}}
\newenvironment{varitemize}
{
\begin{list}{\labelitemi}
{\setlength{\itemsep}{0pt}
 \setlength{\topsep}{0pt}
 \setlength{\parsep}{0pt}
 \setlength{\partopsep}{0pt}
 \setlength{\leftmargin}{15pt}
 \setlength{\rightmargin}{0pt}
 \setlength{\itemindent}{0pt}
 \setlength{\labelsep}{5pt}
 \setlength{\labelwidth}{10pt}
}}
{
 \end{list} 
}
\newcounter{numberone}
\newenvironment{varenumerate}
{
\begin{list}{\arabic{numberone}.}
{ \usecounter{numberone}
  \setlength{\itemsep}{0pt}
  \setlength{\topsep}{0pt}
  \setlength{\parsep}{0pt}
  \setlength{\partopsep}{0pt}
  \setlength{\leftmargin}{15pt}
  \setlength{\rightmargin}{0pt}
  \setlength{\itemindent}{0pt}
  \setlength{\labelsep}{5pt}
  \setlength{\labelwidth}{15pt}
}}
{
\end{list} 
}
\newtheorem{fact}{Fact}
\definecolor{violet}{RGB}{120,6,250}
\definecolor{orange}{RGB}{250,90,0}
\newcommand{\cone}[1]{\textcolor{violet}{#1}}  
\newcommand{\ctwo}[1]{\textcolor{orange}{#1}}  
\newcommand{\TODO}[1]{\textcolor{orange}{#1}}
\newcommand{\resp}{\textit{resp.} }
\newcommand{\ie}{\text{i.e.}}
\newcommand{\eg}{\text{e.g.}}
\newcommand{\etc}{\text{etc.} }
\newcommand{\midd}{\; \; \mbox{\Large{$\mid$}}\;\;}
\newcommand{\dlpcfn}{\ensuremath{\mathsf{d}\ell\mathsf{PCF_N}}}
\newcommand{\dlpcfv}{\ensuremath{\mathsf{d}\ell\mathsf{PCF_V}}}
\newcommand{\dlpcf}{\ensuremath{\mathsf{d}\ell\mathsf{PCF}}}
\newcommand{\BLL}{\textsf{BLL}}                      
\newcommand{\LL}{\textsf{LL}}                        
\newcommand{\PCF}{\textsf{PCF}}
\newcommand{\CEK}{\textsf{CEK}}
\newcommand{\cek}{\ensuremath{\mathsf{CEK}_{\mathsf{PCF}}}} 
\newcommand{\KAM}{\textsf{KAM}}
\newcommand{\kam}{\ensuremath{\mathsf{KAM}_{\mathsf{PCF}}}} 
\newcommand{\cbn}{\textsc{cbn}}                      
\newcommand{\cbv}{\textsc{cbv}}                      
\renewcommand{\t}{\ensuremath{t}}          
\newcommand{\s}{\ensuremath{s}}            
\renewcommand{\u}{\ensuremath{u}}          
\renewcommand{\v}{\ensuremath{v}}          
\newcommand{\ifz}[3]{\ensuremath{
    \mathtt{\ ifz\ }#1\mathtt{\ then\ }#2\mathtt{\ else\ }#3}} 
\newcommand{\fix}[2][x]{\ensuremath{\mathtt{\ fix\ }#1.#2}}    
\newcommand{\nb}[1][n]{\ensuremath{\underline{\mathtt{#1}}}}    
\newcommand{\suc}{\ensuremath{\mathtt{s}}}                     
\newcommand{\pred}{\ensuremath{\mathtt{p}}}                    
\newcommand{\id}{\ensuremath{\mathrm{id}}}          
\newcommand{\add}{\ensuremath{\mathrm{add}}}        
\newcommand{\fpred}{\ensuremath{\mathrm{pred}}}     
\newcommand{\fsucc}{\ensuremath{\mathrm{succ}}}     
\newcommand{\one}{\ensuremath{\mathrm 1}}          
\newcommand{\zero}{\ensuremath{\mathrm 0}}         
\newcommand{\I}{\ensuremath{\mathrm{I}}}           
\newcommand{\J}{\ensuremath{\mathrm{J}}}           
\newcommand{\K}{\ensuremath{\mathrm{K}}}           
\renewcommand{\H}{\ensuremath{\mathrm{H}}}         
\renewcommand{\L}{\ensuremath{\mathrm{L}}}         
\newcommand{\M}{\ensuremath{\mathrm{M}}}           
\newcommand{\N}{\ensuremath{\mathrm{N}}}           
\newcommand{\inb}[1][n]{\ensuremath{\underline{#1}}}  
\newcommand{\f}{\ensuremath{\mathrm{f}}}           
\newcommand{\g}{\ensuremath{\mathrm{g}}}           
\newcommand{\h}{\ensuremath{\mathrm{h}}}           
\renewcommand{\i}{\ensuremath{\mathrm{i}}}           
\renewcommand{\j}{\ensuremath{\mathrm{j}}}           
\renewcommand{\k}{\ensuremath{\mathrm{k}}}           
\renewcommand{\l}{\ensuremath{\mathrm{l}}}           
\newcommand{\m}{\ensuremath{\mathrm{m}}}           
\newcommand{\n}{\ensuremath{\mathrm{n}}}           
\newcommand{\p}{\ensuremath{\mathrm{p}}}           
\newcommand{\q}{\ensuremath{\mathrm{q}}}           
\newcommand{\mnu}{\ensuremath{-}}                  
\newcommand{\fc}[4][b]{\bigotriangleup_{#1}^{#2,#3}#4}        
\newcommand{\isubst}[2][a]{\ensuremath{\{#2/#1\}}} 
\renewcommand{\int}{\ensuremath{\mathbb{N}}}         
\newcommand{\itp}[2][`r]{\ensuremath{\llbracket #2\rrbracket^{\ep}_{#1}}}  
\newcommand{\symb}{\ensuremath{\mathcal{S}}}       
\newcommand{\nsymb}{\ensuremath{\mathcal{N}}}      
\newcommand{\NatPCF}{\ensuremath{\mathtt{Nat}}}    
\newcommand{\toPCF}{\Rightarrow}                   
\newcommand{\tpcfone}{\ensuremath{T}}
\newcommand{\A}{\ensuremath{A}}           
\newcommand{\B}{\ensuremath{B}}           
\newcommand{\C}{\ensuremath{C}}           
\newcommand{\D}{\ensuremath{D}}           
\newcommand{\Nat}[2]{\ensuremath{\mathtt{Nat}[#1,#2]}}
\newcommand{\NatU}[1]{\ensuremath{\mathtt{Nat}[#1]}}  
\newcommand{\mtyp}[3][a]{\ensuremath{[#1<#2]\cdot#3}} 
\newcommand{\toLL}{\multimap}             
\newcommand{\marr}[1]{\stackrel{#1}{\multimap}}       
\newcommand{\teq}{\equiv}
\newcommand{\forget}[1]{\ensuremath{(\![#1]\!)}}      
\newcommand{\typsymb}[2][\polone]{\ensuremath{#2^{#1}}}  
\newcommand{\arrv}[4][a]{\mtyp[#1]{#2}{(#3\toLL#4)}}  
\newcommand{\ep}{\ensuremath{\mathcal{E}}}      
\newcommand{\eptwo}{\ensuremath{\mathcal{F}}}
\newcommand{\rep}{\ensuremath{\mathcal{A}}}     
\newcommand{\reptwo}{\ensuremath{\mathcal{B}}}
\newcommand{\repthree}{\ensuremath{\mathcal{C}}}
\newcommand{\ispec}[2]{\ensuremath{(#1,#2)\text{-specified}}}    
\newcommand{\tspec}[3]{\ensuremath{(#1,#2,#3)\text{-specified}}} 
\newcommand{\fiv}{\ensuremath{`f}}              
\newcommand{\ictx}{\ensuremath{`F}}             
\newcommand{\judg}[7][\ep]{\ensuremath{#2;#3;#4\vdash_{#5}^{#1}#6:#7}} 
\newcommand{\ejudg}[5][\ep]{\ensuremath{#2\vdash_{#3}^{#1}#4:#5}} 
\newcommand{\ijudg}[4][\ep]{\ensuremath{#2;#3\vDash_{#1}#4}}      
\newcommand{\eijudg}[2][\ep]{\ensuremath{\vDash_{#1}#2}}          
\newcommand{\sjudg}[4][\ep]{\ensuremath{#2;#3\vdash_{#1}#4}}     
\newcommand{\cjudg}[3]{\ensuremath{#1;#2\vDash#3\downarrow}}     
\newcommand{\scval}[2][\ep]{\ensuremath{#1\Vdash\bigwedge#2}}    
\newcommand{\ljudg}[4]{\ensuremath{#1\vdash_{#2}#3:#4}}          
\newcommand{\mapping}{\ensuremath{`m}}          
\newcommand{\mappingtwo}{\ensuremath{`e}}          
\newcommand{\sig}[1][]{\ensuremath{`S_{#1}}}    
\newcommand{\env}[1][`x]{\ensuremath{#1}}              
\newcommand{\mkenv}[2][x]{\ensuremath{#1\mapsto#2}}    
\newcommand{\clo}[2][\env]{
  \ensuremath{\langle\,#2\,;\,#1\,\rangle}}            
\newcommand{\cloee}[1]{\ensuremath{\langle\,#1\,\rangle}} 
\newcommand{\cloc}[1][c]{\ensuremath{\mathbf{\sf #1}}} 
\newcommand{\clov}[1][v]{\ensuremath{\mathbf{\sf #1}}} 
\newcommand{\stk}[1][`p]{\ensuremath{#1}}              
\newcommand{\estk}{\ensuremath{\diamond}}              
\newcommand{\stkfun}[2]{\ensuremath{\mathsf{fun}#1`.#2}}  
\newcommand{\stkarg}[2]{\ensuremath{\mathsf{arg}#1`.#2}}  
\newcommand{\stkif}[4][\env]{\ensuremath{
    \mathsf{fork}\,\langle#2\,;\,#3\,;\,#1\rangle`.#4}}               
\newcommand{\stkp}[1]{\ensuremath{\mathsf{p}`.#1}}      
\newcommand{\stks}[1]{\ensuremath{\mathsf{s}`.#1}}      
\newcommand{\proc}[2]{\ensuremath{#1~\star~#2}}         
\newcommand{\tocek}{\ensuremath{\succ}}                 
\newcommand{\vconv}[1][n]{\ensuremath{\Downarrow^{#1}_{_{\sf V}}}} 
\newcommand{\nconv}[1][n]{\ensuremath{\Downarrow^{#1}_{_{\sf N}}}} 
\newcommand{\mainfct}{\ensuremath{\textsc{gen}}}    
\newcommand{\der}{\ensuremath{d_{\sf v}}}               
\newcommand{\derpcf}{\ensuremath{d_{\PCF}}}             
\newcommand{\scond}{\ensuremath{\mathscr{R}}}          
\newcommand{\scondtwo}{\ensuremath{\mathscr{S}}}          
\newcommand{\scondthree}{\ensuremath{\mathscr{U}}}          
\newcommand{\typjudg}{\ensuremath{\mathcal{J}}}        
\newcommand{\progname}{\ensuremath{\mathtt{CheckBound}}}          
\newcommand{\anot}[2][\fiv]{\ensuremath{`a(#1;#2)}}    
\newcommand{\algder}{\textsc{der}}                      
\newcommand{\algderp}[6]{\textsc{der}((#1,#2);#3;#4;#5;#6)}
\newcommand{\algweak}{\textsc{weak}}                     
\newcommand{\algweakp}[4]{\textsc{weak}(#1;#2;#3;#4)}
\newcommand{\algcontr}{\textsc{ctr}}                    
\newcommand{\algcontrp}[9]{\textsc{ctr}((#1,#2,#3);(#4,#5);#6;#7;#8;#9)}
\newcommand{\algdig}{\ensuremath{\textsc{dig}}}        
\newcommand{\algdigp}[9]{\ensuremath{\textsc{dig}((#1,#2);(#3,#4);#5;(#6,#7);#8;#9)}}
\newcommand{\algax}{\textsc{eq}}                       
\newcommand{\algaxp}[5]{\textsc{eq}((#1;#2);#3;#4;#5)}
\newcommand{\algsub}{\textsc{subs}}                      
\newcommand{\algsubp}[7]{\textsc{subs}((#1,#2);#3;#4;#5;#6;#7)}
\newcommand{\algif}{\textsc{fork}}                       
\newcommand{\algifp}[6]{\textsc{fork}(#1;#2;#3;#4;#5;#6)}
\newcommand{\polone}{p}
\newcommand{\poltwo}{q}
\begin{document}
\maketitle

\begin{abstract}
We show that time complexity analysis of higher-order functional programs can be effectively
reduced to an arguably simpler (although computationally equivalent) verification problem, namely
checking first-order inequalities for validity. This is done by giving an efficient inference
algorithm for linear dependent types which, given a \PCF\ term, produces in output both a linear
dependent type and a cost expression for the term, together with a set of proof obligations. 
Actually, the output type judgement
is derivable \emph{iff} all proof obligations are valid. This, coupled with the already known
\emph{relative completeness} of linear dependent types, ensures that no information is lost, i.e.,
that there are no false positives or negatives. Moreover, the procedure reflects the difficulty
of the original problem: simple \PCF\ terms give rise to sets of
proof obligations which are easy to solve. The latter can then be put in a format suitable 
for automatic or semi-automatic verification by external solvers. Ongoing experimental evaluation 
has produced encouraging results, which are briefly presented in the paper.
\end{abstract}

\section{Introduction}
One of the most crucial non-functional properties of programs is the amount of resources (like time, memory
and power) they need when executed. Deriving upper bounds on the resource consumption of programs is
crucial in many cases, but is in fact an undecidable problem as soon as the underlying programming language is non-trivial.
If the units of measurement in which resources are measured become concrete and close to the physical
ones, the problem becomes even more complicated, given the many transformation and optimisation layers
programs are applied to before being executed. A typical example is the one of WCET techniques adopted
in real-time systems~\cite{wcet}, which do not only need to deal with how many machine instructions a program corresponds
to, but also with how much time each instruction costs when executed by possibly complex architectures
(including caches, pipelining, etc.), a task which is becoming even harder with the current trend
towards multicores architectures.

A different approach consists in analysing the \emph{abstract} complexity of programs. As an example,
one can take the number of instructions executed by the program as a measure of its execution
time. This is of course a less informative metric, which however becomes more accurate if the actual 
time complexity \emph{of each instruction} is kept low. One advantage of this analysis is the independence 
from the specific hardware platform executing the program at hand: the latter only needs
to be analysed once. A variety of verification techniques have been employed in this context, from
abstract interpretation~\cite{speed} to type systems~\cite{HOAA10} to program logics~\cite{asdf} to interactive theorem proving\footnote{A detailed discussion with related work is in Section~\ref{sec:rw}}.

Among the many type-based techniques for complexity analysis, a recent proposal consists in going towards
systems of \emph{linear dependent types}, as suggested by Marco Gaboardi and the first author~\cite{DLG11}. In linear
dependent type theories, a judgement has the form $\vdash_\I \t:`s$, where $`s$ is the type of $\t$ and
$\I$ is its \emph{cost}, an estimation of its time complexity. In this 
paper, we show that the problem of checking, given a \PCF\ term $\t$ and $\I$, whether
$\vdash_\I \t:`s$ holds can be efficiently reduced to the one of checking the truth of a set of proof obligations, themselves
formulated in the language of a \emph{first-order} equational program. Interestingly,
simple $\lambda$-terms give rise to simple equational programs. In other words, 
linear dependent types are not only a sound and \emph{relatively} complete methodology for inferring \emph{time} 
bounds of programs~\cite{DLG11,DLP12}: they also allow to reduce complexity analysis to an arguably simpler (although
computationally equivalent) problem which is much better studied and for which a variety of techniques and
concrete tools exist~\cite{why3}. Noticeably, the bounds one obtains this way translate to bounds on the number 
of steps performed by evaluation machines for the $\lambda$-calculus, which means that the induced metrics are not too 
abstract after all. The type inference algorithm is described in Section~\ref{sec:typeinf}.

The scenario, then, becomes similar to the one in Floyd-Hoare program
logics for imperative programs, where completeness holds~\cite{Cook78}
(at least for the simplest idioms~\cite{Clarke79}) \emph{and} weakest
preconditions can be generated automatically (see,
e.g.,~\cite{asdf}). A benefit of working with functional programs is that
type inference --- the analogue of generating WPs --- can be done
compositionally without the need of guessing invariants. 

Linear dependent types are simple types annotated with some 
\emph{index terms}, i.e. first-order terms reflecting the value of data flowing inside the program.
Type inference produces in output a type derivation, a set of inequalities (which should be thought
of as proof obligations) and an equational program  $\mathcal{E}$ giving meaning to function symbols appearing in index 
terms (see Figure~\ref{fig:context}). A natural thing to do once $\mathcal{E}$ and the various proof
obligations are available is to try to solve them automatically, as an example through SMT solvers.
If automatically checking the inequalities for truth does not succeed (which must happen, in some cases), one 
can anyway find useful information in the type derivation, as it tells you precisely \emph{which} data every symbol 
corresponds to. We elaborate on this issue in Section~\ref{sec:side_cond}.
\begin{figure}
  \centering
  \includegraphics[scale=1.4]{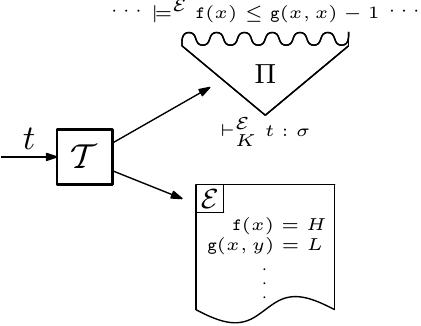}
  \caption{General scheme of the type inference algorithm.}
  \label{fig:context}
\end{figure}

But where does linear dependency come from? Linear dependent types can be seen as a way to turn Girard's geometry of 
interaction~\cite{Girard87} (or, equivalently, AJM games~\cite{AJM00}) into a type system for the $\lambda$-calculus:
the equational program one obtains as a result of type inference of a term $\t$ is nothing but as a description of a token machine
for $\t$. In presence of linear dependency, any term which can possibly be duplicated, can receive
different, although uniform, types, similarly to what happens in \BLL~\cite{GSS92}. 
As such, this form of dependency is significantly simpler than the one of, e.g., the calculus of inductive constructions. 


\section{Linear Dependency at a Glance}
\label{sec:informal}
Traditionally, type systems carry very little information about the \emph{value} of data manipulated by programs,
instead focusing on their \emph{nature}. As an example, all (partial recursive) functions from natural numbers to natural numbers
can be typed as $\NatPCF\toPCF\NatPCF$ in the $\lambda$-calculus with natural numbers and higher-order recursion, also known as \PCF~\cite{Plotkin77}.
This is not an intrinsic limit of the type-based analysis of programs, however: much richer type disciplines have flourished in the last 
twenty years~\cite{HughesPS96,Denney98,CraryWei00,BartheGR08}. All of them guarantee stronger 
properties for typable programs, the price being a more complicated type language and computationally more difficult type inference
and checking problems. As an example, sized types~\cite{HughesPS96} are a way to ensure termination of functional programs based on size
information. In systems of sized types, a program like
$$
\t=\lambda x.\lambda y.\add\ (\add\ x\ y)\ (\fsucc\ y)
$$
can be typed as $\NatPCF_{a}\toPCF\NatPCF_{b}\toPCF\NatPCF_{a+2b+\one}$, and in general as 
$\NatPCF_{a}\toPCF\NatPCF_{b}\toPCF\NatPCF_{\I}$, where $\I\geq a+2b+\one$.
In other words, the \PCF\ type $\NatPCF$ is refined into $\NatPCF_{\I}$ (where $\I$ is an arithmetical expression) whose semantics 
is the set of all natural numbers smaller or equal to $\I$, \ie\ the interval $[\zero,\I]\subseteq\int$.
The role of size information is to ensure that all functions terminate, and this is done by restricting the kind of functions 
of which one is allowed to form fixpoints. Sized types are nonlinear: arguments to functions can be freely duplicated.
Moreover, the size information is only approximate, since the expression labelling base types is only an \emph{upper} bound on the 
size of typable values.

Linear dependent types can be seen as a way to inject precision and linearity into sized types.
Indeed, $t$ receives the following type in \dlpcfv~\cite{DLP12}:
\begin{equation}\label{eq:extypeone}
  \Nat{\zero}{a}\marr{c<\I}\Nat{\zero}{b}\marr{d<\J}\Nat{\zero}{a+2b+\one}.
\end{equation}
As one can easily realise, $\Nat{\K}{\H}$ is the type of all natural numbers in the interval $[\K,\H]\subseteq\int$.
Moreover, $`s\marr{b<\J}`t$ is the type of \emph{linear} functions from $`s$ to $`t$ which can be copied by the environment $\J$ times. 
The $\J$ copies of the function have types obtained by substituting $\zero,\ldots,\J-\one$ for $b$ in $`s$ and $`t$.
This is the key idea behind linear dependency. The type~\eqref{eq:extypeone} is imprecise, but can be easily turned into
\begin{equation}\label{eq:extypetwo}
\Nat{a}{a}\marr{c<\I}\Nat{b}{b}\marr{d<\J}\Nat{a+2b+\one}{a+2b+\one},
\end{equation}
itself a type of $t$. In the following, the singleton interval type $\Nat{\K}{\K}$ is denoted simply as $\NatU{\K}$.

Notice that linear dependency is not exploited in~\eqref{eq:extypetwo}, \eg, $d$ does not appear free in $\NatU{b}$ nor in $\NatU{a+2b+\one}$.
Yet, \eqref{eq:extypetwo} precisely captures the functional behaviour of~\t. If $d$ does not appear free in $`s$ nor in $`t$, then
$`s\marr{d<\I}`t$ can be abbreviated as $`s\marr{\I}`t$. Linear dependency becomes necessary in presence of higher order functions. 
Consider, as another example, the term
$$
u=\lambda x.\lambda y.\ifz{y}{\nb[0]}{xy}
$$
$u$ has simple type $(\NatPCF\toPCF\NatPCF)\toPCF\NatPCF\toPCF\NatPCF$. One way to turn it into a linear dependent type is the following
\begin{equation}\label{eq:exttypethree}
(\NatU{a}\marr{\H}\NatU{\I})\marr{\K}\NatU{a}\marr{\L}\NatU{\J},
\end{equation}
where $\J$ equals $\zero$ when $a=\zero$ and $\J$ equals $\I$ otherwise.
Actually, $u$ has type (\ref{eq:exttypethree}) \emph{for every} $\I$ and $\J$, provided the two expressions are in the appropriate relation. Now, consider the term
$$
v=(\lambda x.\lambda y.(x\ \fpred\ (x\ \id\ y)))\ u.
$$
The same variable $x$ is applied to the identity $\id$ and to the
predecessor $\fpred$. Which type should we give to the variable $x$ and to $u$, then? 
If we want to preserve precision, the type should reflect \emph{both} uses of $x$. The right type for
$u$ is actually the following:
\begin{equation}\label{eq:typefour}
  (\NatU{a}\marr{\one}\NatU{\I})\marr{c<2}\NatU{a}\marr{\one}\NatU{\J}
\end{equation}
where both $\I$ and $\J$ evaluate to $a$ if $c=\zero$ and to $a-\one$ otherwise.
If $\id$ is replaced by $\fsucc$ in the definition of~\v, then~\eqref{eq:typefour} becomes even more complicated:
the first ``copy'' of $\J$ is fed not with $a$ but with either $\zero$ or $a+\one$.

Linear dependency precisely consists in allowing different copies of a term to receive types 
which are indexed differently (although having the same ``functional skeleton'')
and to represent all of them in compact form. This is in contrast to, \eg, intersection types, 
where the many different ways a function uses its argument could even be structurally different. 
This, as we will see in Section~\ref{sec:formal}, has important consequences on the kind of completeness 
results one can hope for: if the language in which \emph{index terms} 
are written is sufficiently rich, then the obtained system is complete in an intensional sense: 
a precise type can be given to \emph{every} terminating $t$ \emph{having type} $\NatPCF\toPCF\NatPCF$.

Noticeably, linear dependency allows to get precise information about the functional behaviour of programs without 
making the language of types too different from the one of simple types (\eg, one does not
need to quantify over index variables, as in sized types). The price to pay, however, is that types, and
especially higher-order types, need to be \emph{context aware}: when you type $\u$ as a subterm of $\v$ (see above) you
need to know which arguments $\u$ will be applied to. Despite this, a genuinely compositional type inference
procedure can actually be designed and is the main technical contribution of this paper.

\subsection{Linearity, Abstract Machines and the Complexity of Evaluation}
\label{sec:infor-lin}
Why dependency, but specially \emph{linearity}, are so useful for complexity analysis?
Actually, typing a term using linear dependent types requires finding an upper bound to the number of times each value is copied by its environment, called its \emph{potential}.
In the term~\v\ from the example above, the variable $x$ is used twice, and accordingly one finds $c<2$ in \eqref{eq:typefour}. 
Potentials of higher-order values occurring in a term are crucial parameters for the complexity of evaluating the term by abstract mechanisms~\cite{DalLago06}.
The following is an hopefully convincing (but necessarily informal) discussion about why this is the case.

Configurations of abstract machines for the $\lambda$-calculus (like Friedman and Felleisen's \CEK\ and Krivine's \KAM) can be thought of as 
being decomposable into two distinct parts:
\begin{varitemize}
\item
  First of all, there are \emph{duplicable} entities which are either copied entirely or turned into non-duplicable entities.
  This includes, in particular, terms in so-called environments.
  Each (higher-order) duplicable entity is a subterm of the term the computation started from.
\item
  There are \emph{non-duplicable} entities that the machine uses to look for the next redex to be fired.
  Typically, these entities are the current term and (possibly) the stack.
  The essential feature of non-duplicable entities is the fact that they are progressively ``consumed'' 
  by the machine during evaluation: the search for the next redex somehow consists in traversing 
  the non-duplicable entities until a redex is found \emph{or} a duplicable entity needs to be turned into a non-duplicable one.
\end{varitemize}
As an example, consider the process of evaluating the \PCF\ term
$$
(`lf.\ifz{(f\,\nb[0])}{\nb[0]}{f(f\,\nb[0])})((`lx.`ly.\add\,x\,y)~\nb[3])
$$
by an appropriate generalisation of the \CEK\ machine, see Figure~\ref{fig:exple-cek}.
\begin{figure*}[t]
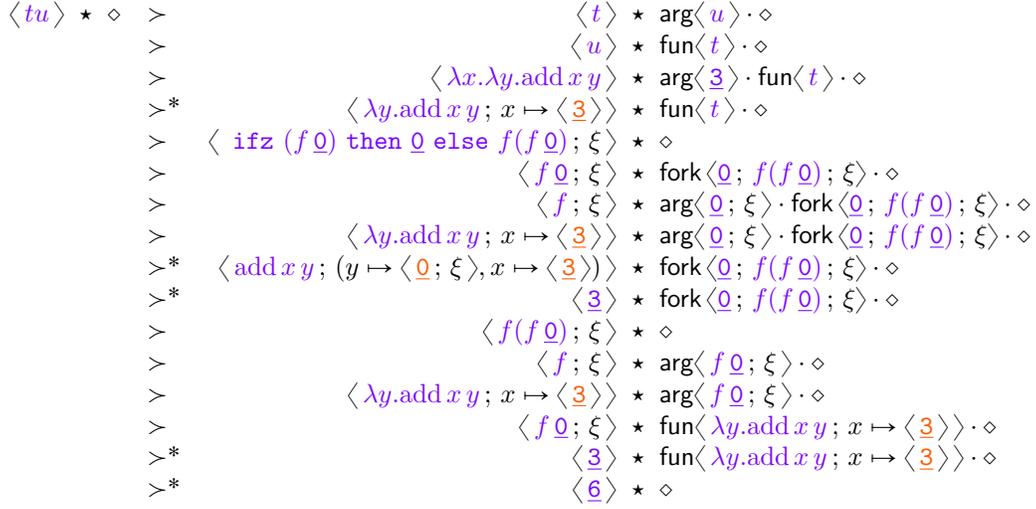

\begin{center}
$\t=`lf.\ifz{(f\,\nb[0])}{\nb[0]}{f(f\,\nb[0])};$
\quad\quad
$\u=(`lx.`ly.\add\,x\,y)~\nb[3];$
\quad\quad
$\env=\mkenv[f]{\clo[\mkenv{\cloee{\ctwo{\nb[3]}}}]{\ctwo{`ly.\add\,x\,y}}}.$
\end{center}
\begin{center}
  \begin{math}
    \begin{array}{clr@{~\star~}l}
      \proc{\cloee{\cone{\t\u}}}{\estk}
      & \tocek & \cloee{\cone{\t}} &
      \stkarg{\cloee{\cone{\u}}}{\estk}
      \\
      & \tocek & \cloee{\cone{\u}} 
      & \stkfun{\cloee{\cone{\t}}}{\estk}
      \\
      & \tocek & \cloee{\cone{`lx.`ly.\add\,x\,y}} & 
      \stkarg{\cloee{\cone{\nb[3]}}}{\stkfun{\cloee{\cone{\t}}}{\estk}}
      \\
      & \tocek^* 
      & \clo[\mkenv{\cloee{\ctwo{\nb[3]}}}]{\cone{`ly.\add\,x\,y}}
      & \stkfun{\cloee{\cone{\t}}}{\estk}
      \\
      & \tocek &
      \clo[\env]{\cone{\ifz{(f\,\nb[0])}{\nb[0]}{f(f\,\nb[0])}}}
      & \estk
      \\
      & \tocek & \clo[\env]{\cone{f\,\nb[0]}} &
      \stkif[\env]{\cone{\nb[0]}}{\cone{f(f\,\nb[0])}}{\estk}
      \\
      & \tocek & \clo[\env]{\cone{f}} &
      \stkarg{\clo[\env]{\cone{\nb[0]}}}{\stkif[\env]{\cone{\nb[0]}}{\cone{f(f\,\nb[0])}}{\estk}}
      \\
      & \tocek 
      & \clo[\mkenv{\cloee{\ctwo{\nb[3]}}}]{\cone{`ly.\add\,x\,y}} 
      &\stkarg{\clo[\env]{\cone{\nb[0]}}}{
        \stkif[\env]{\cone{\nb[0]}}{\cone{f(f\,\nb[0])}}{\estk}}
      \\
      & \tocek^* 
      &\clo[{(\mkenv[y]{\clo{\ctwo{\nb[0]}}},
        \mkenv{\cloee{\ctwo{\nb[3]}}})}]{\cone{\add\,x\,y}}
      &
      \stkif[\env]{\cone{\nb[0]}}{\cone{f(f\,\nb[0])}}{\estk}
      \\
      & \tocek^* & \cloee{\cone{\nb[3]}} &
      \stkif[\env]{\cone{\nb[0]}}{\cone{f(f\,\nb[0])}}{\estk}
      \\
      & \tocek & \clo[\env]{\cone{f(f\,\nb[0])}} & \estk
      \\
      & \tocek & \clo[\env]{\cone{f}} &
      \stkarg{\clo[\env]{\cone{f\,\nb[0]}}}{\estk}
      \\
      & \tocek 
      & \clo[\mkenv{\cloee{\ctwo{\nb[3]}}}]{\cone{`ly.\add\,x\,y}} 
      & \stkarg{\clo[\env]{\cone{f\,\nb[0]}}}{\estk}
      \\
      & \tocek & \clo[\env]{\cone{f\,\nb[0]}} & 
      \stkfun{\clo[\mkenv{\cloee{\ctwo{\nb[3]}}}]{\cone{`ly.\add\,x\,y}}}{\estk}
      \\
      & \tocek^* & \cloee{\cone{\nb[3]}} &
      \stkfun{\clo[\mkenv{\cloee{\ctwo{\nb[3]}}}]{\cone{`ly.\add\,x\,y}}}{\estk}
      \\
      & \tocek^* & \cloee{\cone{\nb[6]}} & \estk
    \end{array}
  \end{math}  
\end{center}
  \caption{Evaluation of a term in the \cek\ abstract machine.}
  \label{fig:exple-cek}
\end{figure*}
Initially, the whole term is non-duplicable. By travelling into it,
the machine finds a first redex $\u$; at that point, $\nb[3]$ becomes duplicable. 
The obtained closure itself becomes part of the environment~$\env$,
and the machine looks into the body of $\t$, ending up in an occurrence of $f$, which
needs to be replaced by a copy of $\env(f)$. After an \emph{instantiation step},
a new non-duplicable entity $\env(f)$ indeed appears. Note that, by an easy combinatorial argument, the number of machine steps necessary
to reach $f$ is at most (proportional to) the size of the starting term $\t\u$, since reaching
$f$ requires consuming non-duplicable entities which can only be created through
instantiations. After a copy of $\env(f)$ becomes non-duplicable, some additional ``nonduplicable fuel'' 
becomes available, but not too much: $`ly.\add\,x\,y$ is after all a subterm of the initial term. 

The careful reader should already have guessed the moral of this story: when analysing
the time complexity of evaluation, we could limit ourselves to counting how many
\emph{instantiation steps} the machine performs (as opposed to counting \emph{all}
machine steps). We claim, on the other hand, that the number of instantiation steps
\emph{equals} the sum of potentials of all values appearing in the initial
term, something that can be easily inferred from the kind of precise linear typing
we were talking about at the beginning of this section.

Summing up, once a dependently linear type has been attributed to a term $\t$, the time complexity
of evaluating $\t$ can be derived somehow for free: not only an expression bounding the number of instantiation
steps performed by an abstract machine evaluating $\t$ can be derived, but it is part of the underlying
type derivation, essentially. As a consequence, reasoning (automatically or not) about it can be
done following the structure of the program.

\section{Programs and Types, Formally}
\label{sec:formal}

In this section, we present the formalism of \dlpcf, a \emph{linear dependent type system} for~\PCF~\cite{Plotkin77}.
Two versions exist: \dlpcfn\ and \dlpcfv, corresponding to \emph{call-by-name} and \emph{call-by-value} evaluation of terms, respectively.
The two type systems are differerent, but the underlying idea is basically the same. We give here the details of the \cbv\ version~\cite{DLP12}, which
better corresponds to widespread intuitions about evaluation, but also provide some indications about the \cbn\ setting~\cite{DLG11}.

\subsection{Terms and Indexes}
\label{sec:term-ind}
Terms are given by the usual \PCF\ grammar:
\begin{align*}
  \s,\t,\u~:=&\;x\midd\nb\midd\t\,\u\midd`lx.\t\midd\pred(\t)\midd\suc(\t)\\
  & \fix{\t}\midd\ifz{\t}{\u}{\s}.
\end{align*}
A \emph{value} (denoted by~$v,w$ \etc) is either a primitive integer~\nb, or an abstraction~$`lx.\t$, or a fixpoint~\fix{\t}.
In addition to the usual terms of the $`l$-calculus, there are a fixpoint construction, primitive natural numbers with predecessor and successor, and 
conditional branching with a test for zero. For instance, a simple program computing addition is the following:
\begin{displaymath}
  \add~=~
  \fix{~`ly.`lz.\ifz{y}{z}{\suc(x~(\pred(y))~z)}}~.
\end{displaymath}

\subsubsection{Language of Indexes}
\label{sec:index}
As explained informally in Section~\ref{sec:informal}, a type in \dlpcf\ consists in an annotation of a \PCF\ type, where the
annotation consists in some \emph{indexes}. The latter are parametrised by a set of index variables~$\mathcal{V}=\{a,b,c,\dots\}$ 
and an untyped signature~$\Theta$ of \emph{function symbols}, denoted with~$\f$, $\g$, $\h$, \etc We assume~$\Theta$ contains at least the arithmetic 
symbols~$+$, \mnu, \zero\ and~\one, and we write \inb\ for $\one+\cdots+\one$ ($n$ times).
Indexes are then constructed by the following grammar:
\begin{displaymath}
  \I,\J,\K~::=\quad
  a\midd\f(\I_1,\ldots,\I_{n}) 
  \midd\sum_{a< \I}\J
  \midd\fc[a]{\I}{\J}{\K}~,
\end{displaymath}
where~$n$ is the arity of~\f\ in~$\Theta$.
Free and bound index variables are defined as usual, taking care that all occurrences of $a$ in $\J$ are bound in both~$\sum_{a< \I}\J$ and~$\fc[a]{\I}{\K}{\J}$.
The substitution of a variable~$a$ with~\J\ in~\I\ is written~$\I\isubst{\J}$.
Given an \emph{equational program}~\ep\ attributing a \emph{meaning} in $\int^n\rightharpoonup\int$ to some symbols of arity~$n$ 
(see Section~\ref{sec:side_cond}), and a \emph{valuation}~$`r$ mapping index variables to~\int, the \emph{semantics}~\itp{\I} 
of an index~\I\ is either a natural number or undefined. Let us describe how we interpret the last two 
constructions, namely \emph{bounded sums} and \emph{forest cardinalities}.

Bounded sums have the usual meaning: $\sum_{a<\I}\J$ is simply the sum of all possible values of $\J$ with $a$ taking the values from~\zero\ up to~$\I$, excluded. 
Describing the meaning of forest cardinalities, on the other hand, requires some effort.
Informally, the index $\fc[a]{\I}{\J}{\K}$ counts the number of nodes in a forest composed of $\J$ trees described using $\K$.
Each node in the forest is (uniquely) identified by a natural number, starting form~\I\ and visiting the tree in pre-order.
The index~$\K$ has the role of describing the number of children of each forest node, \eg\ the number of children of the node 
$\zero$ is $\K\isubst{\zero}$. \lv{More formally, the meaning of a forest cardinality is defined by   the following two equations:
\begin{align*}
  \fc[a]{\I}{0}{\K}&=0 \\
  \fc[a]{\I}{\J+1}{\K}&=
  \left(\fc[a]{\I}{\J}{\K}\right)+1+
  \left(\fc[a]{\I+1+\fc[a]{\I}{\J}{\K}}
       {\K\isubst{\I+\fc[a]{\I}{\J}{\K}}}{\K}\right)
\end{align*}
The first equation says that a forest of $0$ trees contains no nodes.
The second one tells us that a forest of $\J+1$ trees contains:
\begin{varitemize} 
\item
  The nodes in the first $\J$ trees;
\item
  plus the nodes in the last tree, which are just one plus the nodes   in the immediate 
  subtrees of the root, considered themselves as a forest.
\end{varitemize}
}
Consider the following forest comprising two trees:
$$
\scalebox{0.8}{
  \xymatrix{
    & &  0          &   \\
    & & 1\ar@{-}[u] &   \\
    \empty&2\ar@{-}[ur] &5\ar@{-}[u] & 6\ar@{-}[ul]\\
    3\ar@{-}[ur]& &  4\ar@{-}[ul]&7\ar@{-}[u]
    \save "4,1"."3,4"*[F.]\frm{--}\restore 
  }
  \xymatrix{
    & & 8 &   \\
    & 9\ar@{-}[ur] & & 11\ar@{-}[ul]\\
    & 10\ar@{-}[u] & & 12\ar@{-}[u] \\
  }}
$$
and consider an index~\K\ with a free index variable~$a$ such that 
$\K\isubst{1}=3$;
$\K\isubst{\inb}=2$ for $n\in\{2,8\}$;
$\K\isubst{\inb}=1$ when
$n\in\{0,6,9,11\}$;
and
$\K\isubst{\inb}=0$ when
$n\in\{3,4,7,10,12\}$. That is, $\K$ describes the number of children of each node. Then 
$\fc[a]{\zero}{2}{\K}=13$ since it takes into account the entire forest;
$\fc[a]{\zero}{1}{\K}=8$ since it takes into account only the leftmost tree;
$\fc[a]{8}{1}{\K}=5$ since it takes into account only the second tree of the forest;
finally, 
$\fc[a]{2}{3}{\K}=6$ since it takes into account only the three trees (as a forest) 
within the dashed rectangle.

One may wonder what is the role of forest cardinalities in the type system.
Actually, they play a crucial role in the treatment of recursion, where the 
unfolding of recursive calls produces a tree-like structure whose size is just the 
number of times the (recursively defined) function will be used \emph{globally}. 

Notice that $\itp{\I}$ is undefined whenever
the equality between $\I$ and any natural number cannot be derived from the underlying
equational program. In particular, a forest cardinality may be undefined even if all its subterms
are defined: as an example $\I=\fc[a]{0}{1}{1}$ has no value, because the 
corresponding tree consists of an infinite descending chain and its cardinality
is infinite. By the way $\I$ is the index term describing the structure of the recursive 
calls induced by the program $\fix{\t}$.

\subsubsection{Semantic Judgements}
\label{sec:sem-judg}
A \emph{constraint} is an inequality on indexes. A constraint $\I\leq\J$ is \emph{valid} 
for~$`r$ and~\ep\ when both $\itp{\I}$ and $\itp{\J}$ are defined, and $\itp{\I}\leq\itp{\J}$.
As usual, we can derive a notion of equality and strict inequality from~$\leq$. A 
\emph{semantic judgement} is of the form
\begin{displaymath}
  \ijudg{\fiv}{\ictx}{\I\leq\J}~,
\end{displaymath}
where~\ictx\ is a set of constraints and~\fiv\ is the set of free index variables in~\ictx, \I\ and~\J.
These semantic judgements are used as axioms in the typing derivations of~\dlpcf, and the set of 
constraints~\ictx, called the \emph{index context}, contains \emph{mainly} some indications of 
bounds for the free index variables (such as $a<\K$). Such a judgement is valid when, 
for every valuation $`r:\fiv→\int$, if all constraints in~\ictx\ are valid for~\ep\ and~$`r$ then so is $\I\leq\J$.

\subsection{Types}
\label{sec:types}
Remember that \dlpcf\ is aimed at controlling the complexity of programs.
The time complexity \textit{of the evaluation} is thus analysed statically, while typing the term at hand.
The grammar for types distinguishes the subclass of \emph{linear types}, which correspond to 
\textit{non-duplicable} terms (see Section~\ref{sec:infor-lin}), and the one of \emph{modal types}, for \emph{duplicable} terms.
In \dlpcfv, they are defined as follows:
\begin{displaymath}
  \begin{array}{rcl@{\qquad}l}
    \A, \B &::= & `s\toLL`t; &\text{linear types}\\
    `s,`t &::= & \mtyp{\I}{\A} \midd \Nat{\I}{\J}. &\text{modal types}
  \end{array}
\end{displaymath}
Indeed, \cbv\ evaluation only duplicates values. If such a value has an arrow type then it is a 
function (either an abstraction or a fixpoint) that can potentially increase the complexity of the whole program if we duplicate it.
Hence we may need a bound on the number of times we instantiate it in order to control the complexity.
This bound, call the \emph{potential} of the value, is represented by~\I\ in the type $\arrv{\I}{`s}{`t}$ (also written $`s\marr{a<\I}`t$).
As explained in Section~\ref{sec:informal}, \Nat{\I}{\J} is the type of programs evaluating to a natural number in the closed
interval $[\I,\J]$. The potential for natural numbers is not specified, as they can be freely duplicated along~\cbv\ evaluation.

\subsubsection{Summing types}
\label{sec:typ-sum}
Intuitively, the modal type $`s\equiv\mtyp{\I}{\A}$ is assigned to terms that can be copied~\I\ times, the $k^{th}$ 
copy being of type $\A\isubst{\inb[k]\mnu\one}$. For those readers who are familiar with Linear Logic, 
$`s$ can be thought of as representing the type $\A\isubst{\zero}\otimes\cdots\otimes\A\isubst{\I-\one}$.

In the typing rules we are going to define, modal types need to be manipulated in an algebraic way.
For this reason, two operations on modal types are required.
The first one is a binary operation~$\uplus$ on modal types.
Suppose that $`s\equiv\mtyp{\I}{\A\isubst[c]{a}}$
and that $`t\equiv\mtyp{\J}{\A\isubst[c]{\I+a}}$.
In other words, $`s$ consists of the first $\I$ instances of $\A$, \ie\ $\A\isubst[c]{\zero}\otimes\cdots\otimes\A\isubst[c]{\I-\one}$ 
while $`t$ consists of the next $\J$ instances of $\A$, \ie\ $\A\isubst[c]{\I+\zero}\otimes\cdots\otimes\A\isubst[c]{\I+\J-\one}$.
Their \emph{sum} $`s\uplus`t$ is naturally defined as a modal type consisting of the first $\I+\J$ instances of $\A$, \ie\ $\mtyp[c]{\I+\J}{\A}$.
Furthermore, $\Nat{\I}{\J}\uplus\Nat{\I}{\J}$ is just $\Nat{\I}{\J}$.
A bounded sum operator on modal types can be defined by generalising the idea above:
suppose that 
$$
`s=\mtyp[b]{\J}{\A\left\{b+\sum_{d<a}\J\isubst{d}/c\right\}}.
$$
Then its \emph{bounded sum} $\sum_{a<\I}`s$ is just $\mtyp[c]{\sum_{a<\I}\J}{\A}$.
Finally, $\sum_{a<\I}\Nat{\J}{\K}=\Nat{\J}{\K}$, provided $a$ does not occur free in $\J$ nor in~\K.

\subsubsection{Subtyping}
\label{sec:subtyp}

\begin{figure}                                            %
  \centering
  \begin{displaymath}
    \frac{
      \begin{array}{c}
        \ijudg{\fiv}{\ictx}{\K\leq\I} \\
        \ijudg{\fiv}{\ictx}{\J\leq\H} \\
      \end{array}
    }{\sjudg{\fiv}{\ictx}{\Nat{\I}{\J}`<\Nat{\K}{\H}}}
    \qquad
    \frac{
      \begin{array}{c}
        \sjudg{\fiv}{\ictx}{`u`<`s}\\
        \sjudg{\fiv}{\ictx}{`t`<`v}
      \end{array}
    }{\sjudg{\fiv}{\ictx}{`s\toLL`t`<`u\toLL`v}}
    \end{displaymath}
    \begin{displaymath}
      \frac{
        \begin{array}{r@{\,}l}
          \sjudg{(a,\fiv)}{(a<\J,\ictx)&}{\A`<\B}\\
          \ijudg{\fiv}{\ictx&}{\J\leq\I}
        \end{array}
      }{\sjudg{\fiv}{\ictx}{\mtyp{\I}{\A} `< \mtyp{\J}{\B}}}
  \end{displaymath}
  \caption{Subtyping derivation rules of \dlpcfv.}
  \label{fig:tsub}
\end{figure}                                              %

Central to \dlpcf\ is the notion of subtyping.
An inequality relation $`<$ between (linear or modal) types can be defined using the formal system in Fig.~\ref{fig:tsub}. 
This relation corresponds to lifting index inequalities to the type level.
As defined here,~$`<$ is a pre-order (\ie\ a reflexive and transitive relation), which allows to cope with approximations in the typed analysis of programs.
However, in the type inference algorithm we will present in next section only the symmetric closure $\equiv$ of $`<$, called \emph{type equivalence} will be used.
This ensures that the type produced by the algorithm is \emph{precise}.

\subsubsection{Typing}
\label{sec:typing}

\begin{figure*}                                            %
  \begin{displaymath} 
    \frac{}{\judg{\fiv}{\ictx}{`G,x:`s}{\zero}{x}{`s}}(\textit{Ax})
    \qquad
    \frac{\judg{\fiv}{\ictx}{`G}{\I}{t}{`s}
      \qquad
      \sjudg{\fiv}{\ictx}{`D`<`G}
      \qquad
      \sjudg{\fiv}{\ictx}{`s`<`t}
      \qquad
      \ijudg{\fiv}{\ictx}{\I\leq\J}
    }{\judg{\fiv}{\ictx}{`D}{\J}{t}{`t}}(\textit{Subs})
  \end{displaymath}
  \begin{displaymath}
    \frac{
      \judg{(a,\fiv)}{(a<\I,\ictx)}{`G,x:`s}{\K}{\t}{`t}
    }{
      \judg{\fiv}{\ictx}{\sum_{a<\I}`G}{\I+\sum_{a<\I}\K}{`lx.\t}{\mtyp{\I}{`s\toLL`t}}
    }(\toLL)
      \lvsv{\quad}{\qquad}
      \frac{
        \judg{\fiv}{\ictx}{`G}{\K}{\t}{\mtyp{\one}{`s\toLL`t}}
        \qquad 
        \judg{\fiv}{\ictx}{`D}{\H}{\u}{`s\isubst{\zero}}
      }{
        \judg{\fiv}{\ictx}{`G\uplus`D}{\K+\H}{\t\u}{`t\isubst{\zero}}
      }(\textit{App})
  \end{displaymath}   
  \begin{displaymath}
    \quad
     \frac{
       \judg{\fiv}{\ictx}{`G}{\M}{\t}{\Nat{\J}{\K}}
       \qquad 
       \judg{\fiv}{(\J\leq\zero,\ictx)}{`D}{\N}{\u}{`t}
       \qquad
       \judg{\fiv}{(\K\geq\one,\ictx)}{`D}{\N}{\s}{`t}
     }{
       \judg{\fiv}{\ictx}{`G\uplus`D}{\M+\N}{\ifz{\t}{\u}{\s}}{`t}
     }(\textit{If})
  \end{displaymath}
  \begin{displaymath}
    \frac{}{\judg{\fiv}{\ictx}{`G}{\zero}{\nb}{\Nat{\inb}{\inb}}}(n)
    \qquad
    \frac{
      \judg{\fiv}{\ictx}{`G}{\M}{\t}{\Nat{\I}{\J}}
    }{
      \judg{\fiv}{\ictx}{`G}{\M}{\suc(\t)}{\Nat{\I+\one}{\J+\one}}
    }(s)
      \qquad
      \frac{
        \judg{\fiv}{\ictx}{`G}{\M}{\t}{\Nat{\I}{\J}}
      }{
        \judg{\fiv}{\ictx}{`G}{\M}{\pred(\t)}{\Nat{\I\mnu\one}{\J\mnu\one}}
      }(p)
  \end{displaymath}
\lvsv{
  \begin{displaymath}
    \frac{
      \begin{array}{r@{\,}l}
        \judg{(b,\fiv)}{(b<\H,\ictx)}{
          `G,x:\mtyp{\I}{\A}&}{\J}{\t}{\mtyp{1}{\B}}
        \\
        \sjudg{(a,b,\fiv)}{(a<\I,b<\H,\ictx)&}{
          \B\isubst{\zero}\isubst[b]{\fc{b+1}{a}{\I}+b+1}`<\A}
      \end{array}
    }{\judg{\fiv}{\ictx}{\sum_{b<\H}`G}{
        \H+\sum_{b<\H}\J}{\fix{\t}}{\mtyp{\K}{
          \B\isubst{\zero}\isubst[b]{\fc{0}{a}{\I}}}}}
    (\textit{Fix})
  \end{displaymath}
 }{
   \begin{displaymath}
     \frac{
      \judg{(b,\fiv)}{(b<\H,\ictx)}{
        `G,x:\mtyp{\I}{\A}}{\J}{\t}{\mtyp{1}{\B}}
      \qquad
      \sjudg{(a,b,\fiv)}{(a<\I,b<\H,\ictx)}{
        \B\isubst{\zero}\isubst[b]{\fc{b+1}{a}{\I}+b+1}`<\A}
    }{\judg{\fiv}{\ictx}{\sum_{b<\H}`G}{
        \H+\sum_{b<\H}\J}{\fix{\t}}{\mtyp{\K}{
          \B\isubst{\zero}\isubst[b]{\fc{0}{a}{\I}}}}}
    (\textit{Fix})
    \end{displaymath}
}
 \vspace{-10pt}

    \hfill{\small (where $\H=\fc{0}{\K}{\I}$)}
  \caption{Typing rules of \dlpcfv.}
  \label{fig:typ}
\end{figure*}                                              %

A typing judgement is of the form
\begin{displaymath}
  \judg{\fiv}{\ictx}{`G}{\K}{\t}{`t},
\end{displaymath}
where~\K\ is the \emph{weight} of~\t, that is (informally) the maximal number of 
substitutions involved in the \cbv\ evaluation of~\t\ (including the potential substitutions by~\t\ itself in its evaluation context).
The index context~\ictx\ is as in a semantic judgement (see Section~\ref{sec:sem-judg}), and~$`G$ is a (term) context assigning a 
modal type to (at least) each free variable of~\t. Both sums and bounded sums are naturally extended from modal types to contexts 
(with, for instance, $\{x:`s;y:`t\}\uplus\{x:`u,z:`v\}=\{x:`s\uplus`u;y:`t;z:`v\}$).
There might be free index variables in~$\ictx,`G,`t$ and~\K, all of them from~\fiv.
Typing judgements can be derived from the rules of Figure~\ref{fig:typ}.

Observe that, in the typing rule for the abstraction~($\toLL$), \I\ represents the number of times the value~$`lx.\t$ can be copied.
Its weight (that is, the number of substitutions involving $`lx.\t$ or one of its subterms) is then~\I\ plus, for each of 
these copies, the weight of~\t. In the typing rule~(\textit{App}), on the other hand, \t\ is used once as a function, without been copied.
Its potential needs to be at least~\one. The typing rule for the fixpoint is arguably the most complicated one.
As a first approximation, assume that only one copy of~\fix{t} will be used (that is, $\K\equiv\one$ $a$ does not occur free in $B$).
To compute the weight of \fix{\t}, we need to know the number of times~\t\ will be copied during the evaluation, 
that is the number of nodes in the tree of its recursive calls. This tree is described by the index~\I\ (as explained 
in Section~\ref{sec:index}), since each occurrence of~$x$ in~\t\ stands for a recursive call. It has~$\H=\fc{\zero}{\one}{\I}$ nodes.
At each node~$b$ of this tree, there is a copy of~\t\ in which the $a^{th}$ occurrence of~$x$ will be replaced by the $a^{th}$ 
son of~$b$, \ie\ by $b+\one+\fc{b+\one}{a}{\I}$. The types thus have to correspond, which is what the second premise of 
this rule prescribes. Now if \fix{\t} is in fact aimed at being copied $\K\geq\zero$ times, then all the copies of~\t\ are 
represented in a forest of~\K\ trees described by~\I. 

For the sake of simplicity, we present here the type system with an explicit subsumption rule.
The latter allows to relax any bound in the types (and the weight), thereby loosing some precision in the information 
provided by the typing judgement. However, we could alternatively replace this rule by relaxing the premises of all 
the other ones (which corresponds to the presentation of the type system given in~\cite{DLP12}, or in~\cite{DLG11} for \dlpcfn).
Restricting subtyping to type equivalence amounts to considering types up to index equality in 
the type system of Figure~\ref{fig:typ} without the rule (\textit{Subs}) --- this is what we do in the type inference algorithm 
in Section~\ref{sec:typeinf}. In this case we say that the typing judgements are \emph{precise}:
\begin{definition}
  \label{def:precise}
  A derivable judgement~\judg{\fiv}{\!\ictx}{`G\!}{\I}{\t}{`s} is \emph{precise} if
  \begin{displaymath}
    \judg{\fiv}{\ictx}{`D}{\J}{\t}{`t} \text{ is derivable }
    \quad\implies\quad
    \left\{
      \begin{array}[c]{r@{\ }l@{\,}l}
        \sjudg{\fiv}{\ictx&}{`D&`<`G} \\
        \sjudg{\fiv}{\ictx&}{`s&`<`t} \\
        \ijudg{\fiv}{\ictx&}{\I&\leq\J} 
      \end{array}
    \right.
  \end{displaymath}
\end{definition}

\subsubsection{Call-by-value vs. Call-by-name}
\label{sec:typ-cbn-cbs}

In \dlpcfn, the syntax of terms and of indexes is the same as in \dlpcfv, but the language of types differs:
\begin{displaymath}
  \begin{array}{rcl@{\qquad}l}
    \A, \B &::= & `s\toLL\A \midd \Nat{\I}{\J}; &\text{linear types}\\
    `s,`t &::= & \mtyp{\I}{\A}. &\text{modal types}
  \end{array}
\end{displaymath}
Modal types still represent duplicable terms, except that now not only values but any argument to functions
can be duplicated. So modal types only occur in negative position in arrow types.
In the same way, one can find them in the \textit{context} of any typing judgement,
\begin{displaymath}
  \judg{\fiv}{\ictx}{(x_1:`s_1,\dots,x_n:`s_n)}{\K}{\t}{A}.
\end{displaymath}
When a term is typed, it is \textit{a priori} not duplicable, and its type is linear.
It is turned into a duplicable term when it holds the position of the argument in an application.
As a consequence, the typing rule (\textit{App}) becomes the most ``expansive'' one (for the weight) 
in \dlpcfn: the whole context used to type the argument has to be duplicated, whereas in \dlpcfv\ this duplication 
of context is ``anticipated'' in the typing rules for values.

The readers who are familiar with linear logic, could have noted that, if we replace modal types by banged types, 
and we remove all annotations with indexes, then \dlpcfn\ corresponds to the target fragment of the \cbn\ translation 
from simply-typed $`l$-calculus to \LL, and \dlpcfv\ to the target of the \cbv\ translation~\cite{MaraistOTW95}.

In \dlpcfn, the weight~\K\ of a typing judgement represents the maximal number of substitutions that may occur in the 
\cbn\ evaluation of~\t. We do not detail the typing rules of \dlpcfn\ here (they can be found in~\cite{DLG11}).
However an important remark is that in \dlpcfn, just like in \dlpcfv, some semantic judgements can be found 
in the axioms of a typing derivation, and that every typing rule is reversible (except subsumption).
The type inference algorithm for \dlpcfv\ that we present in Section~\ref{sec:typeinf} can be easily adapted to \dlpcfn.

\subsection{Abstract Machines}
\label{sec:am}
The evaluation of \PCF\ terms can be simulated through an extension~\kam\ of Krivine's abstract machine~\cite{Krivine07} 
(for \cbn\ evaluation) or through an extension~\cek\ of the Felleisen and Friedman's \CEK\ machine~\cite{FelleisenF87} (for 
\cbv\ evaluation).

Both these machines have states in the form of \emph{processes}, that are pairs of a \emph{closure} (\ie\, a term with 
an environment defining its free variables) and a \emph{stack}, representing the evaluation context. In the \kam, 
these objects are given by the following grammar:
\begin{displaymath}
  \begin{array}{l@{\qquad}rcl}
    \text{Closures:} & \cloc &:=& \clo{\t};\\
    \text{Environment:} & \env &:=&
    \{\mkenv[x_1]{\cloc_1};\cdots;\mkenv[x_k]{\cloc_k}\};\\
    \text{Stacks:} & \stk &:=& \estk\midd\stkarg{\clo{\t}}{\stk}
    \midd\stks{\stk}\midd\stkp{\stk}\\
    &&& \quad\midd\stkif{\t}{\u}{\stk};\\
    \text{Processes:} & P &:=& \proc{\cloc}{\stk}.
  \end{array} 
\end{displaymath}
When the environment is empty, we may use the notation~\cloee{\t} instead of~\clo[∅]{\t} for closures.
The evaluation rules of the \kam\ are given in Figure~\ref{fig:kam}. The fourth evaluation rule is said
to be an \emph{instantiation step}: the value of a variable $x$ is replaced by the term $x$ maps to in the
underlying environment $\xi$. 

The \cek\ machine, which performs \cbv\ evaluation, is slightly more complex:
within closures, the \emph{value closures} are those whose first component is
a value:
\begin{math}
  \clov~:=~\clo{\v}
\end{math}
\
(remember that a \emph{value}~\v\ is of the form~\nb, $`lx.\t$ or~\fix{\t}).
Moreover, environments assign only value closures to variables:
\begin{displaymath}
  \env~:=\quad
  \{\mkenv[x_1]{\clov_1};\cdots;\mkenv[x_k]{\clov_k}\}~.
\end{displaymath}
The grammar for stacks is the same, with one additional construction
(\stkfun{(\clov)}{\stk}) that is used to encapsulate a function (lambda abstraction or fixpoint) while its argument is computed.
Indeed, the latter cannot be substituted for a variable if it is not a value.
The evaluation rules for processes are the same than in Figure~\ref{fig:kam}, except 
that the second and the third ones are replaced by the following:
\begin{displaymath}
  \begin{array}{c@{~\star~}c@{~~\tocek~~}c@{~\star~}c}
    \clov & \stkarg{(\cloc)}{\stk} & 
    \cloc & \stkfun{(\clov)}{\stk} 
    \\
    \clov & \stkfun{\clo{`lx.\t}}{\stk} &
    \clo[\mkenv{\clov}\cdot\env]{\t} & \stk \\
    \lvsv{
      \clov & \stkfun{\clo{\fix{\t}}}{\stk} & 
      \clo[\mkenv{\clo{\fix{\t}}}\cdot\env]{\t}
      & \stkarg{(\clov)}{\stk} \\
    }{
      \clov & \stkfun{\clo{\fix{\t}}}{\stk} & 
      \multicolumn{2}{c}{}\\
      \multicolumn{4}{r}{\proc{
          \clo[\mkenv{\clo{\fix{\t}}}\cdot\env]{\t}
        }{
          \stkarg{(\clov)}{\stk}}} \\
    }
      
  \end{array}
\end{displaymath}
An example of the evaluation of a term in the \cek\ machine has been shown in Figure~\ref{fig:exple-cek}.

We say that a term~\t\ \emph{evaluates} to~\u\ in an abstract machine when $\proc{\cloee{\t}}{\estk}\tocek\proc{\clo{\u}}{\estk}$.
Observe that if~\t\ is a closed term then~\u\ is necessarily a value. We write $\t\nconv\u$ whenever the \kam\ evaluates~\t\ to~\u\ in exactly~$n$ 
steps, and $\t\vconv\u$ when the same holds for the \cek\ (we may also omit the exponent~$n$ when the 
number of steps is not relevant).

\begin{figure*}                                            %
  \centering
  \begin{math}
    \begin{array}{c@{\quad\star\quad}c@{\qquad\tocek\qquad}c@{\quad\star\quad}c}
      \clo{\t\u} & \stk & \clo{\t} & \stkarg{\clo{\u}}{\stk} \\
      \clo{`lx.\t} & \stkarg{(\cloc)}{\stk} & 
      \clo[(\mkenv{\cloc})`.\env]{\t} & \stk 
      \\
      \clo{\fix{\t}} & \stk &
      \clo[(\mkenv{\clo{\fix{\t}}})`.\env]{\t} & \stk 
      \\
      \clo{x} & \stk & \env(x) & \stk \\
      \clo{\ifz{\t}{\u}{\s}} & \stk & \clo{\t} & \stkif{\u}{\s}{\stk} \\
      \clo[\env']{\nb[0]} & \stkif{\t}{\u}{\stk} & \clo{\t} &\stk\\
      \clo[\env']{\nb[n\!+\!1]} & \stkif{\t}{\u}{\stk} & \clo{\u} & \stk 
      \\
      \clo{\suc(t)} & \stk & \clo{\t} & \stks{\stk} \\
      \clo{\pred(t)} & \stk & \clo{\t} & \stkp{\stk} \\
      \clo{\nb} & \stks{\stk} & \cloee{\nb[n\!+\!1]} & \stk \\
      \clo{\nb} & \stkp{\stk} & \cloee{\nb[n\!-\!1]} & \stk \\
    \end{array}
  \end{math}
  \caption{\kam\ evaluation rules.}
  \label{fig:kam}
\end{figure*}                                              %

\paragraph{Abstract Machines and Weight}

The \emph{weight} of a typable term was informally presented as the number of instantiation steps in its evaluation.
Abstract machines enable a more precise formulation of this idea:
\begin{fact}
  \begin{varenumerate}
  \item If $\t\nconv[]\u$, and \ejudg{}{\I}{\t}{\A} is derivable in \dlpcfn, then~\itp[]{\I} is an upper bound for the number of instantiation steps 
    in the evaluation of~\t\ by the \kam.
  \item If $\t\vconv[]\u$, and \ejudg{}{\I}{\t}{\mtyp{\one}{\A}} is derivable in \dlpcfv, then~\itp[]{\I} is an upper bound for the instantiation steps
    %
      in the evaluation of~\t\ in the \cek.
\end{varenumerate}
\end{fact}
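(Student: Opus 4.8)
The plan is to prove a \emph{weighted subject-reduction} theorem for the relevant abstract machine: one extends the typing discipline from terms to whole machine configurations, assigns each configuration a \emph{weight}, and shows that this weight is non-increasing along machine steps and strictly decreases exactly at instantiation steps. I describe the call-by-value statement (part~2) in detail; part~1 is entirely analogous, with the \kam\ and \dlpcfn\ in place of the \cek\ and \dlpcfv.

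First I would lift the rules of Figure~\ref{fig:typ} to the syntactic categories of the \cek---closures, environments, stacks and processes. A closure $\clo{\t}$ is typed by deriving a judgement for $\t$ in a term context matching the types $\env$ assigns to the variables $\t$ uses; an environment is typed pointwise, each value closure receiving a modal type that records how many of its copies are still to be instantiated; a stack $\stk$ is typed relative to the type of the term destined for its hole, and carries the accumulated weight of the closures it stores (pending arguments and encapsulated functions). A process $\proc{\cloc}{\stk}$ is well typed when the type delivered by $\cloc$ is the one expected by $\stk$, and its weight is the sum of the weights contributed by the active closure, the environment and the stack. The whole point of the definitions is that this process weight should measure precisely the instantiation steps \emph{still to be performed}. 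With such a notion in place the two endpoints are immediate: for a closed $\t$ with $\ejudg{}{\I}{\t}{\mtyp{\one}{\A}}$ derivable, the initial process $\proc{\cloee{\t}}{\estk}$ has an empty environment and an empty stack, so its weight is exactly $\itp[]{\I}$; and any final process $\proc{\clo{\v}}{\estk}$ has a non-negative weight.

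The crux is the invariant: if $P$ is well typed of weight $w$ and $P\tocek P'$, then $P'$ is well typed of weight $w'\le w$, with $w'\le w-1$ precisely when the step is the variable-lookup (instantiation) rule $\clo{x}\star\stk\tocek\env(x)\star\stk$. This is proved by case analysis on the machine rule, each case using inversion on the typing of $P$ against Figure~\ref{fig:typ} and the algebra of $\uplus$ and $\sum$ on modal types. The \emph{search} rules---unfolding an application, pushing or popping an argument, entering a successor, predecessor or conditional---only move weight between the active closure and the stack and therefore leave $w$ unchanged; the $\beta$-rule that slides a value closure into the environment merely transfers its stored potential; and the lookup rule produces one active copy of $\env(x)$ from a value whose modal type had reserved one unit of potential for it, so $w$ drops by exactly one.

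The main obstacle is the fixpoint rule $\clov\star\stkfun{\clo{\fix{\t}}}{\stk}\tocek\clo[\mkenv{\clo{\fix{\t}}}\cdot\env]{\t}\star\stkarg{(\clov)}{\stk}$, whose weight bookkeeping is governed by the forest cardinalities of Section~\ref{sec:index}. A fresh copy of the body is activated and the fixpoint itself is placed in the environment for its future recursive calls, so the forest of pending calls---of size $\H=\fc{0}{\K}{\I}$ in rule~(\textit{Fix})---must be split into its root together with the forest formed by the immediate subtrees. Showing that the weight $\H+\sum_{b<\H}\J$ is preserved across this step is exactly an instance of the two defining equations of $\fc[a]{\cdot}{\cdot}{\cdot}$, combined with the matching constraint $\B\isubst{\zero}\isubst[b]{\fc{b+1}{a}{\I}+b+1}`<\A$ of rule~(\textit{Fix}) that links one unfolding to the next. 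Once this identity is established the remaining cases are routine, and the theorem follows: since weights are non-negative integers that only instantiation steps decrease, and each such step removes at least one, the number of instantiation steps performed in evaluating $\t$ cannot exceed the initial weight $\itp[]{\I}$.
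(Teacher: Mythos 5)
Your proposal takes essentially the same route as the paper, whose proof of this Fact consists exactly in extending the notions of typing judgement and weight from terms to stacks and processes (with details deferred to~\cite{DLG11,DLP12}) and showing that the process weight never increases along machine transitions and strictly decreases at instantiation steps, so that the initial weight $\itp[]{\I}$ bounds their number. The only imprecision is in your fixpoint case, where the weight is not preserved but drops by one as well --- the root of the forest of recursive calls is consumed, by the defining equations of $\fc[a]{\cdot}{\cdot}{\cdot}$ --- which is harmless, since the counting argument only needs non-increase everywhere plus strict decrease at instantiations.
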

This can be shown by extending the notion of weight and of typing judgement to stacks and processes~\cite{DLG11,DLP12}, and is the main
ingredients for proving Intensional Soundness (see Section~\ref{sec:prop-dlpcf}).

\subsection{Key Properties}
\label{sec:prop-dlpcf}

In this section we briefly recall the main properties of~\dlpcf, arguing for its relevance as a methodology for complexity analysis.
We give the results for \dlpcfv, but they also hold for \dlpcfn\ (all proofs can be found in \cite{DLG11,DLP12}).

The subject reduction property guaranties as usual that typing is correct with respect to term evaluation, but specifies also that the 
weight of a term cannot increase along reduction:
\begin{proposition}[Subject Reduction]
  \label{prop:sr}
  For any \PCF-terms \t,\u, if $\judg{\fiv}{\ictx}{∅}{\I}{\t}{`t}$ is derivable in \dlpcfv, and 
  if $\t→\u$ in \cbv, then \judg{\fiv}{\ictx}{∅}{\J}{\u}{`t} is also derivable for some \J\ such that \ijudg{\fiv}{\ictx}{\J\leq\I}~.
\end{proposition}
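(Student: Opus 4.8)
The plan is to proceed by induction on the derivation of the one-step reduction $\t\to\u$. The base cases are the \cbv\ redexes --- $\beta$-reduction $(`lx.\s)\v\to\s\tsubst{\v}$, fixpoint unfolding $(\fix{\s})\v\to(\s\tsubst{\fix{\s}})\v$, the successor and predecessor rules on numerals, and the two branches of the conditional --- while the inductive cases are the congruence rules closing $\to$ under \cbv\ evaluation contexts. Throughout I would use a \emph{generation} (inversion) lemma, available precisely because every typing rule other than (\textit{Subs}) is reversible: a derivation of $\judg{\fiv}{\ictx}{\emptyset}{\I}{\t}{`t}$ decomposes, modulo one application of subsumption at the root, into uniquely determined premises for the immediate subterms of~\t. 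Whenever the root rule is (\textit{Subs}) I would peel it off, run the argument on the underlying precise judgement, and re-apply (\textit{Subs}) at the end; this is exactly what absorbs the slack into the inequality $\ijudg{\fiv}{\ictx}{\J\leq\I}$ while preserving the type~$`t$. The congruence cases are then routine: apply the induction hypothesis to the reduced subterm, recombine with the same rule, and propagate the weight inequality using that $\uplus$, the bounded sums $\sum_{a<\cdot}$, and $+$ are monotone with respect to the validity order~$\leq$.

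The heart of the matter is a \textbf{Substitution Lemma} underlying the $\beta$ and fixpoint cases. Since \cbv\ substitutes only values, and a value carrying a modal type $`s\teq\mtyp{\J}{\A}$ represents its $\J$ indexed copies $\A\isubst{\zero},\dots,\A\isubst{\J\mnu\one}$, the lemma transports a single modal typing of a value through a substitution: if $\judg{\fiv}{\ictx}{`G,x:`s}{\K}{\s}{`t}$ and $\judg{\fiv}{\ictx}{`D}{\H}{\v}{`s}$ are derivable with~\v\ a value, then $\judg{\fiv}{\ictx}{`G\uplus`D}{\K+\H}{\s\tsubst{\v}}{`t}$ is derivable. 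I would prove this by induction on the first derivation. The axiom case $x:`s$ simply grafts the derivation of~\v; the genuinely delicate cases are those whose rule splits the context additively (\textit{App}, \textit{If}) or along a bounded sum ($\toLL$, \textit{Fix}), where the modal typing of~\v\ must be decomposed accordingly. This requires an auxiliary \emph{splitting} property: a value typed at $`s_1\uplus`s_2$ (resp.\ $\sum_{a<\I}`s$) can be retyped as independent derivations at $`s_1$ and $`s_2$ (resp.\ at the summands). For $\beta$-reduction the lemma is applied after inverting (\textit{App}) and $(\toLL)$ with potential~\one: the body~\s\ is then typed with weight $\K\isubst{\zero}$ and~\v\ with weight~\H, so the reduct has weight $\K\isubst{\zero}+\H$, one unit below the original $(\one+\K\isubst{\zero})+\H$.

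The fixpoint case is the main obstacle. After inverting (\textit{App}) and (\textit{Fix}) (the latter with $\K\teq\one$, so the recursion forest is a single tree and $\H=\fc{\zero}{\one}{\I}$), I am left to type $\s\tsubst{\fix{\s}}$, where the bound variable~$x$ demands the modal type $\mtyp{\I}{\A}$ recording the~\I\ immediate recursive calls. The idea is to first build, directly from the original (\textit{Fix}) derivation, a typing of the value~$\fix{\s}$ \emph{at} $\mtyp{\I}{\A}$ --- its $a$-th copy being a fixpoint whose recursion forest is the $a$-th immediate subtree of the original tree --- and then apply the Substitution Lemma. The engine of this construction is the second defining equation of the forest cardinality, which for a single tree gives the split $\fc{\zero}{\one}{\I}\teq\one+\fc{\one}{\I\isubst[b]{\zero}}{\I}$: the root (the current call) plus the forest of $\I\isubst[b]{\zero}$ subtrees rooted at node~\one. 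Reindexing so that the $a$-th subtree starts at $\fc{\one}{a}{\I}+\one$ makes the type produced for the $a$-th occurrence exactly $\B\isubst{\zero}\isubst[b]{\fc{b+\one}{a}{\I}+b+\one}$ at $b\teq\zero$, which the second premise of (\textit{Fix}) declares to be a subtype of~\A; discharging it through (\textit{Subs}) supplies the $a$-th copy at the demanded type. The weights telescope along the same decomposition, leaving a net decrease of the single unit (\textit{App}) charges for instantiating the fixpoint, so in particular the weight does not increase. The delicate part throughout is the arithmetic of forest-cardinality indices under reindexing, and I expect that isolating a few identities for~$\bigotriangleup$ (the split above, together with its shift-invariance) is what makes this case --- and hence the whole proposition --- go through.

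The remaining base cases are light. For $\suc(\nb)$ and $\pred(\nb)$ the redex and its contractum receive types equal up to index equality (for instance $\Nat{\inb+\one}{\inb+\one}$ and $\Nat{\inb[n+1]}{\inb[n+1]}$), with the weight unchanged. For the conditional, inverting (\textit{If}) yields a typing of the selected branch under the augmented index context $(\J\leq\zero,\ictx)$ or $(\K\geq\one,\ictx)$; since the scrutinee is a numeral, its type pins the added constraint down as valid relative to~\ictx, so a \emph{constraint-strengthening} lemma --- if a judgement is derivable under a context extended by a constraint that is valid relative to~\ictx, then it is derivable under~\ictx\ --- removes it, dropping the weight from $\M+\N$ to~\N. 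Threading the outermost subsumption through every case then delivers the stated weight~\J\ with $\ijudg{\fiv}{\ictx}{\J\leq\I}$.
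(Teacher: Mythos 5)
The paper does not actually prove Proposition~\ref{prop:sr} itself --- it defers all proofs to~\cite{DLG11,DLP12} --- and your proposal reconstructs essentially the argument found there: induction on the reduction step with inversion of the typing rules modulo a collapsed subsumption, a value-substitution lemma whose delicate cases rest on splitting modal types along $\uplus$ and bounded sums, the forest-cardinality identities (the root split $\fc[b]{\zero}{\one}{\I}\teq\one+\fc[b]{\one}{\I\isubst[b]{\zero}}{\I}$ together with shift invariance) needed to retype $\fix{\s}$ at the modal type $\mtyp{\I}{\A}$ demanded by the unfolding, and constraint strengthening for the conditional branches. Your approach, including the weight accounting that yields a decrease of exactly one unit per $\beta$/fixpoint step and hence the inequality $\ijudg{\fiv}{\ictx}{\J\leq\I}$, matches the standard proof, so there is nothing of substance to correct.
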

As a consequence, the weight does not tell us much about the number of reduction steps bringing a (typable) term to its
normal form. So-called \emph{Intensional Soundness}, on the other hand, allows to deduce some sensible information about 
the time complexity of evaluating a typable \PCF\ program \emph{in an abstract machine} from its \dlpcf\ typing judgement.
\begin{proposition}[Intensional Soundness]
  \label{prop:int-snd}
  For any term~\t, if $\ejudg{}{\K}{\t}{\Nat{\I}{\J}}$ is derivable in \dlpcfv, then~\t\ evaluates to~\nb\ in~$k$ 
  steps in the~\cek, with $\itp[]{\I}\leq n\leq\itp[]{\J}$ and $k\leq|\t|`.(\itp[]{\K}+1)$~.
\end{proposition}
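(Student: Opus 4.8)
The plan is to derive the two conclusions separately from the closed judgement $\ejudg{}{\K}{\t}{\Nat{\I}{\J}}$, using Subject Reduction (Proposition~\ref{prop:sr}) for the \emph{functional} part and the Fact on instantiation steps (together with a combinatorial measure on \cek\ configurations) for the \emph{quantitative} part. A convenient order is to establish termination and the step bound first --- so that ``$\t$ evaluates to some value'' is justified --- and only then read off the value from its type.

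First I would bound the number of \cek\ steps. The machine transitions split into \emph{instantiation steps} (the variable-lookup rule, which turns a duplicable closure into a non-duplicable one) and all other, \emph{administrative}, steps. For the instantiation steps I would appeal to the Fact: the extension of weight and typing to processes from \cite{DLG11,DLP12} makes the weight $w(P)$ of the typed process non-increasing along every transition and strictly decreasing at each instantiation step, so the run starting from $\proc{\cloee{\t}}{\estk}$ --- whose process weight is $\itp[]{\K}$ --- contains at most $\itp[]{\K}$ instantiation steps. For the administrative steps I would introduce a size measure $\mu$ on processes, measuring the total amount of \emph{non-duplicable} syntactic material (the term in the current closure together with the terms carried by the stack frames). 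The two properties I need are: (i) every administrative step strictly decreases $\mu$, since such a step only decomposes the current non-duplicable structure or moves a piece of it onto or off the stack; and (ii) an instantiation step increases $\mu$ by at most $|\t|$, because the term it copies out of the environment is a subterm of the original $\t$ (a standard invariant: every term occurring in a reachable \cek\ configuration is a subterm of $\t$), so it replaces a variable by a term of size $\le|\t|$. Combining (i)--(ii) with the bound $\itp[]{\K}$ on instantiation steps --- conveniently packaged as the single potential $\Phi(P)=|\t|\cdot w(P)+\mu(P)$, designed so that it starts at $|\t|\cdot\itp[]{\K}+|\t|$, stays non-negative, and drops by at least $1$ at every transition --- yields both termination and $k\le|\t|\cdot(\itp[]{\K}+1)$.

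With termination in hand, the functional part is short. Since $\t$ is closed and of base type, its \cek\ value is a numeral $\nb$, and the \cek\ run realises a \cbv\ reduction $\t\to^*\nb$; iterating Subject Reduction (Proposition~\ref{prop:sr}) along this reduction keeps the type $\Nat{\I}{\J}$, so $\ejudg{}{\K'}{\nb}{\Nat{\I}{\J}}$ is derivable. Finally I would analyse this derivation: the only rules applicable to a numeral are the axiom $(n)$, which gives $\Nat{\inb}{\inb}$, and subsumption $(\textit{Subs})$; hence the derivation factors through a subtyping $\Nat{\inb}{\inb}`<\Nat{\I}{\J}$, whose premises are the index inequalities $\I\le\inb$ and $\inb\le\J$, i.e. $\itp[]{\I}\le n\le\itp[]{\J}$, as required.

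The main obstacle is property (i) together with pinning down the exact constant in the step bound. Naively measuring raw syntactic size does not strictly decrease on the purely administrative \cek\ transitions --- for instance the step that switches from evaluating a function to evaluating its argument, or the arithmetic dispatch on $\suc$/$\pred$, leave the total size unchanged --- so $\mu$ must be refined to weight the different kinds of stack frame (and the pending $\mathsf{arg}$/$\mathsf{fun}$ markers) so that each such bookkeeping move is genuinely charged. Tuning these weights so that the resulting potential still increases by \emph{exactly} at most $|\t|$ per instantiation, and hence delivers the stated constant $|\t|\cdot(\itp[]{\K}+1)$ rather than a larger multiple, is the delicate part; the full accounting is carried out in \cite{DLP12}.
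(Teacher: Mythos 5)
Your proposal is correct and follows essentially the same route as the paper, which (deferring full details to~\cite{DLG11,DLP12}) proves intensional soundness precisely by extending weights and typing judgements to \cek\ processes so that the weight bounds the number of instantiation steps --- the Fact you invoke --- and by the same ``non-duplicable fuel'' accounting for administrative steps, with the interval bounds on the resulting numeral following from type preservation. Your potential $\Phi(P)=|\t|\cdot w(P)+\mu(P)$ and the numeral-plus-subsumption analysis are a faithful elaboration of that sketch, and you rightly flag (and defer, exactly as the paper does) the stack-frame weighting needed to obtain the stated constant $|\t|\cdot(\itp[]{\K}+1)$.
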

Intensional Soundness guarantees that the evaluation of any program typable in \dlpcf\ takes (at most)
a number of steps directly proportional to both its syntactic size and its weight. A similar theorem holds 
when $\t$ has a functional type: if, as an example, the type of $\t$ is $\NatU{a}\marr{1}\NatU{\J}$, then
$\K$ is parametric on $a$ and $(|\t|+2)`.(\itp[]{\K}+1)$ is an upper bound on the complexity of evaluating $\t$ 
on any integer $a$.

But is \dlpcf\ powerful enough to type natural complexity bounded programs?
Actually, it is as powerful as \PCF\ itself, since any \PCF\ type derivation can be turned into a \dlpcf\ one 
(for an expressive enough equational program), as formalised by the type inference 
algorithm (Section~\ref{sec:typeinf}). We can make this statement even more precise for terms of base or
first order type, provided two conditions are satisfied:
\begin{varitemize}
\item
  On the one hand, the equational program~\ep\ needs to be   \emph{universal}, meaning that every partial recursive 
  function is representable by some index term. This can be guaranteed, as an example, by the presence of a universal program in~\ep.
\item
  On the other hand, all \emph{true} statements in the form \ijudg{\fiv}{\ictx}{\I\leq\J} must be ``available'' in the type system for completeness to hold.
  In other words, one cannot assume that those judgements are derived in a given (recursively enumerable) formal system, because this would violate G\"odel's Incompleteness Theorem.
  In fact, in \dlpcf\ completeness theorems are \emph{relative} to an oracle for the truth of those assumptions, which is precisely what happens in Floyd-Hoare logics~\cite{Cook78}.
\end{varitemize}
\begin{proposition}[Relative Completeness]
  \label{prop:rel-cpl}
  If\/ \ep\ is universal, then for any \PCF\ term~\t,
  \begin{varenumerate}
  \item
    if $\t\vconv[k]\nb[m]$, then \ejudg{}{\inb[k]}{\t}{\NatU{\inb[m]}} is derivable in \dlpcfv.
  \item 
    if, for any $n`:\int$, there exist $k_n,m_n$ such that $\t\,\nb\vconv[k_n]\nb[m_n]$, then there exist~\I\ and~\J\ such that
    $$\judg{a}{∅}{∅}{\I}{\t}{\arrv[b]{\one}{\NatU{a}}{\NatU{\J}}}$$
    is derivable in \dlpcfv, with \eijudg{\J\isubst{\inb}=\inb[m_n]} and \eijudg{\I\isubst{\inb}\leq\inb[k_n]} for all~$n`:\int$.
  \end{varenumerate}
\end{proposition}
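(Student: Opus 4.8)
The plan is to read the statement as a precise converse of Intensional Soundness (Proposition~\ref{prop:int-snd}). Since the \cek\ is an \emph{effective} evaluator, every quantity that a \dlpcfv\ derivation must track --- the numeral eventually produced, the number of machine steps, how many copies of each value are made, and the branching structure of the tree of recursive calls unfolded by each fixpoint --- is a partial recursive function of the input. Universality of \ep\ is exactly what allows me to internalise these functions as index terms, so that a plain \PCF\ typing skeleton of \t\ can be decorated into a precise \dlpcfv\ derivation. I would therefore establish both points by \emph{exhibiting} such a decoration and checking its semantic side conditions, rather than by any non-constructive argument.

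For point (1), \t\ is closed and $\t\vconv[k]\nb[m]$, so I would argue by a \emph{subject-expansion} lemma running along the $k$ \cek\ transitions backwards. The base case is immediate: the numeral $\nb[m]$ is typed by rule $(n)$ with weight $\zero$ and type $\NatU{\inb[m]}$. The inductive step is one case per \cek\ rule, stating that whenever the contractum together with its machine context is typable, so is the redex, with the weight growing by a controlled amount; summing these increments over the whole run bounds the weight by $\inb[k]$, after which the subsumption rule (\textit{Subs}) inflates it to exactly $\inb[k]$. Here I use that instantiation steps form a subset of all machine steps, so $\inb[k]$ is a sound --- if loose --- weight, in accordance with the Fact relating weight to instantiation steps and with Proposition~\ref{prop:sr}, which already guarantees that weights only \emph{decrease} under forward reduction. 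To keep the bookkeeping compositional I would phrase the lemma on \emph{processes}, giving types to closures, environments and stacks, so that the one delicate case --- the instantiation step feeding a fixpoint --- is discharged by supplying the forest-cardinality index $\fc{\zero}{\one}{\I}$ whose \I\ describes precisely the recursion tree the machine actually traversed.

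For point (2), I would lift this construction \emph{uniformly} in the input, keeping the argument symbolic as the free index variable $a$. By hypothesis $\t\,\nb$ converges for every $n$, and because the \cek\ is effective, the maps $n\mapsto m_n$ and $n\mapsto k_n$ are total recursive; universality then yields a \J\ with $\eijudg{\J\isubst{\inb}=\inb[m_n]}$ representing the output and an \I\ with $\eijudg{\I\isubst{\inb}\leq\inb[k_n]}$ representing (a bound on) the running time. The task is to assemble the single parametric derivation of $\judg{a}{∅}{∅}{\I}{\t}{\arrv[b]{\one}{\NatU{a}}{\NatU{\J}}}$ whose instance at $a:=n$ is the derivation from point (1) for $\t\,\nb$. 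I would run the point-(1) construction verbatim with the numeral argument replaced by the base type $\NatU{a}$: every semantic judgement generated along the way holds parametrically, because it holds at each numerical instantiation and the index language --- through \ep\ --- is rich enough to name the witnessing functions. Restricting subtyping to equality wherever point (1) produced exact intervals then guarantees the type is exactly $\arrv[b]{\one}{\NatU{a}}{\NatU{\J}}$ and not a mere supertype.

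The hard part will be the fixpoint. Constructing, inside rule $(\textit{Fix})$, the index \I\ so that $\fc{\zero}{\one}{\I}$ counts exactly the recursive calls performed, and then \emph{verifying} the semantic premise $\sjudg{(a,b,\fiv)}{(a<\I,b<\H,\ictx)}{\B\isubst{\zero}\isubst[b]{\fc{b+1}{a}{\I}+b+1}`<\A}$, is the technical heart: the size and branching of the call tree is a genuinely recursive function of the input with no closed arithmetic form, so representability via universality is unavoidable and must be threaded through the defining equations of forest cardinalities. A secondary obstacle is matching the duplication counts produced by the $\uplus$ and $\sum$ operations on contexts with the weight \I\ delivered by universality, so that the increments accumulated by subject expansion agree on the nose with \I; this is where I expect the calculation, though routine in principle, to be the most delicate and error-prone.
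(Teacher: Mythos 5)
The paper never spells out a proof of Proposition~\ref{prop:rel-cpl} --- it imports it from~\cite{DLG11,DLP12} --- but your reconstruction follows essentially the same route as the proofs given there: typing is extended to \cek\ processes (closures, environments, stacks) and propagated \emph{backwards} along the machine run by a weighted subject-expansion lemma whose base case is the final numeral, and the parametric statement~(2) is obtained by merging the per-input derivations (which all decorate one and the same \PCF\ skeleton, their annotations being recursive in the input) into a single derivation whose index annotations exist by universality of~\ep, the needed semantic judgements holding because they hold at every numerical instantiation. Your identification of the delicate points --- the fixpoint case, where the forest cardinality must describe the recursion tree actually unfolded by the machine, and the agreement between the weight accumulated by subject expansion and the representable bound~\I\ --- also matches where the technical work lies in the cited proofs, so the proposal is correct and takes essentially the paper's approach.
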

The careful reader should have noticed that there is indeed a gap between the lower bound
provided by completeness and the upper bound provided by soundness: this is indeed the reason
why our complexity analysis is only meaningful in an asymptotic sense. Sometimes, however,
programs with the same asymptotic behavior \emph{can} indeed be distinguished, e.g. when their
size is small relative to the constants in their weight.

In the next section, we will see how to make a concrete use of Relative Completeness.
Indeed, we will describe an algorithm that, given a \PCF\ term, returns a \dlpcf\ judgement 
\ejudg{}{\K}{\t}{`t} for this term, where $\ep$ is equational program that is not \textit{universal}, 
but expressive enough to derive the typing judgement. To cope with the ``relative'' part of the 
result (\ie, the very strong assumption that every \textit{true} semantic judgement must be available), 
the algorithm also return a set of \emph{side conditions} that have to be checked. These side 
conditions are in fact semantic judgements that act as axioms (of instances of the subsumption rule) in the 
typing derivation.


\section{Relative Type Inference}
\label{sec:typeinf}
Given on the one hand soundness and relative completeness of \dlpcf, and on the other undecidability 
of complexity analysis for \PCF\ programs, one may wonder whether looking for a type inference procedure makes sense at all.
As stressed in the Introduction, we will not give a type inference algorithm \emph{per se}, but rather reduce 
type inference to the problem of checking a set of inequalities modulo an equational program (see Figure~\ref{fig:context}).
This is the reason why we can only claim type inference to be algorithmically solvable in a \emph{relative} sense, \ie\  assuming the
existence of an oracle for proof obligations. 

Why is solving relative type inference useful? Suppose you have a program $\t:\NatPCF\toPCF\NatPCF$
and you want to prove that it works in a number of steps bounded by a polynomial
$p:\int\rightarrow\int$ (\eg, $p(x)=4\cdot x+7$). You could of course proceed by building a \dlpcf\ type
derivation for $\t$ by hand, or even reason directly on the complexity of $\t$.
Relative type inference simplifies your life: 
it outputs an equational program $\ep$, a \emph{precise} type derivation for $\t$ whose conclusion is 
$\judg{a}{\emptyset}{\emptyset}{\I}{\t}{\NatU{a}\marr{1}\NatU{\J}}$ and a set $\mathcal{I}$ of 
inequalities on the same signature as the one of $\ep$. Your original problem, then, is reduced to verifying 
$\models^\ep\mathcal{I}\cup\{\I\leq p(a)\}$. This is arguably an easier problem
than the original one: first of all, it has nothing to do with complexity analysis but is rather a problem
about the \emph{value} of arithmetical expressions. Secondly it only deals with first-order expressions.

\subsection{An Informal Account}
From the brief discussion in Section~\ref{sec:informal}, it should be clear that devising a 
compositional type inference procedure for \dlpcf\ is nontrivial: the type one assigns to a subterm
heavily depends on the ways the rest of the program uses the subterm. The solution we
adopt here consists in allowing the algorithm to return \emph{partially unspecified} equational programs:
$\ep$ as produced in output by $\mathcal{T}$ gives meaning to all the symbols in
the output type derivation \emph{except} those occurring in negative position in its conclusion.

To better understand how the type inference algorithm works, let us consider the following term $\t$:
$$
\u\v=(`lx.`ly.x(xy))(`lz.\suc(z)).
$$
The subterm $\u$ can be given type $(\NatPCF\toPCF\NatPCF)\toPCF\NatPCF\toPCF\NatPCF$ in \PCF, while
$\v$ has type $\NatPCF\toPCF\NatPCF$. This means $\t$ as a whole has type $\NatPCF\toPCF\NatPCF$ 
and computes the function $x\mapsto 2\cdot x$. The type inference algorithm proceeds by giving types to $\u$ and
to $\v$ separately, then assembling the two into one. Suppose we start with $\v$. The type inference
algorithm refines $\NatPCF\toPCF\NatPCF$ into $`s=\NatU{\f(a,b)}\marr{b<\h(a)}\NatU{\g(a,b)}$ and the
equational program $\rep_\v$, which gives meaning to $\g$ in terms of $\f$:
\begin{align*}
\g(a,b)&=\f(a,b)+1.
\end{align*}
Observe how both $\f$ and $\h$ are not specified in $\rep_\v$, because they appear in \emph{negative} position in $`s$:
$\f(a,b)$ intuitively corresponds to the argument(s) $\v$ will be applied to, while $\h(a)$ is the number of times
$\v$ will be used. Notice that everything is parametrised on $a$, which is something like a global parameter
that will later be set as the input to $\t$. The function $\u$, on the other hand, is given type
$$
\begin{array}{c}
(\NatU{\p(a,b,c)}\marr{c<\j(a,b)}\NatU{\q(a,b,c)})\\
\marr{b<\k(a)}\\
\NatU{\l(a,b,c)}\marr{c<\m(a,b)}\NatU{\n(a,b,c)}.
\end{array}
$$
The newly introduced function symbols are subject to the following equations:
\begin{align*}
\j(a,b)&=2\cdot \m(a,b);\\
\n(a,b,c)&=\q(a,b,2c);\\
\p(a,b,2c)&=\q(a,b,2c+1);\\
\p(a,b,2c+1)&=\l(a,b,2c).
\end{align*}
Again, notice that some functions are left unspecified, namely $\l$, $\m$, $\q$ and $\k$.
Now, a type for $\u\v$ can be found by just combining the types for $\u$ and $\v$,
somehow following the typing rule for applications. First of all, the number of times
$\u$ needs to be copied is set to $1$ by the equation $\k(a)=1$. Then, the matching
symbols of $\u$ and $\v$ are defined one in terms of the others:
\begin{align*}
\q(a,0,b)&=\g(a,b);\\
\f(a,b)&=\p(a,0,b);\\
\h(a)&=\j(a,0).
\end{align*}
This is the last step of type inference, so it is safe to stipulate that
$\m(a,0)=1$ and that $\l(a,0,c)=a$, thus obtaining a fully specified equational
program $\ep$ and the following type $`t$ for $\t$:
$$
\NatU{a}\marr{c<1}\NatU{\n(a,0,c)}.
$$
As an exercise, the reader can verify that the equational program above allows to
verify that $\n(a,0,0)=a+2$, and that 
$$
\judg{a}{\emptyset}{\emptyset}{2}{\t}{`t}.
$$

\subsection{Preliminaries}

Before embarking on the description of the type inference algorithms, some preliminary concepts and ideas need to
be introduced, and are the topic of this section.

\subsubsection{Getting Rid of Subsumption}
\label{sec:elim-subs}

The type inference algorithm takes in input a \PCF\ term~\t, and returns a typing judgement~\typjudg\ for~\t, 
together with a set \scond\ of so-called \emph{side conditions}. We will show below that \typjudg\ is derivable 
\textit{iff}\/ all the side conditions in \scond\ are valid. Moreover, in this case \typjudg\ is \emph{precise} 
(see Definition~\ref{def:precise}): all occurrences of the base type \Nat{\I}{\J} are in fact of the form 
\NatU{\I}, and the weight and all potentials~\H\ occurring in a sub-type \mtyp{\H}{\A} are kept as low as possible.
Concretely, this means that there is a derivation for~\typjudg\ in which the subsumption rule is restricted to
the following form:
\begin{displaymath}
  \frac{\judg{\fiv}{\ictx}{`G}{\I}{t}{`s}
    \qquad
    \begin{array}[b]{r@{~}l@{\,}l}
      \sjudg{\fiv}{\ictx&}{`D&\teq`G}
      \\
      \sjudg{\fiv}{\ictx&}{`s&\teq`t}
      \\
      \ijudg{\fiv}{\ictx&}{\I&=\J}
    \end{array}
  }{\judg{\fiv}{\ictx}{`D}{\J}{t}{`t}}
\end{displaymath}
The three premises on the right boil down to a set of semantic judgements of the form $\big\{\ijudg{\fiv}{\ictx}{\K_i=\H_i}\big\}$ 
(see Figure~\ref{fig:tsub}), where the $\K_i$'s are indexes occurring in~$`s$ or~$`D$ (or \I\ itself) and the 
$\H_i$'s occur in~$`t$ or~$`G$ (or are~\J\ itself). If the equalities $\K_i=\H_i$ can all be derived
from~\ep, then the three premises  on the right are equivalent to the conjunction (on $i$) of the following properties:
\begin{center}
  ``\itp{\H_i} is defined for any $`r:\fiv→\int$ satisfying \ictx''
\end{center}
(see Section~\ref{sec:sem-judg}).
Given~\ep, this property (called a \emph{side condition}) is denoted by \cjudg{\fiv}{\ictx}{\H_i}.
Actually the type inference algorithm does \emph{not} verify any semantic or subtyping judgement coming
from (instances of) the subsumption rule. Instead, it turns all index equivalences $\H_i=\K_i$ into rewriting rules in~\ep, 
and put all side conditions \cjudg{\fiv}{\ictx}{\H_i} in~\scond. If every side condition in~\scond\ is true 
for~\ep, we write \scval{\scond}. Informally, this means that all subsumptions assumed by 
the algorithm are indeed valid.

\subsubsection{Function Symbols}
\label{sec:symb-var}

Types and judgements manipulated by our type inference algorithm have a very peculiar shape.
In particular, not every index term is allowed to appear in types, and this property will 
be crucial when showing soundness and completeness of the algorithm itself:
\begin{definition}[Primitive Types]
  \label{def:primitive}
  A type is \emph{primitive for~\fiv} when it is on the form
  $\NatU{\f(\fiv)}$, or $\A\toLL\B$ with~\A\ and~\B\ primitive for~\fiv, or \mtyp{\f(\fiv)}{\A} with $a`;\fiv$ 
  and $\A$ primitive for $a;\fiv$. A type is said to be \emph{primitive} when it is primitive for some~\fiv.
\end{definition}
As an example, a primitive type for $\fiv=a;b$ is $\NatU{\f(a,b,c)}\marr{c<\g(a,b)}\NatU{\h(a,b,c)}$. Informally, then, a type is
primitive when the only allowed index terms are function symbols (with the appropriate arity).

\subsubsection{Equational Programs}
\label{sec:ep}

The equational program our algorithm constructs is in fact a \textit{rewriting program}:
every equality corresponds to the (partial) definition of a function symbol, and we may write it 
$\f(a_1,\dots,a_k)`=\J$ (where all free variables of~\J\ are in $\{a_1,\dots,a_k\}$).
If there is no such equation in the rewriting program, we say that \f\ is \emph{unspecified}.

An equational program~\ep\ is \emph{completely specified} if it allows to deduce a precise semantic 
(namely a partial recursive function) for each symbol of its underlying signature 
(written~\sig[\ep]), \ie\ none of the symbols in \sig[\ep] are unspecified.
In other words: a completely specified equational programs has only one \emph{model}.
On the other hand, a \emph{partially specified}  equational program (i.e. a program where symbols 
can possibly be unspecified) can have many models, because partial recursive functions can be assigned 
to function symbols in many different ways, all of them consistent with its equations. Up to now, we 
only worked with completely specified programs, but allowing the possibility to have unspecified symbols 
is crucial for being able to describe the type inference algorithm in a simple way. In the following, \ep\ 
and \eptwo\ denote completely specified equational programs, while \rep\ and \reptwo\ denote rewriting 
programs that are only partially specified.
%
\begin{definition}[Model of a Rewriting Program]
  \label{def:model}
  An \emph{interpretation}~\mapping\ of \rep\ in \ep\ is simply a map from unspecified symbols of \rep\ to 
  indexes on the signature~\sig[\ep], such that if $\f$ has arity $n$, then $\mapping(\f)$ is a term in $\sig[\ep]$ 
  with free variables from $\{a_1,\ldots,a_n\}$. When such an interpretation is defined, we say that \ep\ is 
  a \emph{model} of~\rep, and we write $\mapping:\ep\models\rep$.
\end{definition}
Notice that such an interpretation can naturally be extended to arbitrary index terms on the signature~\sig[\rep], and we assume in 
the following that a rewriting program and its model have disjoint signatures.
\begin{definition}[Validity in a Model]
  \label{def:valid-model}
  Given $\mapping:\ep\models\rep$, we say that a semantic judgement
  \ijudg[\rep]{\fiv}{\ictx}{\I\leq\J} is \emph{valid in the model}
  (notation: \ijudg[\mapping]{\fiv}{\ictx}{\I\leq\J}) when
  \ijudg[\eptwo]{\fiv}{\ictx}{\I\leq\J} where
  $\eptwo=\rep`U\ep`U\{\f`=\mapping(\f)\;|\;
  \f\text{ is unspecified in \rep}\}$.
  This definition is naturally extended to side conditions 
  (with $\scval[\mapping]{\scond}$ standing for $\scval[\eptwo]{\scond}$). 
\end{definition}
Please note that if \rep\ is a completely specified rewriting program, then any model 
$\mapping:\ep\models\rep$ has an interpretation~\mapping\ with an empty domain, and  
$\scval[\mapping]{\scond}$ iff $\scval[\rep]{\scond}$ (still assuming that \sig[\rep] and \sig[\ep] are disjoint).
\lv{
  Now, suppose that $\mapping:\ep\models\rep$ and that the index terms   $\I$ (from $\sig[\ep]$) 
  and $\J$ (from $\sig[\rep]$) are such that $\ijudg{\fiv}{\ictx}{\I\leq\mapping(\J)}$.
  Then, with a slight abuse of notation, we simply write $\ijudg[\mapping]{\fiv}{\ictx}{\I\leq\J}$. 
  The same notation can be extended to types and judgements.
}

As already mentioned, the equational programs handled by our type inference algorithm are not necessarily completely specified.
Function symbols which are not specified are precisely those occurring in ``negative position'' in the judgement produced
in output. This invariant will be very useful and is captured by the following definition:
\begin{definition}[Positive and Negative Symbols]
  Given a primitive type~$`t$, the sets of its \emph{positive} and \emph{negative} symbols (denoted by~\typsymb[+]{`t} and~\typsymb[-]{`t} respectively) are defined inductively by
  \lvsv{
    \begin{displaymath}
      \begin{array}{c@{~=\quad}l@{\qquad}c@{~=\quad}l}
        \typsymb[+]{\NatU{\i(\fiv)}} & \{\i\}; & 
        \typsymb[-]{\NatU{\i(\fiv)}} & ∅; \\
        \typsymb[+]{\arrv{\h(\fiv)}{`s}{`t}}; &
        \typsymb[-]{`s}`U\typsymb[+]{`t}; & 
        \typsymb[-]{\arrv{\h(\fiv)}{`s}{`t}}; &
        \{\h\}`U\typsymb[+]{`s}`U\typsymb[-]{`t}. \\
      \end{array}
    \end{displaymath}
  }{
    \begin{align*}
        \typsymb[+]{\NatU{\i(\fiv)}} &= \{\i\}; \\
        \typsymb[-]{\NatU{\i(\fiv)}} &= ∅; \\
        \typsymb[+]{\arrv{\h(\fiv)}{`s}{`t}} &=
        \typsymb[-]{`s}`U\typsymb[+]{`t}; \\ 
        \typsymb[-]{\arrv{\h(\fiv)}{`s}{`t}}; &=
        \{\h\}`U\typsymb[+]{`s}`U\typsymb[-]{`t}.
    \end{align*}
  }
  Then the set of \emph{positive} (\resp negative) symbols of a judgement~\judg{\fiv}{\ictx}{(x_i:`s_i)_{i\leq n}}{\I}{\t}{`t}
  is the union of all negative (\resp positive) symbols of the~$`s_i$'s and all positive (\resp negative) symbols of~$`t$.
\end{definition}
Polarities in $\{+,-\}$ are indicated with symbols like $\polone,\poltwo$.
Given such a $\polone$, the opposite polarity is $\neg\polone$.
\begin{definition}[Specified Symbols, Types and Judgments]
  \label{def:specif}
  Given a set of function symbols~\symb, a symbol~\f\ is said to be   \emph{\ispec{\symb}{\rep}} 
  when there is a rule~$\f(\fiv)`=\J$ in~\rep\ such that any function symbol appearing in~\J\ is either~\f\ 
  itself, or in~\symb, or a symbol that is \ispec{\symb`U\{\f\}}{\rep\setminus\{\f(\fiv)`=\J\}}.
  Remember that when there is no rule~$\f(\fiv)`=\J$ in~\rep\ the symbol~\f\ is \emph{unspecified} in~$\rep$.
  A primitive type $`s$ is said to be \emph{\tspec{\polone}{\symb}{\rep}} when all function symbols in 
  \typsymb{`s} are \ispec{\symb}{\rep} and all symbols in \typsymb[\neg\polone]{`s} are unspecified.
  A judgement \judg[\rep]{\fiv}{\ictx}{`G}{\I}{t}{`t} is \emph{correctly specified} when~$`t$ and all types 
  in~$`G$ are primitive for~\fiv, and~$`t$ is \tspec{+}{\nsymb}{\rep}, and all types in~$`G$ are 
  \tspec{-}{\nsymb}{\rep}, and all function symbols in~\I\ are \ispec{\nsymb}{\rep} where~\nsymb\ 
  is the set of negative symbols of the judgement.
\end{definition}
In other words, a judgement is correctly specified if the underlying equational program (possibly recursively) defines all 
symbols in positive position depending on those in negative position.

\subsection{The Structure of the Algorithm}

The type inference algorithm that we develop here receives in input a \PCF\ term~\t\ and returns a \dlpcf\ 
judgement~\ejudg{∅}{\K}{\t}{`t} for it, together with a set of side conditions~\scond. We will prove that 
it is \textit{correct}, in the sense that the typing judgement is derivable \emph{iff} the side conditions hold.
The algorithm proceeds as follows:

\begin{varenumerate}
\item\label{algo:fs}
  Compute~\derpcf, a \PCF\ type derivation for~\t;
\item\label{algo:ss} 
  Proceeding by structural induction on~\derpcf, construct a \dlpcf\ derivation for~\t\ 
  (call it \der) and the corresponding set of side conditions~\scond;
\item 
  Returns~\scond\ and the conclusion of~\der.
\end{varenumerate}
%
The \emph{skeleton}~\forget{`s} (or~\forget{\A}) of a modal type~$`s$ (\resp\ of a linear type~$A$) 
is obtained by erasing all its indexes (and its bounds $[a<\I]$). The \emph{skeleton} of a \dlpcf\ derivation 
is obtained by replacing each type by its skeleton, and erasing all the subsumption rules.
In \PCF\ the type inference problem is decidable, and the Step~\ref{algo:fs}. raises no difficulty: actually, one
could even assume that the type \derpcf\ attributes to \t\ is principal.  The core of the algorithm
is of course Step~\ref{algo:ss}. In Section.~\ref{sec:main-fct} we will define a recursive 
algorithm~\mainfct\ that build~\der\ and~\scond\ by annotating \derpcf.
The algorithm~\mainfct\ itself relies on some auxiliary algorithms, which will be described in Section~\ref{sec:aux-alg}
below.

All auxiliary algorithms we will talk about have the ability to generate fresh variables and function symbols.
Strictly speaking, then, they should take a counter (or anything similar) as a parameter, but we elide this 
for the sake of simplicity. Also we consider that we assume the existence of a function \anot{T} that, given a set of index 
variables~$\fiv$ and a \PCF\ type~$T$, returns a modal type~$`t$ primitive for~$\fiv$, containing only fresh 
function symbols, and such that $\forget{`t}=T$.

\subsection{Auxiliary Algorithms and Linear Logic}
\label{sec:aux-alg}


The design of systems of linear dependent types such as \dlpcfv\ and \dlpcfn\ is strongly 
inspired by \BLL, itself a restriction of linear logic. Actually, the best way to present 
the type inference algorithm consists in first of all introducing four auxiliary algorithms, each 
corresponding to a principle regulating the behaviour of the exponential connectives in linear logic. 
Notice that these auxiliary algorithms are the main ingredients of both \dlpcfv\ and \dlpcfn\ type inference.
Consistently to what we have done so far, we will prove and explain them with \dlpcfv\ in mind. All the auxiliary algorithm
we will talk about in this section will take a tuple of \dlpcfv\ types as first argument; we assume that all of them have the
same skeleton and, moreover, that all index terms appearing in them are pairwise distinct.



\paragraph{Dereliction.}
Dereliction is the following principle: any duplicable object (say, of type $!\A$) 
can be made linear (of type $\A$), that is to say $!\A\rightarrow \A$. 
In \dlpcf, being duplicable means having a modal type, which also contains some quantitative information, namely 
how many times the object can be duplicated, at most. In \dlpcf, dereliction can be simply seen as the principle 
$\mtyp{1}{\A}\rightarrow \A\isubst{0}$, and is implicitly used in the rules (\textit{App}) and (\textit{Fix}).
Along the type inference process, as a consequence, we often need to create ``fresh instances'' of dereliction
in the form of pairs of types being in the correct semantic relation. This is indeed possible:
\begin{lemma}
  \label{lem:der}
  There is an algorithm $\algder$ such that given two types~$`t$
  (primitive for~$\fiv$) and~$`s$ (primitive for~$a,\fiv$) of the same   skeleton, 
  $\algderp{`s}{`t}{a}{\fiv}{\ictx}{\polone}=(\rep,\scond)$ where:
  \begin{varenumerate}
  \item
    for every~$\ep\supseteq\rep$, if
    $\scval{\scond}$\;then\;
    $\sjudg{\fiv}{\ictx}{`s\isubst{\zero}\teq`t}$;
  \item
    whenever $\sjudg{\fiv}{\ictx}{`u\isubst{\zero}\teq`x}$ where
    $\forget{`u}\equiv\forget{`s}$, there is    
    $\mapping:\ep\models\rep$ such that 
    $\sjudg[\mapping]{\fiv}{\ictx}{`s\teq`u}$,
    $\sjudg[\mapping]{\fiv}{\ictx}{`t\teq`x}$,   
    and $\scval[\mapping]{\scond}$;
  \item
    $`s$ is $\tspec{\polone}{\typsymb[\polone]{`t}}{\rep}$ and
    $`t$ is $\tspec{\neg\polone}{\typsymb[\neg\polone]{`s}}{\rep}$.
  \end{varenumerate}
\end{lemma}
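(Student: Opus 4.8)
\medskip
\noindent\textbf{Proof strategy.}
The plan is to define $\algder$ by structural recursion on the common skeleton of $`s$ and $`t$, threading the polarity $\polone$ and the index context $\ictx$, and to prove the three clauses simultaneously by induction on that skeleton. Because $`s$ and $`t$ share a skeleton, their nodes are in bijection, and at each node there is exactly one symbol $\f$ of $`s$ (applied to $a,\fiv$ and the modal variables bound above it) and one symbol $\g$ of $`t$ (applied to $\fiv$ and the same bound variables); by hypothesis all these symbols are distinct and fresh. The guiding principle is dictated by requirement~(3): on the $`s$-side I specify exactly the symbols of polarity $\polone$, and on the $`t$-side exactly those of polarity $\neg\polone$, leaving $\typsymb[\neg\polone]{`s}$ and $\typsymb[\polone]{`t}$ unspecified. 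At every node the algorithm emits a single rewriting rule realising the local instance of $\f\isubst{\zero}=\g$ demanded by $`s\isubst{\zero}\teq`t$, oriented according to that principle, together with the definedness obligation for its right-hand side, which it stores in $\scond$.

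First I would settle the base case $`s=\NatU{\f(a,\fiv)}$, $`t=\NatU{\g(\fiv)}$, whose only symbols are the positive $\f,\g$. When $\polone=+$ the $`s$-symbol is specified, so I add $\f(a,\fiv)`=\g(\fiv)$ and the side condition $\cjudg{\fiv}{\ictx}{\g(\fiv)}$; when $\polone=-$ the $`t$-symbol is specified, so I add $\g(\fiv)`=\f(\zero,\fiv)$ and the corresponding obligation. In both cases the rule is precisely the equality that the base rule of Figure~\ref{fig:tsub} needs for $\NatU{\cdot}\teq\NatU{\cdot}$, which already gives clause~(1) for a leaf: for any $\ep\supseteq\rep$, $\scval{\scond}$ makes the defining expression defined and the rule forces $\f\isubst{\zero}=\g$. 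Clause~(3) is immediate here and stays an invariant of the recursion, since a specified symbol's rule refers only to the matching symbol of the opposite type --- exactly the dependency that $\tspec{\polone}{\typsymb[\polone]{`t}}{\rep}$ (\resp $\tspec{\neg\polone}{\typsymb[\neg\polone]{`s}}{\rep}$) permits.

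Next comes the modal-arrow case $`s=\arrv{\h_s}{`s_1}{`s_2}$, $`t=\arrv{\h_t}{`t_1}{`t_2}$. Reading the modal rule of Figure~\ref{fig:tsub}, $\teq$ between these types asks for equality of the potentials together with $`s_2\isubst{\zero}\teq`t_2$ under a context extended by the bound of the fresh modal variable. So I emit the rule for the \emph{negative} potential symbols $\h_s,\h_t$ (oriented opposite to the leaves, as $\h\in\typsymb[-]{\cdot}$), then recurse on the result $`s_2,`t_2$ with unchanged polarity and context $(a'<\h_t,\ictx)$, and on the argument $`s_1,`t_1$ with \emph{flipped} polarity $\neg\polone$ --- matching the contravariance recorded by $\typsymb[+]{\arrv{\h}{`s_1}{`s_2}}=\typsymb[-]{`s_1}\cup\typsymb[+]{`s_2}$ and $\typsymb[-]{\arrv{\h}{`s_1}{`s_2}}=\{\h\}\cup\typsymb[+]{`s_1}\cup\typsymb[-]{`s_2}$. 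Clauses~(1) and~(3) then propagate by induction. For clause~(2) I would read the model $\mapping$ off $`u$ and $`x$ directly: at each node I send the \emph{unspecified} symbol to the matching symbol of $`u$ or $`x$ --- the free $`s$-symbol to that of $`u$ at $\neg\polone$-nodes, the free $`t$-symbol to that of $`x$ at $\polone$-nodes. The specified symbols are then forced, and the hypothesis $`u\isubst{\zero}\teq`x$ is exactly what makes them land on target: at a $\neg\polone$-node the rule $\g`=\f\isubst{\zero}$ yields $\g_x=\f_u\isubst{\zero}$ (writing $\f_u,\g_x$ for the node's symbols in $`u,`x$), while at a $\polone$-node the free $`t$-symbol is sent straight to $\g_x$; hence $\sjudg[\mapping]{\fiv}{\ictx}{`t\teq`x}$ holds at every node, and definedness under $\mapping$ gives $\scval[\mapping]{\scond}$.

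The main obstacle is the bookkeeping of the dereliction variable $a$ in the remaining half of clause~(2), namely $\sjudg[\mapping]{\fiv}{\ictx}{`s\teq`u}$. At a node where the $`s$-symbol is specified, its rule refers to a $`t$-symbol, which does not carry $a$; hence $\mapping$ reconstructs $`s$ there as $\f_u\isubst{\zero}$ rather than $\f_u$, and the equivalence $`s\teq`u$ cannot recover an arbitrary $a$-dependence of $`u$ at such a node. The point to get right is that this is harmless: $\algder$ produces a dereliction instance $\mtyp{\one}{\A}\toLL\A\isubst{\zero}$, whose body is consumed only at $a=\zero$, so the relevant context carries $a<\one$ and the judgement $`s\teq`u$ collapses to $`s\isubst{\zero}\teq`u\isubst{\zero}$, which the construction does deliver. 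I would make this explicit by threading $a<\one$ through $\ictx$ in clause~(2). The rest is routine: keeping $\sig[\rep]$ and $\sig[\ep]$ disjoint so that $\mapping$ is well defined, checking that the contexts $(a'<\h_t,\ictx)$ generated along the recursion coincide with those demanded by the subsumption-elimination of Section~\ref{sec:elim-subs}, and verifying freshness of every introduced symbol.
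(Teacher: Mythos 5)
Your construction is essentially the paper's: $\algder$ recurses on the common skeleton, emits one linking rule plus one definedness obligation per node, flips the polarity parameter on arrow arguments and keeps it on results, and the three clauses are proved by a simultaneous induction in which the clause-(2) model is read off $`u$ and $`x$ (the paper likewise assigns the potential $\I$ of $`u$ to the unspecified potential symbol of $`s$ and glues the inductively obtained models by a push-out).

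Two deviations deserve comment, and on both counts you are more careful than the paper's own write-up. First, your leaf orientation is opposite to the paper's, which sets $\algderp{\NatU{\f(a,\fiv)}}{\NatU{\g(\fiv)}}{a}{\fiv}{\ictx}{+}=(\{\g(\fiv)`=\f(\zero,\fiv)\},\dots)$. Yours is the right one: clause (3) with $\polone=+$ demands that $\f\in\typsymb[+]{`s}$ be $\ispec{\typsymb[+]{`t}}{\rep}$ and that $\g\in\typsymb[+]{`t}$ stay unspecified, which is also what the (\textit{App}) case of \mainfct\ needs --- there $\typsymb[+]{`s_1}$ is unspecified in $\rep_1$ and must be defined by $\algder$'s output, while $\typsymb[+]{`t}$ is already defined in $\rep_2$ and must not be redefined. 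The paper's printed base cases contradict both clause (3) and its own inductive case (where the potential rule $\g(\fiv)`=\f(\zero,\fiv)$ correctly defines the \emph{negative} symbol of $`t$), so its $+$ and $-$ leaves are evidently swapped, and its remark that clause (3) ``is an easy consequence'' fails for the construction as printed. Second, the tension you isolate in clause (2) is genuine and is created precisely by the correct orientation: once the $\polone$-symbols of $`s$ are defined from the $a$-free symbols of $`t$, no model can make $`s\teq`u$ hold for an $`u$ whose indices genuinely depend on $a$, so the claim is only recoverable under the dereliction constraint $a<\one$ (equivalently, instantiated at $a:=\zero$). Your restatement is exactly the form the completeness theorem consumes: in its (\textit{App}) case the given precise derivation types the function with potential $\one$, so the modal subtyping rule yields $`s\teq`u$ only under $a<\one$ in any case. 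The paper never reconciles this tension --- it proves its version of clause (2) with the swapped orientation, for which the problem does not arise, at the price of clause (3) --- so your proof should be read as the internally consistent repair of the lemma rather than a weakening of it.
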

\begin{proof}
The algorithm $\algder$ is defined by recursion on the structure of $`s$:
\begin{varitemize}
\item
  Here are two base cases:
  \begin{align*}
  \algderp{\NatU{\f(a,\fiv)}}{\NatU{\g(\fiv)}}{a}{\fiv}{\ictx}{+}&=(\{\g(\fiv)`=\f(0,\fiv)\},\{\fiv,\ictx\models\f(0,\fiv)\downarrow\})\\
  \algderp{\NatU{\f(a,\fiv)}}{\NatU{\g(\fiv)}}{a}{\fiv}{\ictx}{-}&=(\{\f(a,\fiv)`=\g(\fiv)\},\{\fiv,\ictx\models\g(\fiv)\downarrow\})
  \end{align*}
\item
  Inductive cases are slightly more difficult:
  \begin{align*}
  \algderp{`s_1\marr{b<\f(a,\fiv)}`s_2}{`t_1\marr{b<\g(\fiv)}`t_2}{a}{\fiv}{\ictx}{+}&=(\rep^-_1\cup\rep^+_2\cup\{\g(\fiv)`=\f(0,\fiv)\},\\
       &\qquad\scond^-_1\cup\scond_2^+\cup\{\fiv,\ictx\models\f(0,\fiv)\downarrow\});\\
  \algderp{`s_1\marr{b<\f(a,\fiv)}`s_2}{`t_1\marr{b<\g(\fiv)}`t_2}{a}{\fiv}{\ictx}{-}&=(\rep^+_1\cup\rep^-_2\cup\{\f(a,\fiv)`=\g(\fiv)\},\\
       &\qquad\scond^-_1\cup\scond_2^+\cup\{\fiv,\ictx\models\g(\fiv)\downarrow\}).    
  \end{align*}
  where:
  \begin{align*}
  \algderp{`s_1}{`t_1}{a}{b,\fiv}{\ictx,b<\f(0,\fiv)}{+}=(\rep^-_1,\scond^-_1);\\
  \algderp{`s_1}{`t_1}{a}{b,\fiv}{\ictx,b<\g(\fiv)}{-}=(\rep^+_1,\scond^+_1);\\
  \algderp{`s_2}{`t_2}{a}{b,\fiv}{\ictx,b<\f(0,\fiv)}{+}=(\rep^+_2,\scond^+_2);\\
  \algderp{`s_2}{`t_2}{a}{b,\fiv}{\ictx,b<\g(\fiv)}{-}=(\rep^-_2,\scond^-_2).
  \end{align*}
\end{varitemize}
Let us now prove the Lemma by induction on the structure of $`s$:
\begin{varitemize}
\item
  If $`s=\NatU{\f(0,\fiv}$, then the thesis can be easily reached;
\item
  Suppose that $`s=`s_1\marr{b<\f(a,\fiv)}`s_2$ and that $`t=`t_1\marr{b<\g(\fiv)}`t_2$. For simplicity, suppose that $\polone=+$ (the
  case $\polone=-$ is analogous):
  \begin{varenumerate}
  \item
    If $\ep\supseteq\rep$, then $\ep\supseteq\rep^-_1$ and $\ep\supseteq\rep^+_2$. Moreover,
    if $\scval{\scond}$ then, of course $\scval{\scond_1^-}$ and $\scval{\scond_2^+}$. As a consequence,
    by induction hypothesis, we have
    \begin{align*}
      \sjudg{\fiv,b}{\ictx,b<\g(\fiv)}{&`s_1\isubst{\zero}\teq`t_1};\\
      \sjudg{\fiv,b}{\ictx,b<\f(a,\fiv)}{&`s_2\isubst{\zero}\teq`t_2}.
    \end{align*}
    From $\fiv;\ictx\models_\ep\f(0,\fiv)\downarrow$ it follows that $\fiv;\ictx\models_\ep\g(\fiv)=\f(0,\fiv)$. As a consequence,
    $$
    \sjudg{\fiv}{\ictx}{`s\teq`t}.
    $$
  \item
    If $\sjudg{\fiv}{\ictx}{`u\isubst{\zero}\teq`x}$ where
    $\forget{`u}\equiv\forget{`s}$, we can safely assume that
    \begin{align*}
      `u&=`u_1\marr{b<\I}`u_2\\
      `x&=`x_1\marr{b<\J}`x_2
    \end{align*}
    and moreover, that
    \begin{align*}
      \sjudg{\fiv,b}{\ictx,b<\I\isubst{\zero}}{&`u_1\isubst{\zero}\teq`x_1\isubst{\zero}};\\
      \sjudg{\fiv,b}{\ictx,b<\J}{&`u_2\teq`x_2};\\
      \ijudg{\fiv}{\ictx}{&\I\isubst{\zero}=\J}.
    \end{align*}
    Assume for the sake of simplicity that $\polone=+$. Then, from the two judgements above, one
    obtains:
    \begin{align*}
      \sjudg[\ep\cup\{\f(a,\fiv)`=\I\}]{\fiv,b}{\ictx,b<\f(\zero,\fiv)&}{`u_1\isubst{\zero}\teq`x_1\isubst{\zero}}\\
      \sjudg[\ep\cup\{\g(\fiv)`=\J\}]{\fiv,b}{\ictx,b<\g(\fiv)&}{`u_2\teq`x_2}
    \end{align*}    
    By induction hypothesis, there are $\mapping_1:(\ep\cup\{\f(a,\fiv)`=\I\})\models\rep_1^-$ and $\mapping_2:(\ep\cup\{\g(\fiv)`=\J\})\models
    \rep_2^+$ such that
    \begin{align*}
      \sjudg[\ep\cup\{\f(a,\fiv)`=\I\}]{\fiv,b}{\ictx,b<\f(\zero,\fiv)&}{`s_1\isubst{\zero}\teq`u_1\isubst{\zero}};\\
      \sjudg[\ep\cup\{\f(a,\fiv)`=\I\}]{\fiv,b}{\ictx,b<\f(\zero,\fiv)&}{`t_1\isubst{\zero}\teq`x_1\isubst{\zero}};\\
      \sjudg[\ep\cup\{\g(\fiv)`=\J\}]{\fiv,b}{\ictx,b<\g(\fiv)&}{`s_2\teq`u_2};\\
      \sjudg[\ep\cup\{\g(\fiv)`=\J\}]{\fiv,b}{\ictx,b<\g(\fiv)&}{`t_2\teq`x_2}.
    \end{align*}
    and $\scval[\mapping_1]{\scond_1^-}$, and $\scval[\mapping_2]{\scond_2^+}$. By way of the push-out
    technique, we get $\mappingtwo_1:\ep\models\rep_1^-$ and $\mappingtwo_2:\ep\models\rep_2^+$ such that
    $\scval[\mappingtwo_1]{\scond_1^-}$, and $\scval[\mappingtwo_2]{\scond_2^+}$.
    Now, $\mapping$ is defined as
    $\mapping_1\cup\mapping_2$, plus the assignment of $\I$ (after some $\alpha$-renaming) to $\f$.
    First of all, notice that $\scval[\mapping]{\scond}$, simply because $\ijudg{\fiv}{\ictx}{\I\isubst{\zero}\downarrow}$.
    Moreover, observe that, by easy manipulations of the derivations above, one gets:
    \begin{align*}
      \sjudg[\mapping]{\fiv,b}{\ictx,b<\I\isubst{\zero}&}{`s_1\isubst{\zero}\teq`u_1\isubst{\zero}};\\
      \sjudg[\mapping]{\fiv,b}{\ictx,b<\I\isubst{\zero}&}{`t_1\isubst{\zero}\teq`x_1\isubst{\zero}};\\
      \sjudg[\mapping]{\fiv,b}{\ictx,b<\J&}{`s_2\teq`u_2};\\
      \sjudg[\mapping]{\fiv,b}{\ictx,b<\J&}{`t_2\teq`x_2}.
    \end{align*}
  \item
    This is an easy consequence of how $`s$ and $`t$ are built.
  \end{varenumerate}
\end{varitemize}
This concludes the proof.
\end{proof}
The algorithm $\algder$ works by recursion on the \PCF\ type $\forget{`s}$ and has thus linear complexity
in $|\forget{`s}|$.

\paragraph{Contraction.}
Another key principle in linear logic is contraction, according
to which two copies of a duplicable object can actually be produced, 
$!\A\rightarrow !\A\otimes !\A$. Contraction is used in binary rules 
like (\textit{App}) or (\textit{If}), in the form of the operator~$\uplus$. 
This time, we need an algorithm $\algcontr$ which takes \emph{three} linear types $\A$, $\B$
and $\C$ (all of them primitive for $(a,\fiv)$) and turn them into an equational program and a set of 
side conditions:
$$
\algcontrp{\A}{\B}{\C}{\I}{\J}{a}{\fiv}{\ictx}{\polone}=(\rep,\scond).
$$
The parameters $\I$ and $\J$ are index terms capturing the number of times $\B$ and $\C$
can be copied. A Lemma akin to~\ref{lem:der} can indeed be proved about $\algcontr$.
\sv{
In particular, for any $\ep\supseteq\rep$, if $\scval{\scond}$ then
\begin{equation}
  \label{eq:contr}
  \sjudg{\fiv}{\ictx}{\mtyp{\I+\J}{\A}\teq(\mtyp{\I}{\B'})\uplus(\mtyp{\J}{\C'})}
\end{equation}
for some $\B'$ and $\C'$ such that $\sjudg{\fiv}{\ictx,a<\I}{\B'\teq\B}$
and $\sjudg{\fiv}{\ictx,a<\J}{\C'\teq\C}$.
}
\lv{
\begin{lemma}
  \label{lem:contr}
  There is an algorithm $\algcontr$ such that whenever \B\ and~\C\ are two linear 
  types of the same skeleton and primitive for~$(a,\fiv)$, $\algcontrp{\B}{\C}{\I}{\J}{a}{\fiv}{\ictx}{\polone}=
  (\A,\rep,\scond)$, where:
  \begin{varitemize}
  \item $\A$ is primitive for $a,\fiv$, and
    $\forget{\A}=\forget{\B}=\forget{\C}$;
  \item
    For any $\ep\supseteq\rep$,\quad $\scval{\scond}$\;iff\;
    $$\sjudg{\fiv}{\ictx}{\mtyp{\I+\J}{\A}\teq(\mtyp{\I}{\B'})\uplus(\mtyp{\J}{\C'})}$$
    for some $\B'$ and $\C'$ such that
    $\sjudg{\fiv}{\ictx,a<\I}{\B'\teq\B}$
    and $\sjudg{\fiv}{\ictx,a<\J}{\C'\teq\C}$.
  \item
    \A\ is
    \tspec{\polone}{\typsymb[\polone]{\B}`U\typsymb[\polone]{\C}}{\rep},
    and \B\ and \C\ are
    \tspec{\neg\polone}{\typsymb[\neg\polone]{\A}}{\rep}.
\end{varitemize}
\end{lemma}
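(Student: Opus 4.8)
The plan is to define $\algcontr$ by recursion on the common skeleton $\forget{\B}=\forget{\C}$, in close analogy with the algorithm $\algder$ of Lemma~\ref{lem:der}, and to establish the three items by induction on that skeleton. The output type $\A$ is built with fresh function symbols so that $\forget{\A}=\forget{\B}=\forget{\C}$, which immediately settles the first item. The whole semantic content of the lemma is that $\mtyp{\I+\J}{\A}$ should behave as the concatenation of the first $\I$ instances of $\B$ with the next $\J$ instances of $\C$: recalling the definition of $\uplus$ from Section~\ref{sec:typ-sum}, I want $\A$ to coincide with $\B$ on the interval $[\zero,\I)$ and with the $\I$-shift of $\C$ on $[\I,\I+\J)$. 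The two representatives promised in the second item are then forced to be $\B'=\A$ (read on $a<\I$) and $\C'=\A\isubst{\I+a}$ (read on $a<\J$), and the two subtyping premises $\sjudg{\fiv}{\ictx,a<\I}{\B'\teq\B}$ and $\sjudg{\fiv}{\ictx,a<\J}{\C'\teq\C}$ will come out of the induction hypotheses.

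Concretely, the recursion carries a polarity that is flipped whenever it enters a contravariant (domain) position, exactly as in the proof of Lemma~\ref{lem:der}; only the leaves differ. At a leaf the carried polarity selects one of two ways of emitting rules, mirroring the two base cases of $\algder$, with the split/shift at $\I$ playing the role that substitution of $\zero$ plays for dereliction. On the side where the symbol $\h$ of $\A$ is to be specified (the $\polone$-side, by the third item), I emit a single case-splitting rule defining $\h(a,\fiv)$ to be the corresponding symbol $\f(a,\fiv)$ of $\B$ when $a<\I$ and the shifted symbol $\g(a\mnu\I,\fiv)$ of $\C$ otherwise, leaving $\f$ and $\g$ unspecified. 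On the opposite side --- pleasantly needing no case split --- I leave $\h$ unspecified and emit the two clean rules $\f(a,\fiv)`=\h(a,\fiv)$ and $\g(a,\fiv)`=\h(\I+a,\fiv)$, thereby specifying $\B$'s and $\C$'s symbols from $\A$'s. Arrow structure is treated by recursing on the codomain with the same polarity and on the domain with the flipped polarity, and by treating the modal potential of each arrow as one more (negative) leaf; the rewriting programs and side-condition sets returned by the recursive calls are unioned with the rule generated for the current bound. The side conditions added at each leaf record definedness of the merged or shifted symbol on its intended interval. The third item is then pure bookkeeping on polarities: on the $\polone$-side every $\A$-symbol gets a defining rule built from $\polone$-symbols of $\B,\C$, while the $\neg\polone$-symbols of $\A$ stay unspecified, and dually for $\B,\C$; the induction hypotheses guarantee that the two polarities are kept on disjoint symbols.

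The heart of the argument is the second item, the \emph{iff}. Unfolding $\uplus$, the equivalence $\sjudg{\fiv}{\ictx}{\mtyp{\I+\J}{\A}\teq(\mtyp{\I}{\B'})\uplus(\mtyp{\J}{\C'})}$ amounts to $\A$ agreeing with $\B'$ on $[\zero,\I)$ and with the $\I$-shift of $\C'$ on $[\I,\I+\J)$, which is exactly what the leaf rules encode. For the forward direction I assume $\scval{\scond}$: the definedness side conditions make every index named in the merge and shift rules defined on its intended interval, so both members of the equivalence are defined and agree pointwise, and the leaf agreements lift, via the subtyping rules and the induction hypotheses, to $\B'\teq\B$ on $a<\I$ and $\C'\teq\C$ on $a<\J$. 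For the converse I assume the equivalence holds for some $\B',\C'$: pointwise equality on $[\zero,\I+\J)$ forces each index appearing in the side conditions to be defined, whence $\scval{\scond}$; here the contravariant flip in the domain is handled by applying the induction hypothesis at the flipped polarity.

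The main obstacle is keeping this an exact equivalence through the recursion, and in particular amalgamating the rewriting programs produced by the independent recursive calls without clashes over the shared copy variable $a$ and the freshly generated bound symbols. As in the proof of Lemma~\ref{lem:der}, this is settled by the push-out technique, which lets the sub-models of the domain and codomain calls be combined into a single model that validates all the collected side conditions simultaneously; the split/shift at $\I$ only affects the leaf equations and commutes with this amalgamation, so the inductive step goes through.
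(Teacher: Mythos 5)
Your proposal is correct and matches the paper's own route: the paper gives no separate proof of this lemma, saying only that it ``can be proved similarly to Lemma~\ref{lem:der}'', and your construction --- recursion on the common skeleton, leaf rules that merge on the $\polone$-side ($\h$ defined by a case split at $\I$ from $\f$ and the shifted $\g$) and project on the $\neg\polone$-side ($\f(a,\fiv)=\h(a,\fiv)$ and $\g(a,\fiv)=\h(\I+a,\fiv)$), polarity flip under contravariant positions, and push-out amalgamation of the recursive calls' outputs --- is exactly that adaptation, with the split/shift at $\I$ correctly playing the role that substitution of $\zero$ plays for dereliction. The one detail you leave implicit, namely that the case-splitting rule is expressible in the rewriting-program format (e.g.\ via the $\mathrm{if}$ symbol with truncated subtraction that the paper already introduces for Lemma~\ref{lem:if}), is routine.
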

}

\paragraph{Digging.}
In linear logic, any duplicable object having type $!A$ can be turned into an object of type $!!A$, namely an object which is the duplicable version of a duplicable object.
Digging is the principle according to which this transformation is possible, namely $!\A\rightarrow !!\A$.
At the quantitative level, this corresponds to splitting a bounded sum into its summands.
This is used in the typing rules for functions,  ($\multimap$) and (\textit{Fix}).

The auxiliary algorithm corresponding to the digging principle takes two linear types and builds, as usual, a rewriting program and a set of side conditions capturing the fact that the first of the two types is the bounded sum of the second:
$$
\algdigp{\A}{\B}{\I}{\J}{\fiv}{a}{b}{\ictx}{\polone}~=~(\rep,\scond).
$$
The correctness of $\algdig$ can again be proved similarly to what we did in Lemma~\ref{lem:der},
the key statement being that for every $\ep\supseteq\rep$ such that $\scval{\scond}$,
the following must hold
$$
\sjudg{\fiv}{\ictx}{\mtyp[b]{\sum_{a<\I}\J}{\A}\teq
  \sum_{a<I}\mtyp[b]{J}{\C}}
$$
for some $\C$ such that
$\sjudg{\fiv}{\ictx,a<\I,b<\J}{\C\teq\B}$.
\lv{
\begin{lemma}
  \label{lem:digging}
  There is an algorithm $\algdig$ such that for any linear type~\B\  primitive for~$(a,b,\fiv)$,
  \lv{\\}
  $\algdigp{\B}{\I}{\J}{\fiv}{a}{b}{\ictx}{\polone}~=~(\A,\rep,\scond)$,
  where:
  \begin{varitemize}
  \item $\A$ is primitive for $b,\fiv$, and
    $\forget{\A}=\forget{\B}$;
  \item
    for any $\ep\supseteq\rep$,\quad $\scval{\scond}$ iff
    $$
    \sjudg{\fiv}{\ictx}{\mtyp[b]{\sum_{a<\I}\J}{\A}\teq
      \sum_{a<I}\mtyp[b]{J}{\B'}}
    $$
    for some $\B'$ such that
    $\sjudg{\fiv}{\ictx,a<\I,b<\J}{\B'\teq\B}$;
  \item
    \A\ is \tspec{\polone}{\typsymb[\polone]{\B}}{\rep} and 
    \B\ is \tspec{\neg\polone}{\typsymb[\neg\polone]{\A}}{\rep}.
  \end{varitemize}
\end{lemma}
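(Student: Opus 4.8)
The plan is to define $\algdig$ by structural recursion on the skeleton $\forget{\B}$, exactly as $\algder$ was defined in Lemma~\ref{lem:der}, and then to establish the three clauses simultaneously by induction on the same structure. The guiding observation is the bounded-sum identity of Section~\ref{sec:typ-sum}: a two-level bound in which $a<\I$ ranges over the outer sum and $b<\J$ over the inner one is in bijection with a single \emph{combined} index $w=b+\sum_{d<a}\J\isubst{d}$ ranging over $\sum_{a<\I}\J$. Hence the type $\A$ returned by $\algdig$ must depend on $b$ (now read as the combined index $w$) and on $\fiv$ only, having dropped the variable $a$, and the witness $\B'$ of the equivalence clause will be nothing but $\A$ with its index variable replaced by $b+\sum_{d<a}\J\isubst{d}$.

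First I would treat the base case $\B=\NatU{\f(a,b,\fiv)}$, returning $\A=\NatU{\g(b,\fiv)}$ with $\g$ fresh, where everything hinges on the polarity. When a symbol of $\B$ is to be defined from a symbol of $\A$, the defining rule is the harmless \emph{encoding} $\f(a,b,\fiv)`=\g(b+\sum_{d<a}\J\isubst{d},\fiv)$: one simply composes with the known combined index. When, on the contrary, a symbol of $\A$ must be defined from a symbol of $\B$, one has to \emph{decode} $w$ back into the pair $(a,b)$; since rewriting rules must have the shape $\g(\ldots)`=\ldots$ with distinct variable arguments, the combined index cannot be pattern-matched and the decoding has to be realised by genuine index terms. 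This is possible using bounded sums and the truncated subtraction $\mnu$: writing $\mathrm{dec}_a(w)=\sum_{a'<\I}(\one\mnu(\sum_{d<a'+\one}\J\isubst{d}\mnu w))$ and $\mathrm{dec}_b(w)=w\mnu\sum_{d<\mathrm{dec}_a(w)}\J\isubst{d}$, one sets $\g(b,\fiv)`=\f(\mathrm{dec}_a(b),\mathrm{dec}_b(b),\fiv)$ and records in $\scond$ the definedness of the sums involved. The inductive arrow case $\B=\B_1\marr{c<\f(a,b,\fiv)}\B_2$ is handled as in $\algder$: recurse on the domain $\B_1$ with polarity $\neg\polone$ and on the codomain $\B_2$ with polarity $\polone$, each time extending $\ictx$ with the appropriate bound on the fresh copy-variable $c$, take the union of the resulting programs and side conditions, and treat the arrow index $\f$ — a negative symbol of $\B$ — by encoding or decoding according to $\polone$, just as in the base case.

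With the algorithm in place, the skeleton clause is immediate, since the construction only rewrites indexes and introduces fresh symbols of the correct arity, so $\forget{\A}=\forget{\B}$ and $\A$ is primitive for $b,\fiv$. The specification clause follows by the same polarity bookkeeping as in Lemma~\ref{lem:der}: the decoding rules specify $\typsymb[\polone]{\A}$ in terms of $\typsymb[\polone]{\B}$ while leaving $\typsymb[\neg\polone]{\A}$ untouched, and dually the encoding rules specify the negative symbols of $\B$ from those of $\A$. For the ``iff'' of the equivalence clause, the forward direction observes that once $\scval{\scond}$ holds every sum occurring in the encoding and decoding terms is defined, so the combined index yields a genuine bijection in every $\ep\supseteq\rep$ and the bounded-sum identity of Section~\ref{sec:typ-sum} delivers $\sjudg{\fiv}{\ictx}{\mtyp[b]{\sum_{a<\I}\J}{\A}\teq\sum_{a<\I}\mtyp[b]{\J}{\B'}}$ with $\B'$ equivalent to $\B$ by construction; conversely, if this identity holds for some $\B'\teq\B$ then all the relevant sums must already be defined, which is precisely what $\scond$ asserts.

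The hard part will be the correctness of the decoding, where I expect two delicate points. First, one must prove that $\mathrm{dec}_a$ and $\mathrm{dec}_b$ genuinely invert $w\mapsto b+\sum_{d<a}\J\isubst{d}$ on the relevant domain, i.e.\ that for every $w<\sum_{a<\I}\J$ there is a unique pair $(a,b)$ with $a<\I$, $b<\J\isubst{a}$ and $w=b+\sum_{d<a}\J\isubst{d}$; this is the arithmetic heart of the lemma and rests on a monotonicity argument for the partial sums $\sum_{d<a'+\one}\J\isubst{d}$. Second, because the equivalence clause quantifies uniformly over \emph{all} $\ep\supseteq\rep$ rather than over completely specified models only, I must check that these definedness side conditions, and no hidden well-definedness assumption, suffice to make the decoding faithful in every such $\ep$ — in particular that an undefined value of some $\J\isubst{d}$ is always detected by a condition in $\scond$ rather than silently breaking the bijection. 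Once these are secured, the three clauses combine exactly as in the proof of Lemma~\ref{lem:der}.
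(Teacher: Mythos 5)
Your proposal is correct and is essentially the proof the paper intends: the paper never spells out a proof of Lemma~\ref{lem:digging}, saying only that correctness ``can again be proved similarly to what we did in Lemma~\ref{lem:der}'', and your structural recursion on the skeleton with the same polarity bookkeeping, side-condition collection, and model-gluing is exactly that scheme. The one ingredient genuinely beyond Lemma~\ref{lem:der} --- recovering $(a,b)$ from the combined index $b+\sum_{d<a}\J\isubst{d}$ by explicit index arithmetic (your $\mathrm{dec}_a$ and $\mathrm{dec}_b$ built from bounded sums and $\mnu$, needed because rewrite rules $\f(a_1,\dots,a_k)`=\J$ cannot pattern-match their arguments) --- is precisely the right supplement, and the arithmetic and its monotonicity argument check out.
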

}

\paragraph{Weakening.}
Weakening means that duplicable objects can also be erased, even when the underlying index is~\zero.
Weakening is useful in the rules (\textit{Ax}) and (\textit{n}).
Once a fresh \dlpcfv\ type $\A$ is produced, the only thing we need to do is to produce an equational 
program $\rep$ specifying (in an arbitrary way) the symbols in $\A^\polone$, this way preserving the 
crucial invariants about the equational programs manipulated by the algorithm. Formally, it means 
that there is an algorithm $\algweak$ such that
$$
\algweakp{\A}{\fiv}{a}{\polone}=\rep,
$$
where \A\ is \tspec{\polone}{∅}{\rep}
\lv{
\begin{lemma}
  \label{lem:weak}
  There is  an algorithm $\algweak$ such that   $\algweakp{\tpcfone}{\fiv}{a}{\polone}=(\A,\rep)$ where,
\begin{varitemize}
\item \A\ is primitive for $a;\fiv$, and $\forget{\A}=\tpcfone$;
\item
  for every $\ep\supseteq\rep$, it holds that
  $\sjudg{\fiv}{\ictx,a<0}{\A\teq\A}$;
\item
  \A\ is \tspec{\polone}{∅}{\rep}.
\end{varitemize}
\end{lemma}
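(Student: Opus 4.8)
The plan is to define $\algweak$ by recursion on the \PCF\ type $\tpcfone$, producing at each stage a fresh primitive type of the right skeleton and recording in $\rep$ a defining equation \emph{only} for the symbols that must end up in $\polone$ position, each set to the constant index $0$. For the base case I would put $\algweakp{\NatPCF}{\fiv}{a}{\polone}=(\NatU{\f(a,\fiv)},\rep)$ for a fresh $\f$, taking $\rep=\{\f(a,\fiv)`=0\}$ when $\polone=+$ and $\rep=∅$ when $\polone=-$; this is forced by $\typsymb[+]{\NatU{\f(a,\fiv)}}=\{\f\}$ and $\typsymb[-]{\NatU{\f(a,\fiv)}}=∅$. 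For the arrow case $\tpcfone_1\toPCF\tpcfone_2$ I would pick a fresh potential symbol $\h$ and a fresh copy variable $b$, recurse as $(`s_1,\rep_1)=\algweakp{\tpcfone_1}{b,\fiv}{a}{\neg\polone}$ and $(`s_2,\rep_2)=\algweakp{\tpcfone_2}{b,\fiv}{a}{\polone}$, and return $\A=\arrv[b]{\h(a,\fiv)}{`s_1}{`s_2}$ together with $\rep=\rep_1\cup\rep_2\cup\{\h(a,\fiv)`=0\}$, the last set being present exactly when $\polone=-$. The only design decisions are the contravariant flip of the polarity on the argument and the fact that the potential symbol $\h$ always sits in negative position, so it is specified precisely when $\polone=-$. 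This mirrors the recursion pattern already used for $\algder$.

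It then remains to check the three clauses. Primitivity of $\A$ and $\forget{\A}=\tpcfone$ follow immediately by induction from the construction. For the second clause the key observation --- and the reason weakening imposes no quantitative constraint whatsoever --- is that, for an arbitrary index context $\ictx$, the judgement $\sjudg{\fiv}{\ictx,a<0}{\A\teq\A}$ is \emph{vacuously} valid: no valuation $`r:\fiv\to\int$ can satisfy $a<0$ over the naturals, and since the subtyping rules of Figure~\ref{fig:tsub} only ever \emph{add} constraints to the context as one descends into a type, every leaf semantic judgement of a derivation of $\A\teq\A$ inherits $a<0$ in its context and is therefore vacuously valid, independently of whether the indexes occurring in $\A$ are defined. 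Hence this clause holds for every $\ep\supseteq\rep$, regardless of how (or whether) the symbols of $\A$ are specified.

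The third clause, that $\A$ is $\tspec{\polone}{∅}{\rep}$, is where care is needed, and I would establish it by an induction that tracks polarities. The base case is immediate. In the arrow case the induction hypotheses give that $`s_1$ is $\tspec{\neg\polone}{∅}{\rep_1}$ and $`s_2$ is $\tspec{\polone}{∅}{\rep_2}$; combining these with $\typsymb[+]{\A}=\typsymb[-]{`s_1}\cup\typsymb[+]{`s_2}$ and $\typsymb[-]{\A}=\{\h\}\cup\typsymb[+]{`s_1}\cup\typsymb[-]{`s_2}$, and with the decision to define $\h$ iff $\polone=-$, one checks that exactly the symbols of $\typsymb{\A}$ receive a rule while those of $\typsymb[\neg\polone]{\A}$ remain unspecified (freshness guarantees the two symbol sets are disjoint). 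Since every equation the algorithm ever introduces has the constant $0$ as its right-hand side, each specified symbol is $\ispec{∅}{\rep}$ outright, and this property is robust under the union $\rep_1\cup\rep_2$ because the sub-signatures are fresh and disjoint. I expect the main obstacle to be not a genuine difficulty but exactly this bookkeeping: one must verify that weakening never inadvertently specifies a negative symbol, since keeping the negative symbols of the produced judgement free is the invariant that the surrounding algorithm relies on for later instantiation.
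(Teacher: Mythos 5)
Your proposal is correct and takes essentially the approach the paper intends: the paper states this lemma without an explicit proof, describing $\algweak$ only as producing a fresh primitive annotation and specifying the $\polone$-polarity symbols ``in an arbitrary way'' (indeed via constant-$\zero$ equations, exactly as \mainfct\ later does with the fresh potentials $\h_i$ in its (\textit{Ax}) and (\textit{n}) cases), which is precisely what your recursion with the contravariant polarity flip implements. Your vacuity argument for $\sjudg{\fiv}{\ictx,a<0}{\A\teq\A}$ (no valuation over $\int$ satisfies $a<0$, and the subtyping rules only grow the constraint context) and your polarity bookkeeping for the third clause are both sound.
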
}
Observe how no sets of constraints is produced in output by $\algweak$, contrarily to $\algder$ and $\algcontr$.

\lv{
\paragraph{Some Extra Functions.}
There are also some extra auxiliary algorithms that are used by the function~\mainfct.
Although they are not related to some exponential rule of~\LL, they are defined in a similar way than the previous one.

\noindent
The following lemma is used for type checking the (\textit{Ax}) rule:
\begin{lemma}
  \label{lem:ax}
  There is  an algorithm $\algax$ such that, given two types~$`s$ and~$`t$ primitive for~\fiv\ and of same skeleton,
  $\algaxp{`t}{`s}{\fiv}{\ictx}{\polone}=(\rep;\scond)$ where,
  \begin{varitemize}
  \item
    for every $\ep\supseteq\rep$,\quad
    $\scval{\scond}\iff\sjudg{\fiv}{\ictx}{`t\teq`s}$;
  \item $`t$ is \tspec{\polone}{\typsymb[\neg\polone]{`s}}{\rep} and
    $`s$ is \tspec{\neg\polone}{\typsymb[\polone]{`t}}{\rep}.
  \end{varitemize}
\end{lemma}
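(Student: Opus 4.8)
The plan is to define $\algax$ by structural recursion on the common skeleton $\forget{`t}=\forget{`s}$, mirroring the dereliction algorithm $\algder$ of Lemma~\ref{lem:der} but \emph{without} the substitution $\isubst{\zero}$, since here both types range over the same index variables~$\fiv$. Since $`s$ and $`t$ are primitive modal types, each is either a base type $\NatU{\f(\fiv)}$ or a modal arrow $`t_1\marr{b<\f(\fiv)}`t_2$, so there are exactly two cases. For the base case, taking $\polone=+$,
$$
\algaxp{\NatU{\f(\fiv)}}{\NatU{\g(\fiv)}}{\fiv}{\ictx}{+}=(\{\f(\fiv)`=\g(\fiv)\},\{\cjudg{\fiv}{\ictx}{\g(\fiv)}\}),
$$
the single rewriting rule forces $\f$ and $\g$ to denote the same function in every $\ep\supseteq\rep$, so that $\NatU{\f(\fiv)}\teq\NatU{\g(\fiv)}$ holds precisely when that common value is defined, which is exactly what the side condition records. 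The case $\polone=-$ is symmetric, exchanging the roles of $`t$ and $`s$ in the rule and guarding the definedness of $\f$ instead.

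For the inductive case, on the call
$$
\algaxp{`t_1\marr{b<\f(\fiv)}`t_2}{`s_1\marr{b<\g(\fiv)}`s_2}{\fiv}{\ictx}{\polone},
$$
I would make two recursive calls under the enriched context $(\ictx,b<\f(\fiv))$: one on the domains $(`t_1,`s_1)$ at the \emph{flipped} polarity $\neg\polone$, and one on the codomains $(`t_2,`s_2)$ at the unchanged polarity $\polone$, returning $(\rep_1,\scond_1)$ and $(\rep_2,\scond_2)$. I then add a rule equating the two modalities together with its definedness obligation, and set $\rep=\rep_1\cup\rep_2\cup\{\g(\fiv)`=\f(\fiv)\}$ and $\scond=\scond_1\cup\scond_2\cup\{\cjudg{\fiv}{\ictx}{\f(\fiv)}\}$ for $\polone=+$ (symmetrically otherwise). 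The flip of polarity on the domain is forced by the contravariance of $\toLL$, exactly as in the domain subcalls of $\algder$: a symbol positive in $`t$ occurs negatively in $`t_1$, and the specification discipline must follow suit.

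Both properties are then established by induction on the skeleton. For property~1, the syntax-directed subtyping rules of Figure~\ref{fig:tsub} decompose $`t\teq`s$ into equality of the two modalities plus $\teq$ of the domains and of the codomains, all under the bound $b<\f(\fiv)$. In any $\ep\supseteq\rep$ the modality rule already forces the equality of the two modalities, so the only missing datum is their definedness, recorded by the fresh obligation; the domain and codomain equivalences are supplied by the induction hypothesis, whose side conditions are collected in $\scond_1,\scond_2$. This yields both directions of the iff at once: the $(\Leftarrow)$ direction because $\teq$ entails definedness everywhere it asserts an equality, and the $(\Rightarrow)$ direction because the rules together with the recorded definedness obligations reconstruct the equivalence. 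Note that, unlike in $\algder$, the statement quantifies over extensions $\ep\supseteq\rep$ rather than over arbitrary models, so no push-out or model-construction argument is needed here, which makes the iff comparatively direct.

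The main obstacle I expect is property~2, the polarity bookkeeping. The honest invariant maintained by the algorithm is the \emph{judgement-level} one: every symbol in positive position of the conclusion judgement ($\typsymb[-]{`s}\cup\typsymb[+]{`t}$ when $\polone=+$) receives a defining rule, and is $\ispec{\nsymb}{\rep}$ for the negative set $\nsymb=\typsymb[+]{`s}\cup\typsymb[-]{`t}$, while every symbol of $\nsymb$ stays unspecified. The delicate point is that the symbol to be \emph{defined} is the one lying positively in the conclusion, which is the head symbol of $`t$ in the base case but the modality of $`s$ in the arrow case, since a modality sits in negative position inside its own type; so the orientation of the base rule must be read off from the ambient polarity $\polone$, not from the (vacuously positive) polarity of a base symbol. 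One must then check that, after unioning the two subprograms and adding the modality rule, the dependency relation induced by $\rep$ remains well founded so that $\ispec{\cdot}{\cdot}$ — whose definition permits mutual dependence between a defined symbol and those in its right-hand side — actually holds for every positive symbol; the domain subcall, run at the opposite polarity, feeds symbols back toward the codomain side through the shared modality, and keeping this acyclic is where care is required. A secondary point is to extend the index contexts of both recursive calls consistently with $b<\f(\fiv)$, so that once $\g`=\f$ is added the bounds $b<\f(\fiv)$ and $b<\g(\fiv)$ coincide and the collected side conditions stay meaningful. Finally, as for $\algder$, the recursion visits each node of the skeleton once, so $\algax$ runs in time linear in $|\forget{`t}|$.
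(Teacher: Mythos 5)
Your proposal is correct and is essentially the paper's own argument: the paper states Lemma~\ref{lem:ax} \emph{without} proof, as one of the extra auxiliary algorithms ``defined in a similar way'' to the exponential ones, and your construction is exactly the proof of Lemma~\ref{lem:der} with the substitution $\isubst{\zero}$ stripped out --- structural recursion on the common skeleton, an oriented rewriting rule plus a definedness obligation at base types, a modality rule and a domain-side polarity flip at arrows --- together with the correct observation that no model/push-out step is needed here, since the first item only quantifies over extensions $\ep\supseteq\rep$ rather than asserting the existence of an interpretation $\mapping:\ep\models\rep$.
Your ``honest invariant'' is moreover the right reading of the second item: as in the third item of Lemma~\ref{lem:der} and in the definition of correctly specified judgements, the symbols that receive rules must be $\ispec{\nsymb}{\rep}$ for $\nsymb$ the set of symbols left \emph{unspecified} (so for $\polone=+$, the symbols of $\typsymb[+]{`t}$ are specified relative to $\typsymb[+]{`s}$, not to $\typsymb[-]{`s}$); the polarities as literally printed in the statement cannot hold together with the first item (a program whose rules never mention the unspecified symbols cannot force $`t\teq`s$ in \emph{every} extension $\ep$), so the variant you prove is the one the paper actually needs and uses in the (\textit{Ax}) case of \mainfct.
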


We also need a generalisation of the dereliction algorithm, where any index~\I\ can be used instead of~\zero\ in~$\algder$.
It is necessary for type checking a fix point.
\begin{lemma}
  \label{lem:sbs}
  There is  an algorithm $\algsub$ such that, for any type~$`t$ primitive for~$\fiv$ and~$`s$ primitive for~$(\fiv,a)$ that have the same skeleton,
  $\algsubp{`s}{`t}{a}{\I}{\fiv}{\ictx}{\polone}=(\rep,\scond)$ where,
  \begin{varitemize}
  \item
    for any~$\ep\supseteq\rep$,\quad
    $\scval{\scond}\iff\sjudg{\fiv}{\ictx}{`s\isubst{\I}\teq`t}$;
  \item
    $`s$ is \tspec{\polone}{\typsymb[\polone]{`t}}{\rep} and
    $`t$ is \tspec{\neg\polone}{\typsymb[\neg\polone]{`s}}{\rep}.
  \end{varitemize}
\end{lemma}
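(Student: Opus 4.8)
The plan is to define $\algsub$ by structural recursion on the common skeleton of $`s$ and $`t$, copying the construction of $\algder$ from the proof of Lemma~\ref{lem:der} essentially verbatim, the only change being that the index $\I$ replaces the constant $\zero$ wherever $\algder$ instantiates the substituted variable~$a$. On base types this reads
\begin{align*}
  \algsubp{\NatU{\f(a,\fiv)}}{\NatU{\g(\fiv)}}{a}{\I}{\fiv}{\ictx}{+}&=(\{\g(\fiv)`=\f(\I,\fiv)\},\{\cjudg{\fiv}{\ictx}{\f(\I,\fiv)}\}),\\
  \algsubp{\NatU{\f(a,\fiv)}}{\NatU{\g(\fiv)}}{a}{\I}{\fiv}{\ictx}{-}&=(\{\f(a,\fiv)`=\g(\fiv)\},\{\cjudg{\fiv}{\ictx}{\g(\fiv)}\}),
\end{align*}
and on an arrow $`s=`s_1\marr{b<\f(a,\fiv)}`s_2$, $`t=`t_1\marr{b<\g(\fiv)}`t_2$ I recurse on the two component pairs $(`s_1,`t_1)$ and $(`s_2,`t_2)$ with the polarities and index contexts assigned as in the arrow clause of $\algder$ --- the bound in the extended context becoming $b<\f(\I,\fiv)$ in the $+$ case and $b<\g(\fiv)$ in the $-$ case --- and finally adjoin the head-symbol equation together with its definedness side condition, as displayed above. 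Note that the negative clauses coincide verbatim with those of $\algder$: there $\f$ is set constant in its first argument, so $\f(\I,\fiv)=\g(\fiv)$ holds independently of~$\I$ and the obligation collapses to $\cjudg{\fiv}{\ictx}{\g(\fiv)}$.

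Next I would prove the biconditional, for each fixed $\ep\supseteq\rep$, by induction on the skeleton, taking $\polone=+$ (the case $\polone=-$ is symmetric). This is an equivalence for every extension \ep, in the shape of Lemma~\ref{lem:ax}, rather than the model-based pair of clauses of $\algder$, so no push-out construction is needed. In the base case the adjoined rule makes $\g(\fiv)=\f(\I,\fiv)$ hold in~\ep, and the interval rule of Figure~\ref{fig:tsub} turns $\sjudg{\fiv}{\ictx}{`s\isubst{\I}\teq`t}$ into the demand that $\f(\I,\fiv)$ and $\g(\fiv)$ be defined and equal; since their equality is already guaranteed, this is equivalent to the validity of the single side condition $\cjudg{\fiv}{\ictx}{\f(\I,\fiv)}$. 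In the arrow case I unfold $\sjudg{\fiv}{\ictx}{`s\isubst{\I}\teq`t}$ through the modal and arrow rules of Figure~\ref{fig:tsub} into the bound equality $\f(\I,\fiv)=\g(\fiv)$ together with $\sjudg{(b,\fiv)}{(b<\f(\I,\fiv),\ictx)}{`s_i\isubst{\I}\teq`t_i}$ for $i\in\{1,2\}$: the equality is discharged exactly as in the base case, and the two body equivalences are, by the induction hypothesis applied to the recursive calls (whose programs lie inside~\ep\ since $\ep\supseteq\rep$), equivalent to the validity of the side conditions those calls contributed. Conjoining the pieces yields the stated iff.

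Finally, the specification invariant I would establish by the same induction that underlies clause~3 of Lemma~\ref{lem:der}: the adjoined equation supplies a rule for the head symbol of~$`t$ (in the $+$ case) or of~$`s$ (in the $-$ case) whose right-hand side mentions only permitted symbols, while the recursive calls maintain, for the component types, the correct-specification discipline of Definition~\ref{def:specif}. The one feature genuinely new with respect to $\algder$ is that the right-hand side $\f(\I,\fiv)$ now contains, besides~$\f$, every function symbol occurring in~$\I$, and correspondingly the obligation $\cjudg{\fiv}{\ictx}{\f(\I,\fiv)}$ silently requires~$\I$ itself to be defined. I therefore expect the only delicate point --- beyond a routine transcription of the $\algder$ argument --- to be checking that this extra dependence on the symbols of~$\I$ respects Definition~\ref{def:specif}, i.e. that at the call site of $\algsub$ (inside the (\textit{Fix}) rule, where $\I$ is the index describing the tree of recursive calls) the symbols of~$\I$ are already among those the generated program is entitled to depend on, so that correct specification is preserved.
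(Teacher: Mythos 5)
Your proposal matches what the paper actually does: Lemma~\ref{lem:sbs} is stated there \emph{without proof}, described only as ``a generalisation of the dereliction algorithm, where any index~$\I$ can be used instead of~$\zero$ in~$\algder$'', and your construction --- the $\algder$ recursion with $\I$ in place of $\zero$, the per-extension iff proved directly by induction on the skeleton (correctly observing that, as in Lemma~\ref{lem:ax}, no model/push-out construction is needed, unlike clause~2 of Lemma~\ref{lem:der}), and the preservation of the specification invariant --- is precisely that generalisation, carried out at the same level of rigour as the paper's own proof of Lemma~\ref{lem:der}. You have also correctly isolated the one genuinely new point, namely that the generated rules and obligations now mention the function symbols occurring in~$\I$, so that the discipline of Definition~\ref{def:specif} survives only if those symbols are legitimately available at the unique call site in the (\textit{Fix}) case --- a subtlety that the paper's own statement of the lemma silently glosses over.
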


The following lemma is needed to type check the (\textit{If}) rule.
It requires to use a function symbol $\mathrm{if}(a,b,c)$ with two equations:\quad
$\mathrm{if}(\zero,b,c)=b$\ \ and\ \ $\mathrm{if}(a+\one,b,c)=c$.
\begin{lemma}
  \label{lem:if}
  There is  an algorithm $\algif$ such that for any types~$`t_1,`t_2$ with the same skeleton, that are primitive for~$\fiv$,\\
  $\algifp{`t_1}{`t_2}{\I}{\fiv}{\ictx}{\polone}=(`s,\rep,\scond)$ where,
  \begin{varitemize}
  \item $`s$ is primitive for $(a,\fiv)$, and
    $\forget{`s}=\forget{`t_i}$;
  \item
    For any equational program $\ep\supseteq\rep$, \quad
    $\scval{\scond}$ \textit{iff} \\
    $\sjudg{\fiv}{\ictx,\I=\zero}{`t_1\teq`s}$ and
    $\sjudg{\fiv}{\ictx,\I\geq\one}{`t_2\teq`s}$;
  \item
    $`s$ is \tspec{\polone}{\typsymb[\polone]{`t_1}`U\typsymb[\polone]{`t_2}}{\rep},
    and $`t_1$ and $`t_2$ are \tspec{\neg\polone}{\typsymb[\neg\polone]{`s}}{\rep}.
  \end{varitemize}
\end{lemma}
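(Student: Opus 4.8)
The plan is to define $\algif$ by structural recursion on the common skeleton of $`t_1$ and $`t_2$, following closely the recursive pattern of $\algder$ in Lemma~\ref{lem:der}; the only genuinely new ingredient is the distinguished symbol $\mathrm{if}$, whose two defining equations $\mathrm{if}(\zero,b,c)=b$ and $\mathrm{if}(a+\one,b,c)=c$ encode exactly the selection between the two branches according to whether the guard $\I$ is zero. The polarity $\polone$ orients the generated equations in the manner prescribed by property~3, and the whole construction can be summarised by one principle: a symbol of $`s$ lying in $\typsymb[\polone]{`s}$ is \emph{specified} from the matching symbols of $`t_1$ and $`t_2$ through $\mathrm{if}(\I,-,-)$, whereas a symbol of $`s$ lying in $\typsymb[\neg\polone]{`s}$ is left \emph{unspecified} and the matching symbols of $`t_1$ and $`t_2$ are instead defined to be equal to it.

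First I would dispatch the base case. For $`t_1=\NatU{\f_1(\fiv)}$ and $`t_2=\NatU{\f_2(\fiv)}$, I set $`s=\NatU{\g(\fiv)}$ for a fresh $\g$ and put, when $\polone=+$,
\begin{displaymath}
  \algifp{\NatU{\f_1(\fiv)}}{\NatU{\f_2(\fiv)}}{\I}{\fiv}{\ictx}{+}=\big(\NatU{\g(\fiv)},\ \{\g(\fiv)`=\mathrm{if}(\I,\f_1(\fiv),\f_2(\fiv))\},\ \scond\big),
\end{displaymath}
with $\scond=\{\cjudg{\fiv}{\ictx,\I=\zero}{\f_1(\fiv)},\ \cjudg{\fiv}{\ictx,\I\geq\one}{\f_2(\fiv)}\}$; when $\polone=-$ the symbol $\g$ is kept unspecified and I emit instead the rules $\f_1(\fiv)`=\g(\fiv)$ and $\f_2(\fiv)`=\g(\fiv)$ together with $\scond=\{\cjudg{\fiv}{\ictx,\I=\zero}{\g(\fiv)},\ \cjudg{\fiv}{\ictx,\I\geq\one}{\g(\fiv)}\}$. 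Here the defining equations of $\mathrm{if}$ do all the work: under $\ictx,\I=\zero$ one derives $\mathrm{if}(\I,\f_1,\f_2)=\f_1$, so that $\sjudg{\fiv}{\ictx,\I=\zero}{`t_1\teq`s}$ holds precisely when $\f_1(\fiv)$ is defined, and symmetrically $\mathrm{if}(\I,\f_1,\f_2)=\f_2$ under $\ictx,\I\geq\one$ ties $\sjudg{\fiv}{\ictx,\I\geq\one}{`t_2\teq`s}$ to definedness of $\f_2(\fiv)$; these are exactly the two members of $\scond$.

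The inductive step treats arrow types $`t_1=`u_1\marr{b<\f_1(\fiv)}`v_1$ and $`t_2=`u_2\marr{b<\f_2(\fiv)}`v_2$, producing $`s=`x_1\marr{b<\g(\fiv)}`x_2$. As in $\algder$ I recurse on the codomains $`v_1,`v_2$ with the same polarity $\polone$, and on the domains $`u_1,`u_2$ with the opposite polarity $\neg\polone$ (arrows being contravariant in their source), in both cases extending the index context with the bound $b$ under the appropriate guard, and I treat the potential symbol $\g$ exactly as a base case (it is oriented by $\polone$, using $\mathrm{if}$ when $\polone=-$ and the two identifications $\f_1`=\g,\ \f_2`=\g$ when $\polone=+$). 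Taking $\rep$ and $\scond$ to be the union of the rules and side conditions returned by the two recursive calls and by the treatment of $\g$ yields the output. Primitivity of $`s$ and the identity $\forget{`s}=\forget{`t_1}=\forget{`t_2}$ are immediate from the construction, and the specification clause (property~3) follows from the induction hypotheses together with the orientation principle above: every rule defines a symbol of the side that must be specified using only symbols of the side left unspecified, the fixed symbol $\mathrm{if}$, and the guard $\I$.

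It remains to establish the equivalence of property~2, which I would prove, for a fixed $\ep\supseteq\rep$, by the same induction on the skeleton. Type equivalence of two primitive types unfolds into a conjunction of index equalities, one per symbol position, and at each position the corresponding rule of $\rep$ rewrites the demanded equality into a plain definedness assertion --- through the $\mathrm{if}$ equations at the positions governed by $\mathrm{if}$, and through the identifications $\f_i=\g$ at the remaining positions --- which is precisely the side condition deposited in $\scond$. Consequently $\scval{\scond}$ holds if and only if all these equalities are derivable, i.e. if and only if both $\sjudg{\fiv}{\ictx,\I=\zero}{`t_1\teq`s}$ and $\sjudg{\fiv}{\ictx,\I\geq\one}{`t_2\teq`s}$ hold. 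I expect the main obstacle to be the contravariant positions: there $`t_1$ and $`t_2$ must be \emph{identified with} the relevant symbol of $`s$ rather than \emph{selected from} by $\mathrm{if}$ --- reflecting that an argument supplied to the conditional is the very same value whichever branch is taken --- and one has to check that forcing $\f_1$ and $\f_2$ to a common $\g$ yields exactly the required bi-implication against the two \emph{distinct} contexts $\ictx,\I=\zero$ and $\ictx,\I\geq\one$, with neither a loss nor a spurious strengthening of precision.
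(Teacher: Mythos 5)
Your proposal is correct and follows exactly the route the paper intends: the paper states this lemma without proof, indicating only that $\algif$ is ``defined in a similar way'' to $\algder$ (Lemma~\ref{lem:der}) and that it hinges on the symbol $\mathrm{if}$ with the equations $\mathrm{if}(\zero,b,c)=b$ and $\mathrm{if}(a+\one,b,c)=c$, which is precisely your recursion-on-skeleton construction. Your polarity bookkeeping --- selection via $\mathrm{if}(\I,-,-)$ for the symbols that must be specified, identifications $\f_1(\fiv)`=\g(\fiv)$, $\f_2(\fiv)`=\g(\fiv)$ for those that must stay unspecified, with the contravariant flip on domains and on potentials --- matches the invariants of Definition~\ref{def:specif} and the way $\algif$ is invoked (with polarities $+$ on the result type and $-$ on the context types) in the (\textit{If}) case of \mainfct.
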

}


\subsection{The Type Inference Procedure}
\label{sec:main-fct}

In this section, we will describe the core of our type inference
algorithm.
This consists in a recursive algorithm \mainfct\ which decorates a \PCF\ type derivation \derpcf, producing in output a
\dlpcf\ judgement, together with an equational program and a set of
side conditions. In order to correctly create fresh symbols and to format 
side conditions properly, the main recursive function~\mainfct\ also receives 
a set of index variables~$\fiv$ and a set of constraints~$\ictx$ in input.
Thus, it has the following signature:
\begin{displaymath}
  \mainfct(\fiv,\ictx,\derpcf)\;=\;(\ljudg{`G}{\I}{t}{`t};\rep;\scond).
\end{displaymath}
We will prove that the the output of $\mainfct$ satisfies the following two invariants:
\begin{varitemize}
\item
  \emph{Decoration}. $\forget{`G}\vdash\t:\forget{`t}$ is the judgement concluding~\derpcf.
\item
  \emph{Polarity}. \judg[\rep]{\fiv}{\ictx}{`G}{\I}{\t}{`t} is correctly specified (Definition~\ref{def:specif}).
\end{varitemize}
The algorithm~\mainfct\ proceeds by inspecting~\derpcf\ in an inductive manner.
It first annotates the types in the conclusion  judgement with fresh function symbols to get a \dlpcf\ judgement~\typjudg.
Then a recursive call is performed on the  immediate sub-derivations of $\derpcf$, this way obtaining 
some \dlpcf\ typing judgement~$\typjudg_i$. Finally \mainfct\ generates, calling the auxiliary algorithms, 
the equations on function symbols that allow to derive~\typjudg\ from the~$\typjudg_i$'s, 
The equations are written in~\rep, and the required assumptions of index convergence in~\scond.

Decoration and Polarity are the invariants of the algorithm~\mainfct.
In particular, the auxiliary algorithms are always called with the appropriate parameters,
this way enforcing Polarity. 

The algorithm computing~\mainfct\ proceeds by case analysis on~\derpcf.
\sv{We detail some cases here, the other ones are developed in~\cite{LV}.}
\begin{varitemize}
\lv{
\item 
  Assume
  \begin{math}\displaystyle
    \derpcf=\frac{}{y_1:U_1,\dots,y_k:U_k,x:T\vdash x:T}\lv{(\textit{Ax})}~.
  \end{math}
  \\
  For each~$i$, let $\B_i=\anot{U_i}$ and
  $\rep_i=\algweakp{\B_i}{\fiv}{b_i}{-}$
  (where all the $b_i$'s   are fresh).
  Then let~$`s$ and~$`t$ produced by \anot{T}, and write
  $(\rep_0;\scond)=\algaxp{`t}{`s}{\fiv}{\ictx}{+}$.
  Thus return
  $(\ljudg{`G,x:`s}{\zero}{x}{`t};\rep;\scond)$ with

  \hfil
  \begin{math}
    \left\{
    \begin{array}{c@{~=~}l}
      \rep & \rep_0`U\bigcup_{i\leq k}(\rep_i`U\{\h_i`=\zero\}) \\
      `G & \{y_i:\mtyp[b_i]{\h_i}{\B_i}\}_{i\leq k}\\
    \end{array}
    \right.
  \end{math}
  \\ where the $\h_i$'s are fresh.
}
\item 
  Suppose 
  \begin{math}\displaystyle
    \derpcf=\frac{}{y_1:U_1,\dots,y_k:U_k\vdash \nb:\NatPCF}\lv{(n)}~.
  \end{math}
  \\
  \lvsv{Again,}{For each~$i$,} let $\B_i=\anot{U_i}$ and
  $\rep_i=\algweakp{\B_i}{\fiv}{b_i}{-}$ 
  \sv{(where all the $b_i$'s are fresh)}. Let $\i(\fiv)$ be a fresh function symbol.
  Then return $(\ljudg{`G}{\zero}{\nb}{\NatU{\i(\fiv)}};\rep;∅)$ 
  where the $\h_i$'s are fresh symbols and
  \begin{align*}
      \rep &= \bigcup_i(\rep_i`U\{\h_i(\fiv)`=\zero\})
      `U\{~\i(\fiv)`=\inb~\};\\
      `G &= \{y_i:\mtyp[b_i]{\h_i(\fiv)}{\B_i}\}_{i\leq k}.
    \end{align*}
%
\lv{
\item 
  If 
  \begin{math}\displaystyle
    \derpcf=\frac{(\derpcf'):
      \begin{array}[b]{c}
        \vdots \\
        `P\vdash\t:\NatPCF\quad
      \end{array}
    }{`P\vdash\suc(\t):\NatPCF}(s)~.
  \end{math}
  \\
  Let
  $(\ljudg{`G}{\K}{\t}{\NatU{\j(\fiv)}};\rep_0;\scond)=
  \mainfct(\derpcf',\fiv,\ictx)$, and
  let $\i(\fiv)$ be a fresh symbol.
  Then return 
  $(\ljudg{`G}{\K}{\suc(\t)}{\NatU{\i(\fiv)}};\rep;\scond)$ where
  $\rep = \rep_0`U\{~\i(\fiv)`=\j(\fiv)+\one~\}$.
\item 
  For the typing rule~(\textit{p}), we do the same as previously but with $\i(\fiv)=\j(\fiv)\mnu\one$ instead of $\i(\fiv)=\j(\fiv)+\one$ in the equational program.

\item 
  If \derpcf=
  \begin{math}\displaystyle
    \frac{({\derpcf}_0):
      \begin{array}[b]{c}
        \vdots \\
        `P\vdash\t:\NatPCF
      \end{array}
      \quad
      \begin{array}[b]{lc}
        & \vdots \\
        ({\derpcf}_1): & `P\vdash\u_1:T \\
      \end{array}
      \quad
      \begin{array}[b]{lc}
        & \vdots \\
        ({\derpcf}_2): & `P\vdash\u_2:T
      \end{array}
    }{`P\vdash\ifz{\t}{\u_1}{\u_2}:T}(\textit{If})~,
  \end{math}
   
  let $(\ljudg{`G}{\K}{\t}{\NatU{\j(\fiv)}};\rep_0;\scond_0)=
  \mainfct({\derpcf}_0,\fiv,\ictx)$,
  and $(\ljudg{`D_i}{\H_i}{\u_i}{`t_i};\rep_i;\scond_i)=
  \mainfct({\derpcf}_i,\fiv,\ictx_i)$ (for $i=1,2$),
  with $\ictx_1=(\ictx,\j(\fiv)=\zero)$ and
  $\ictx_2=(\ictx,\j(\fiv)\geq\one)$.
  $`t_1$ and~$`t_2$ both have skeleton~$T$ and are primitive for~$\fiv$ (since Decoration and Polarity hold by induction hypothesis).
  So we can compute\quad
  $(`t';\rep';\scond')=\algifp{`t_1}{`t_2}{\j(\fiv)}{\fiv}{\ictx}{+}$.
  \\
  In the same way, $`D_1$ and $`D_2$ have the same skeleton~$`P$.
  So for each $(y_i:`s_i)$ in~$`D_1$ there is $(y_i:`s_i')$ in~$`D_2$ so that we can compute
$(`s_i'',\rep_i',\scond_i')=\algifp{`s_i}{`s_i'}{\j(\fiv)}{\fiv}{\ictx}{-}$.
  Let $`D=\{y_i:`s_i''\}$ and write $\mtyp[a_i]{\m_i}{\C_i}=`s_i''$.
  $`D$ also has skeleton~$`P$, and so has~$`G$.
  So for each~$i$, there is some variable declaration $y_i:\mtyp[a_i]{\n_i}{\B_i}$ in~$`G$, so that we can compute
$(\A_i,\rep_i'',\scond_i'')=\algcontrp{\B_i}{\C_i}{\n_i}{\m_i}{a_i}{\fiv}{\ictx}{-}$.

  Then let $\h_i(\fiv)$ be a fresh symbol for each~$i$, and return
  $(\ljudg{`G'}{\K+\H}{\ifz{\t}{\u_1}{\u_2}}{`t'};\rep;\scond)$ where
  
  \hfil
  \begin{math}
    \begin{array}[t]{ccl}
      \scond &=& \scond_0 `U \scond_1 `U \scond_2`U\scond'
      `U\bigcup_i(\scond_i'`U\scond_i'')
      \\
      \rep &=& \rep_0 `U \rep_1 `U \rep_2`U\rep'
      `U \{~\h_i(\fiv)`=\n_i+\m_i~\}_i \\
      && `U\bigcup_i(\rep_i'`U\rep_i'')
      \\
      `G' &=& \{y_i:\mtyp[a_i]{\h_i(`f)}{\A_i}\}_i\\
      \H &=& \mathrm{if}(\j(\fiv),\H_1,\H_2) \\
    \end{array}
  \end{math}
}
  
\item 
  If \derpcf\lvsv{=}{ is on the form\\}
  $$
  \infer[]
  {`P\vdash\t~\u:T}
  {
        \derpcf^1:`P\vdash\t:U\toPCF T
    & 
        \derpcf^2: `P\vdash\u:U 
   }
   $$
  let  
  $(\ljudg{`G_1}{\K}{\t}{\arrv{\f(\fiv)}{`s_1}{`s_2}};\rep_1;\scond_1)=
  \mainfct(\fiv;\ictx;{\derpcf}_1)$, 
  and
  $(\ljudg{`G_2}{\H}{\u}{`t} ;\ \rep_2 ;\ \scond_2)=
  \sv{\linebreak[4]}
  \mainfct(\fiv;\ictx;{\derpcf}_2)$.
  Let $(\reptwo,\scondtwo)=\algderp{`s_1}{`t}{a}{\fiv}{\ictx}{+}$.
  We then annotate~$T$: let $`t_2=\anot{T}$, and
  let $(\repthree,\scondthree)=\algderp{`s_2}{`t_2}{a}{\fiv}{\ictx}{-}$.
  Then we build a context equivalent to $`G_1\uplus`G_2$:
  \sv{by the decoration property, $`G_1$ and $`G_2$ have the same skeleton~$`P$, so}
  for any $y:\mtyp[b_y]{\i_y(\fiv)}{\B_y}$ in~$`G_1$, 
  there is some   $y:\mtyp[b_y]{\j_y(\fiv)}{\C_y}$ in~$`G_2$
  (possibly after some~$`a$-conversion).
  Then let $\A_y=\anot[(b_y,\fiv)]{\forget{\B_y}}$, and
  $(\rep_y;\scond_y)=
  \algcontrp{\A_y}{\B_y}{\C_y}{\i_y(\fiv)}{\j_y(\fiv)}{b_y}{\fiv}{\ictx}{-}$.
  There are $`D_i\teq`G_i$ (for $i=1,2$)
  such that
  $\sjudg{\fiv}{\ictx}{\{y:\mtyp[b_y]{\i_y(\fiv)+\j_y(\fiv)}{\A_y}\}_{y}\teq`D_1\uplus`D_2}$,
  for every
  $\ep\supseteq\bigcup_y\rep_y$ such that
  $\scval{(\bigcup_y\scond_y)}$. 
  Thus let $\h_y$'s be fresh symbols and return
  $(\ljudg{`D}{\K+\H}{\t\,\u}{`t_2};\rep;\scond)$ with
%
  \begin{align*}
      \scond =&\; \scond_1 `U \scond_2 `U \scondtwo `U \scondthree \\
      &`U\bigcup_{y}\big(\scond_y\cup\{\cjudg{\fiv}{\ictx}{\i_y(\fiv)+\j_y(\fiv)}\}\big);\\
      \rep =&\; \rep_1`U\rep_2`U\reptwo`U\repthree`U\{\f(\fiv)`=\one\} \\
      &`U\bigcup_y\big(\rep_y\cup\{\h_y(\fiv)`=\i_y(\fiv)+\j_y(\fiv)\}\big);\\
      `D =&\; \{y:\mtyp[b_y]{\h_y(\fiv)}{\A_y}\}_{y}.
 \end{align*}
%
\item 
  Assume that $\derpcf$ is
  $$
  \infer[]
  {`P\vdash`lx.\t:U\toPCF T}
  {\derpcf':`P,x:U\vdash\t:T}
  $$
  Let $a$ be a fresh index variable, and~$\i(\fiv)$ be a fresh function symbol, and compute
  $(\ljudg{`G,x:`s}{\K}{\t}{`t};\reptwo;\scondtwo)=
  \mainfct(\derpcf',(a,\fiv),(a<\i(\fiv),\ictx))$.
  We build a context equivalent to $\sum_{a<\i(\fiv)}`G$:
  for every $y:\mtyp[b_y]{\j_y(a,\fiv)}{\B_y}`:`G$,
  let \sv{\linebreak[4]}$A_y=\anot[(b_y,\fiv)]{\forget{\B_y}}$, 
  let~$\h_y(\fiv)$ be a fresh symbol, 
  and write
  $(\rep_y,\scond_y)=\algdigp{\A_y}{\B_y}{\i(\fiv)}{\j_y(a,\fiv)}{\fiv}{a}{b_y}{\ictx}{-}$.
  Then return 
  $$
  (\ljudg{`D}{\i(\fiv)+\sum_{a<\i(\fiv)}\K}{`lx.\t}{\arrv{\i(\fiv)}{`s}{`t}};\rep;\scond)
  $$  
  where
  %
  \begin{align*}
    \scond &= \scondtwo \cup\bigcup_y(\scond_y\cup\{\ijudg{\fiv}{\ictx}{\sum_{a<\i(\fiv)}\j_y(a,\fiv)`|}\});\\
    \rep &=\reptwo\cup\cup_y(\rep_y\cup\{\h_y(\fiv)`=\sum_{a<\i(\fiv)}\j_y(a,\fiv)\});\\
      `D &= \{y:\mtyp[b_y]{\h_y(\fiv)}{\A_y}\}_y.
  \end{align*}
\lv{
\item 
  If
  \begin{math}\displaystyle
    \derpcf=\frac{(\derpcf'):
      \begin{array}[b]{c}
        \vdots \\
        `P,x:T\vdash\t:T
      \end{array}
    }{`P\vdash\fix{\t}:T}(\textit{Fix})
  \end{math}~.
  \\
  Let $b$ be a fresh index variable, and~$\h(\fiv)$ be a fresh function symbol, and compute
  $(\ljudg{`G,x:\mtyp{\i(b,\fiv)}{\A}}{\J}{\t}{\mtyp{\m(b,\fiv)}{\B}};\rep';\scond')=
  \mainfct(\derpcf',(b,\fiv),(b<\h(\fiv),\ictx))$
  (up to $`a$-conversion we can assume that the index variable~$a$ is   the same in the context and the type).
  Because of the type equations we have to ensure (namely
  \sjudg{(a,b,\fiv)}{(a<\i(b,\fiv),b<\h(\fiv),\ictx)}{\B\isubst{\zero}\isubst[b]{\one+b+\fc{b+1}{a}{\i(b,\fiv)}}\teq\A}), 
  we will use intermediate types.
  \\
  First we annotate~$T$: let $`s=\anot[b,\fiv]{T}$.
  Up to $`a$-conversion, $`s=\mtyp{\l(b,\fiv)}{\C}$ for some~\C.
  Let
  $(\rep_1,\scond_1)=\algderp{\B}{\C}{a}{(b,\fiv)}{b<\h(\fiv),\ictx}{-}$.
  By Lemma~\ref{lem:der}, \C\ is \tspec{+}{\typsymb[+]{\B}}{\rep_1}
  and \B\ is \tspec{-}{\typsymb[-]{\C}}{\rep_1}, and 
  for any~$\ep\supseteq\rep_1$, 
  \begin{displaymath}
    \scval{\scond_1} \quad\iff\quad
    \sjudg{b,\fiv}{b<\h(\fiv),\ictx}{\B\isubst{\zero}\teq\C}~.
  \end{displaymath}
  Now let
  $(\rep_2,\scond_2)=
  \algsubp{\C}{\A}{b}{\one+b+\fc{b+1}{a}{\i(b,\fiv)}}{(a,b,\fiv)}{a<\i(b,\fiv),b<\h(\fiv),\ictx}{-}$.
  By Lemma~\ref{lem:sbs}, \A\ is \tspec{+}{\typsymb[+]{\C}}{\rep_2}
  and \C\ is \tspec{-}{\typsymb[-]{\A}}{\rep_2}, and 
  for any~$\ep\supseteq\rep_2$,\ $\scval{\scond_2}$ iff
  \begin{displaymath}
    \sjudg{a,b,\fiv}{a<\i(b,\fiv),b<\h(\fiv),\ictx}{
      \C\isubst[b]{\one+b+\fc{b+1}{a}{\i(b,\fiv)}}\teq\A}~. 
  \end{displaymath}
  \TODO{Here}
  
  Hence $\ep_1`U\ep_2$ is positively defined for~\A\ and negatively for~\B, and~$\ep_1`U\ep_2``(=\eptwo\vDash(\scond_1,\scond_2)$ implies
  \begin{displaymath}
    \sjudg[\eptwo]{a,b,\fiv}{a<\i(b,\fiv),b<\h(\fiv),\ictx}{\B\isubst{\zero}\isubst[b]{\one+b+\fc{b+1}{a}{\i(b,\fiv)}}\teq\A}.
  \end{displaymath}
  Again we annotate~$T$: let $`t=\anot[\fiv]{T}$.
  Up to $`a$-conversion, $`t=\mtyp{\k(\fiv)}{\D}$ for some~\D.
  Let $(\ep_3,\scond_3)=
  \algsubp{\C}{\D}{b}{\fc{\zero}{a}{\i(b,\fiv)}}{(a,\fiv)}{(a<\k,\ictx)}{-}$.
  Then $\ep_3$ is positively specified for~\D\ and negatively for~\A, and if~$\ep_3``(=\eptwo\vDash\scond_3$ then 
  \begin{displaymath}
    \sjudg[\eptwo]{a,\fiv}{a<\k,\ictx}{\C\isubst[b]{\fc{\zero}{a}{\i(b,\fiv)}}\teq\D}.
  \end{displaymath}
  Hence $\ep_1`U\ep_2`U\ep_3$ is positively defined for~\D\ and negatively for~\B, and for any~$\eptwo``)=\ep_1`U\ep_2`U\ep_3$, $\eptwo\vDash(\scond_1,\scond_2,\scond_3)$ implies
  \begin{math}
    \sjudg[\eptwo]{a,\fiv}{a<\k,\fc{\zero}{a}{\i(b,\fiv)}<\h(\fiv),\ictx}{
      \B\isubst{\zero}\isubst[b]{\fc{\zero}{a}{\i(b,\fiv)}}\teq\D}.
  \end{math}
  Hence for any ~$\eptwo``)=\ep_1`U\ep_2`U\ep_3`U\{\h(\fiv)=\fc{\zero}{\k(\fiv)}{\i(b,\fiv)}\}$ such that $\eptwo\vDash(\scond_1,\scond_2,\scond_3)$ and \ijudg[\eptwo]{\fiv}{\ictx}{\fc{\zero}{\k(\fiv)}{\i(b,\fiv)}`|},
  \begin{displaymath}
    \sjudg[\eptwo]{\fiv}{\ictx}{
      \mtyp{\k}{\B\isubst{\zero}\isubst[b]{\fc{\zero}{a}{\i(b,\fiv)}}}\teq\mtyp{\k}{\D}}.
  \end{displaymath}
  To deal with the context, we proceed as for~($\multimap$):
  for every $y:\mtyp[b_y]{\J_y}{\B_y}`:`G$, let $(\A_y,\ep_y,\scond_y)=\algdigp{\B_y}{\h(\fiv)}{\J_y}{\fiv}{b}{b_y}{\ictx}{-}$.
  By Lemma~\ref{lem:digging}, $\ep_y$ is positively specified for~$\B_y$ and negatively for~$\A_y$, and $\ep_y``(=\eptwo\vDash\scond_y$ implies
  \begin{displaymath}
    \sjudg[\eptwo]{\fiv}{\ictx}{\mtyp[b_y]{\sum_{b<\h(\fiv)}\J_y}{\A_y}\teq
      \sum_{b<\h(\fiv)}\mtyp[b_y]{\J_y}{\B_y'}}
  \end{displaymath}
  for some $\B_y'$ such that
  $\sjudg[\eptwo]{\fiv}{\ictx,b<\h(\fiv),b_y<\J_y}{\B_y'\teq\B_y}$.
  Let~$\l_y(\fiv)$ be a fresh symbol, and write $\ep_y'=\ep_y\cup\{\l_y(\fiv)=\sum_{b<\h(\fiv)}\J_y\}$ and $\scond_y'=\scond_y\cup\{\ijudg[]{\fiv}{\ictx}{\sum_{b<\l(\fiv)}\J_y`|}\}$.

  Finally we return $(\der,\scond)$ where

  \hfil
  \begin{math}
    \begin{array}[t]{c@{~=~}l}
      \scond & \scond'`U\scond_1`U\scond_2`U\scond_3`U
      \{\ijudg[]{\fiv}{\ictx}{\fc{\zero}{\k(\fiv)}{\i(b,\fiv)}`|}\}`U
      \bigcup_y \scond_y' \\
      \der & \displaystyle
      \frac{\der'}{
        \judg{\fiv}{\ictx}{`D}{\h(\fiv)+\sum_{b<\h(\fiv)}\J}{
          \fix{\t}}{`t}}\\
      \ep & \ep'`U\ep_1`U\ep_2`U\ep_3`U
      \{\h(\fiv)=\fc{\zero}{\k(\fiv)}{\i(b,\fiv)}\}`U
      \bigcup_y\ep_y' \\
      `D & \{y:\mtyp[b_y]{\l_y(\fiv)}{\A_y}\}_y\\
    \end{array}
  \end{math}
}
\end{varitemize}
\begin{lemma}
  \label{lem:welldef-mainfct}
  For every \fiv, \ictx, and every \PCF\ derivation \derpcf, 
  $\mainfct(\fiv;\ictx;\derpcf)$ is well defined on the form $(\ljudg{`G}{\I}{t}{`t};\rep;\scond)$, and satisfies Decoration and Polarity.
\end{lemma}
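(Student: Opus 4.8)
The plan is to prove the statement by structural induction on the \PCF\ derivation \derpcf, following exactly the case analysis on which \mainfct\ itself is defined. Termination of \mainfct\ is immediate, since every recursive call is made on an immediate sub-derivation of \derpcf; hence the triple $(\ljudg{`G}{\I}{t}{`t};\rep;\scond)$ is returned after finitely many steps, provided that in each case the recursive and auxiliary calls are themselves well defined. So the real content is to check, in each case, that (i) the auxiliary algorithms are invoked on arguments meeting their preconditions, and that (ii) Decoration and (iii) Polarity hold of the result.

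For well-definedness and Decoration I would argue together. Every auxiliary algorithm (\algder, \algcontr, \algdig, \algweak, and the variants \algax, \algsub, \algif) requires its type arguments to be primitive for a prescribed set of index variables and to share a common skeleton, and then returns types of that same skeleton. By the induction hypothesis each recursive result $\typjudg_i$ satisfies Decoration, so the skeletons of the types in its context and subject are exactly those dictated by the corresponding \PCF\ sub-judgement; moreover every type produced by \anot{} is primitive of the intended skeleton by construction. Matching these against the shape imposed by the \PCF\ rule being decorated shows that each auxiliary call receives admissible inputs, so \mainfct\ does not get stuck and returns a triple of the required form. Decoration of the output then follows because skeletons are preserved by every auxiliary algorithm, and because the skeleton of the assembled judgement is, by inspection of each case, the \PCF\ conclusion $\forget{`G}\vdash\t:\forget{`t}$.

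The heart of the proof is Polarity, i.e.\ that $\judg[\rep]{\fiv}{\ictx}{`G}{\I}{\t}{`t}$ is correctly specified in the sense of Definition~\ref{def:specif}. Here I would exploit the third clause of each auxiliary lemma, which records precisely which symbols the produced rewriting program specifies, and in terms of which: for instance \algder\ with polarity $\polone$ makes $`s$ into $\tspec{\polone}{\typsymb[\polone]{`t}}{\rep}$ while leaving $`t$ as $\tspec{\neg\polone}{\typsymb[\neg\polone]{`s}}{\rep}$, and analogously for \algcontr, \algdig\ and \algweak. Writing \nsymb\ for the negative symbols of the output judgement, I would show, symbol by symbol, that every positive symbol of $`t$ and every negative symbol of a context type is defined by \rep\ (possibly through a chain of definitions) in terms of symbols of \nsymb, while the symbols of \nsymb\ themselves are left unspecified; the fresh symbols $\i,\h_y,\ldots$ introduced in each case are handled by the explicit equations $\{\i(\fiv)`=\inb\}$, $\{\h_y(\fiv)`=\i_y(\fiv)+\j_y(\fiv)\}$, and so on, which also witness that the symbols occurring in the weight~\I\ are \ispec{\nsymb}{\rep}. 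The polarity arguments passed to the auxiliary algorithms are chosen exactly so that these specification relations compose: a symbol that is unspecified (negative) in one sub-result is, after the relevant auxiliary call, either turned into a specified symbol defined in terms of a symbol that \emph{is} specified in the other sub-result, or retained as a genuine negative symbol of the conclusion.

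The main obstacle is precisely this compositional argument in the binary and recursive cases---\textit{App}, \textit{If} and especially \textit{Fix}. There one must verify that the union of the rewriting programs returned by the recursive calls and by the several auxiliary calls defines \emph{every} positive symbol of $`t$ and \emph{every} negative symbol of $`G$ in terms of the conclusion's negative symbols, with no definitional cycle and no symbol left dangling. Concretely, in \textit{App} one tracks the two dereliction calls of opposite polarities (relating the function's domain and codomain to the argument type and to the freshly annotated result type $`t_2$) together with the family of contraction calls that merge $`G_1$ and $`G_2$, checking that the symbols freed as ``negative'' by one sub-derivation are exactly those pinned down through the matching auxiliary equation. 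For \textit{Fix} the same bookkeeping must additionally accommodate the forest-cardinality index $\fc{\zero}{\k(\fiv)}{\i(b,\fiv)}$ and the substitution and digging steps, so the acyclicity check is correspondingly more delicate; this is where I would spend most of the effort, the remaining cases (\textit{Ax}, $(n)$, $(s)$, $(p)$, $(\toLL)$) being direct instances of the same pattern.
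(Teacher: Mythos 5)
Your proof takes essentially the same route as the paper, which in fact states Lemma~\ref{lem:welldef-mainfct} without a detailed proof, merely observing that Decoration and Polarity are invariants of \mainfct\ maintained because the auxiliary algorithms are always called with the appropriate parameters: structural induction on \derpcf\ following the case analysis of \mainfct, with well-definedness and Decoration coming from skeleton preservation and the primitivity preconditions, and Polarity from the specification clauses (the third items) of the auxiliary lemmas. Your write-up is thus correct and actually supplies more of the case-by-case bookkeeping (notably for \textit{App}, \textit{If} and \textit{Fix}) than the paper itself records.
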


\subsection{Correctness}
\label{sec:alg-cor}
The algorithm we have just finished describing needs to be proved sound and complete with respect to \dlpcfv\ typing. As usual, this is not
a trivial task. Moreover, linear dependent types have a semantic nature which makes the task of formulating (if not proving) the desired results
even more challenging.

\subsubsection{Soundness}
\label{sec:snd}
A type inference procedure is \emph{sound} when the inferred type can actually be derived by way of the type system at hand.
As already remarked, \mainfct\ outputs an equational program $\rep$ which possibly contains unspecified symbols and which, as a consequence, cannot be exploited in typing.
Moreover, the role of the set of proof obligations in \scond is maybe not clear at first.
Actually, soundness holds for \emph{every} completely specified $\ep\supseteq\rep$ which makes the proof obligations in $\scond$ true: 
\begin{theorem}[Soundness]
  \label{theo:algo-snd}
  If \derpcf\ is a \PCF\ derivation for~$\t$,
  then for any \fiv\ and \ictx,
  $\mainfct(\fiv;\ictx;\derpcf)=(\ljudg{`G}{\I}{\t}{`t};\rep;\scond)$
  where \judg[\rep]{\fiv}{\ictx}{`G}{\I}{\t}{`t} is correctly specified and for any $\ep\supseteq\rep$,
  \begin{displaymath}
    \scval{\scond}
    \quad\implies\quad
    \judg{\fiv}{\ictx}{`G}{\I}{\t}{`t}
    \text{ is derivable and precise}.
  \end{displaymath}
\end{theorem}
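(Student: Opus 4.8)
The plan is to argue by induction on the structure of the \PCF\ derivation \derpcf, following the recursive definition of \mainfct\ case by case. The ``correctly specified'' half of the statement is nothing but the Polarity invariant already established in Lemma~\ref{lem:welldef-mainfct}, so no further work is needed there; the content lies in the implication that, once $\scval{\scond}$ holds for a completely specified $\ep\supseteq\rep$, the judgement $\judg{\fiv}{\ictx}{`G}{\I}{\t}{`t}$ is derivable and precise. The motor of the induction is the fact that in every case \scond\ is a disjoint union of, on the one hand, the side-condition sets $\scond_i$ returned by the recursive calls and, on the other, the side-condition sets emitted by the auxiliary algorithms (\algder, \algcontr, \algdig, \algweak, and in the (\textit{Fix}) case also \algsub). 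Hence $\scval{\scond}$ at once gives $\scval{\scond_i}$ for each recursive call, which fires the induction hypothesis, and $\scval{}$ of each auxiliary set, which fires the forward direction of the corresponding auxiliary lemma (Lemma~\ref{lem:der} and its analogues), delivering for this $\ep$ exactly the semantic type equivalence we need.

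In each case I would then glue these ingredients together. First I would apply the induction hypothesis to the immediate subderivations to obtain derivable and precise \dlpcfv\ judgements $\typjudg_i$ for the subterms. Next I would invoke the auxiliary lemmas to rewrite the contexts and types of the $\typjudg_i$ into the shapes demanded by the premises of the target rule of Figure~\ref{fig:typ}: dereliction (Lemma~\ref{lem:der}) aligns the argument type in (\textit{App}); the branch-merging algorithm \algif\ (Lemma~\ref{lem:if}) unifies the two branch types in (\textit{If}); contraction (Lemma~\ref{lem:contr}) fuses the contexts in (\textit{App}) and (\textit{If}); digging (Lemma~\ref{lem:digging}) turns a context into its bounded sum in $(\toLL)$ and (\textit{Fix}); and weakening (Lemma~\ref{lem:weak}) together with \algax\ (Lemma~\ref{lem:ax}) dispatches (\textit{Ax}) and $(n)$. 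Each such rewriting is realised by an instance of the \emph{restricted} subsumption rule of Section~\ref{sec:elim-subs}, whose right-hand premises are discharged precisely by the equivalences just produced. Applying the matching rule and, where needed, one closing equivalence-subsumption yields $\judg{\fiv}{\ictx}{`G}{\I}{\t}{`t}$; it then only remains to verify, by a routine index computation, that the weight and the indices stored in \rep\ match those the rule prescribes, for instance $\I+\sum_{a<\I}\K$ for $(\toLL)$ and $\K+\H$ for (\textit{App}).

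Preciseness I would carry as a second invariant of the same induction, leaning on the reversibility of every \dlpcfv\ rule except subsumption (Section~\ref{sec:typ-cbn-cbs}). Given any competing derivable judgement $\judg{\fiv}{\ictx}{`D}{\J}{\t}{`u}$, reversibility lets me strip its last non-subsumption rule and produce derivable judgements for the subterms which, by the inductive preciseness of the $\typjudg_i$, sit above the ones computed by \mainfct\ in the order $`<$. Because contraction, digging and dereliction are computed by the auxiliary algorithms as the \emph{tightest} equivalences (the iff clauses of Lemmas~\ref{lem:contr}, \ref{lem:digging} and clause~(2) of Lemma~\ref{lem:der} show these relations cannot be tightened further), these subtyping facts propagate monotonically through the assembling operations $\uplus$, $\sum$ and index substitution, yielding $\sjudg{\fiv}{\ictx}{`D`<`G}$, $\sjudg{\fiv}{\ictx}{`t`<`u}$ and $\ijudg{\fiv}{\ictx}{\I\leq\J}$, which is Definition~\ref{def:precise}.

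I expect the (\textit{Fix}) case to be the principal obstacle. Its target rule carries two premises linked by a forest-cardinality constraint of the form $\B\isubst{\zero}\isubst[b]{\fc{b+1}{a}{\I}+b+1}`<\A$, and the algorithm reaches it only after a chain of auxiliary calls --- one dereliction, two substitution steps (\algsub, Lemma~\ref{lem:sbs}) threading the indices $\one+b+\fc{b+1}{a}{\I}$ and $\fc{\zero}{a}{\I}$, and a digging step for the context --- whose equivalences must be composed coherently while tracking which symbols get defined positively and which negatively. The delicate point is to check that the definedness side conditions emitted for the forest cardinalities, in particular $\cjudg{\fiv}{\ictx}{\fc{\zero}{\k(\fiv)}{\I}}$, are exactly what makes the single equivalence feeding the (\textit{Fix}) rule valid, and that the composite interpretation still witnesses correct specification. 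The remaining cases are structurally simpler variants of the application and abstraction arguments sketched above.
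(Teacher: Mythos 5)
Your proposal is correct and takes essentially the approach the paper's machinery is built for: structural induction on \derpcf\ following the case analysis of \mainfct, with the ``correctly specified'' half read off from the Polarity invariant of Lemma~\ref{lem:welldef-mainfct}, derivability obtained by feeding the equivalences supplied by the auxiliary lemmas into the restricted subsumption rule of Section~\ref{sec:elim-subs}, and preciseness carried as a companion inductive invariant exploiting reversibility of the non-subsumption rules. Only two cosmetic slips, neither of which affects the argument: \algweak\ produces no side conditions, and \scond\ also contains definedness conditions emitted directly by \mainfct\ (e.g.\ $\cjudg{\fiv}{\ictx}{\i_y(\fiv)+\j_y(\fiv)}$ in the application case), not merely the sets coming from recursive calls and auxiliary algorithms.
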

%

\subsubsection{Completeness}
\label{sec:cpl}
But are we sure that \emph{at least} one type derivation can be built from the outcome of \mainfct\ \emph{if} one such type derivation exists? Again, it
is nontrivial to formulate the fact that this is actually the case.
\begin{theorem}[Completeness]
  \label{theo:algo-cpt}
  If \judg[\ep]{\fiv}{\ictx}{`D}{\J}{\t}{`s} is a precise \dlpcfv\ judgement derivable with~\der, then $\mainfct(\fiv;\ictx;\forget{\der})$ is of the form 
  $(\ljudg{`G}{\I}{\t}{`t};\rep;\scond)$, and there is $\mapping:\ep\models\rep$ such that $\scval[\mapping]{\scond}$.
\end{theorem}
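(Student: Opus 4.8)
The plan is to prove the statement by induction on the \PCF\ derivation $\derpcf=\forget{\der}$, which is exactly the object on which $\mainfct$ recurses. Well-definedness and the announced shape $(\ljudg{`G}{\I}{\t}{`t};\rep;\scond)$ come for free from Lemma~\ref{lem:welldef-mainfct}, together with Decoration (so $\forget{`G}=\forget{`D}$ and $\forget{`t}=\forget{`s}$) and Polarity (every positive symbol of the synthesized judgement is specified in $\rep$ in terms of the negative ones); what remains is to construct the model $\mapping$. Since the bare existence of a model is too weak to carry the induction, I would strengthen the hypothesis to record the matching with the given derivation: besides $\mapping:\ep\models\rep$ with $\scval[\mapping]{\scond}$, I would require $\sjudg[\mapping]{\fiv}{\ictx}{`D\teq`G}$, $\sjudg[\mapping]{\fiv}{\ictx}{`s\teq`t}$ and $\ijudg[\mapping]{\fiv}{\ictx}{\J=\I}$. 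These equivalences are precisely what the completeness clauses of the auxiliary lemmas (clause~2 of Lemma~\ref{lem:der}, and its analogues) consume at the next level up.

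Before splitting into cases, I would exploit that the \dlpcfv\ typing rules are reversible except for subsumption, together with the hypothesis that $\der$ is \emph{precise}, to decompose $\der$ along the shape of $\derpcf$ into precise immediate subderivations $\der^i$ for the immediate subterms, absorbing the intervening subsumptions into index equivalences (recall that precision collapses $`<$ to $\teq$). Establishing that a precise derivation has precise immediate subderivations is a short preliminary argument from Definition~\ref{def:precise} and the reversibility remarks of Section~\ref{sec:typ-cbn-cbs}; it is this fact that licenses the use of the induction hypothesis on the $\der^i$.

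The base cases ($n$, (\textit{Ax})) are immediate: the synthesized program only has to specify, arbitrarily, the positive symbols of $`t$ and of the weakened context entries, and the completeness clauses of $\algweak$ (Lemma~\ref{lem:weak}) and $\algax$ (Lemma~\ref{lem:ax}) directly provide $\mapping$, the type matching being forced by Decoration. For each inductive case I would proceed in three moves. (i) Apply the induction hypothesis to the $\der^i$, obtaining $\mapping_i:\ep\models\rep_i$ with $\scval[\mapping_i]{\scond_i}$ and the recorded equivalences. (ii) Feed those equivalences into the completeness clause of the auxiliary algorithm called at this node: the (\textit{App}) case uses $\algder$ and $\algcontr$ (Lemmas~\ref{lem:der},~\ref{lem:contr}), abstraction uses $\algdig$ (Lemma~\ref{lem:digging}), the fixpoint additionally $\algsub$ (Lemma~\ref{lem:sbs}), and the conditional uses $\algif$ (Lemma~\ref{lem:if}); each returns a model of its own rewriting program validating its side conditions, precisely from the $\sjudg[\mapping_i]{\fiv}{\ictx}{\cdots}$ supplied by~(i). (iii) Combine all the models into one $\mapping:\ep\models\rep$. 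Because $\rep_i$ and the auxiliary programs $\reptwo,\repthree,\rep_y,\dots$ are built over pairwise disjoint fresh signatures, their unspecified symbols cannot clash; applying the push-out technique already used inside the proof of Lemma~\ref{lem:der} normalises each local model (whose target may be a finite extension of $\ep$ by the derivation's index data) to the common target $\ep$, and the union of the resulting interpretations, extended with the connecting symbols introduced by $\mainfct$ (such as $\f(\fiv)`=\one$ and the $\h_y$, read off from the weights and potentials of $\der$), yields $\mapping$. Then $\scval[\mapping]{\scond}$ is checked side condition by side condition: the $\scond_i$ and the auxiliary side conditions hold by construction, while the convergence obligations such as $\cjudg{\fiv}{\ictx}{\i_y(\fiv)+\j_y(\fiv)}$ hold because, under $\mapping$, the relevant indexes are interpreted by indexes of $\ep$ that are defined (the potentials and weights actually occurring in the \emph{derivable} judgement $\der$ are, by definition of validity, interpreted).

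The main obstacle, I expect, is step~(iii): gluing the per-subterm models into one model of the whole program while keeping every side condition valid. The difficulty is twofold. First, the local models produced by the induction hypothesis and by the auxiliary lemmas have \emph{different} targets, so one must systematically apply the push-out construction to bring them to the common target $\ep$ before taking unions. Second, one must verify that the connecting equations added by $\mainfct$ at the node are simultaneously consistent with all component interpretations, which is exactly where disjointness of the fresh signatures and the Polarity invariant are used. By comparison, the bookkeeping of type equivalences and the base cases are routine once the strengthened induction hypothesis is in place.
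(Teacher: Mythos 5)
Your proof architecture is the intended one: induction mirroring the recursion of $\mainfct$ on $\forget{\der}$, a strengthened hypothesis recording the equivalences between the synthesized judgement (read through the model) and the given one, the completeness clauses of the auxiliary lemmas as the engine of each case, and disjointness of fresh signatures plus the push-out technique to merge local models; Lemma~\ref{lem:welldef-mainfct} indeed disposes of well-definedness, Decoration and Polarity.

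The genuine gap is the step you set aside as ``a short preliminary argument'': that the immediate subderivations of a derivation of a \emph{precise} judgement again conclude precise judgements (equivalently, that all internal uses of (\textit{Subs}) can be absorbed as equivalences). This claim is the crux of the whole induction, and the direct argument fails for every rule that aggregates premise data through bounded sums. In ($\toLL$), for instance, the conclusion has weight $\I+\sum_{a<\I}\K$ and context $\sum_{a<\I}\Gamma$: precision of the conclusion only says that these \emph{aggregates} are extremal, and minimality of $\sum_{a<\I}\K$ under $\ictx$ does not imply minimality of $\K$ pointwise under $(a<\I,\ictx)$ --- two premise weights that are incomparable as functions of $a$ can have equal bounded sums. (Cancellation does rescue the weight in (\textit{App}), which is a plain sum $\K+\H$, but not ($\toLL$), (\textit{Fix}), or the summed potentials in contexts.) What one can prove is: \emph{if} a pointwise-minimal, i.e.\ precise, judgement for the premise exists, then equality of sums forces the premise occurring in $\der$ to be equivalent to it. But the existence of a precise judgement for an arbitrary typable subterm is exactly what Soundness (Theorem~\ref{theo:algo-snd}) together with Completeness delivers --- it presupposes a model validating the side conditions of the recursive call --- so invoking it inside the induction is circular.

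The standard repair is to drop precision from the induction hypothesis and prove, for an \emph{arbitrary} derivable judgement, that the output of $\mainfct$ under a suitable model subsumes it ($\Delta\sqsubseteq\Gamma$, $\tau\sqsubseteq\sigma$, $\I\leq\J$), recovering the stated theorem a posteriori. But that breaks the other pillar of your argument: clause~2 of Lemma~\ref{lem:der} and its analogues consume \emph{equivalences}, not subtyping inequalities, so under this reformulation they no longer apply as stated and would themselves have to be restated and reproved in inequational form. Either route requires substantial work exactly where you claim routineness, so the proof as proposed does not go through.
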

A direct consequence of soundness and completeness (and the remark on Definition~\ref{def:valid-model}) is the following:
\begin{corollary}
  \label{cor:eq-typ-sc}
  If a closed term~\t\ is typable in \PCF\ with type \NatPCF\ and a derivation~\derpcf, then 
  \begin{displaymath}
    \mainfct(∅;∅;\derpcf)~=\quad(\ljudg{}{\I}{\t}{\NatU{\f}};\ep;\scond)
  \end{displaymath}
  and~\t\ is typable in \dlpcfv\ iff \scval{\scond}.
\end{corollary}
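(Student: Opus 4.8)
The plan is to deduce the corollary from Soundness (Theorem~\ref{theo:algo-snd}) and Completeness (Theorem~\ref{theo:algo-cpt}), once the shape of $\mainfct(∅;∅;\derpcf)$ has been pinned down. First I would fix that shape. Since $\t$ is closed and has \PCF\ type $\NatPCF$, the conclusion of $\derpcf$ is $\vdash\t:\NatPCF$, so by the \emph{Decoration} part of Lemma~\ref{lem:welldef-mainfct} the returned judgement $\ljudg{`G}{\I}{\t}{`t}$ satisfies that $\forget{`G}\vdash\t:\forget{`t}$ is exactly $\vdash\t:\NatPCF$; hence $`G$ is empty and $`t=\NatU{\f}$ for a fresh nullary symbol $\f$. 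By the \emph{Polarity} part the judgement is correctly specified (Definition~\ref{def:specif}), and, the context being empty, its set of negative symbols reduces to $\typsymb[-]{`t}=\typsymb[-]{\NatU{\f}}=∅$. Since the symbols left unspecified by $\mainfct$ are exactly those occurring in negative position, $\rep$ has no unspecified symbol, i.e. it is completely specified; this justifies writing $\ep$ for it in the statement.

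Next I would prove the ($\Leftarrow$) implication, which is immediate. Assuming $\scval{\scond}$, I apply Theorem~\ref{theo:algo-snd} with the completely specified program $\ep=\rep$ (a legitimate choice as $\rep\supseteq\rep$): from $\scval{\scond}$ it follows that $\ejudg{}{\I}{\t}{\NatU{\f}}$ is derivable and precise in \dlpcfv, so in particular $\t$ is typable.

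For the ($\Rightarrow$) implication I would start from an arbitrary \dlpcfv\ typing of $\t$ and reduce it to a \emph{precise} one, which is what Completeness consumes. Concretely: if $\t$ is typable, then—being closed of type $\NatPCF$—it admits a precise derivation $\der$ concluding some $\ejudg[\eptwo]{\J}{\t}{`s}$ in the sense of Definition~\ref{def:precise}. Its \PCF\ skeleton $\forget{\der}$ is again a derivation of $\vdash\t:\NatPCF$, and since simple-type derivations are syntax-directed and all subterm types of a closed term of a fixed ground type are forced, the derivation is essentially unique, so $\forget{\der}=\derpcf$. Applying Theorem~\ref{theo:algo-cpt} to $\mainfct(∅;∅;\forget{\der})=\mainfct(∅;∅;\derpcf)$ then yields an interpretation $\mapping:\eptwo\models\rep$ with $\scval[\mapping]{\scond}$. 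Finally, because $\rep=\ep$ is completely specified, the remark following Definition~\ref{def:valid-model} tells us that $\mapping$ has empty domain and that $\scval[\mapping]{\scond}$ is equivalent to $\scval{\scond}$ (the two signatures being disjoint), whence $\scval{\scond}$ holds.

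The main obstacle is the step asserting that typability implies \emph{precise} typability, since Completeness is stated only for precise judgements whereas ``$\t$ is typable'' permits arbitrary subsumption and non-singleton interval types. I would discharge it using the meta-theory of \dlpcfv\ recalled in Section~\ref{sec:prop-dlpcf}: a closed typable term of type $\NatPCF$ evaluates to a numeral $\nb[m]$ by Subject Reduction together with termination of typable terms, and can then be assigned the singleton type $\NatU{\inb[m]}$ by a derivation from which all slack in the indices and in the weight has been removed. Verifying that this tightened derivation is genuinely minimal among all derivable judgements for $\t$, hence precise in the sense of Definition~\ref{def:precise}, is the one point requiring real care.
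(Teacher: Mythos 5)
Your decomposition is exactly the one the paper intends: pin down the shape of $\mainfct(\emptyset;\emptyset;\derpcf)$ via Decoration and Polarity (Lemma~\ref{lem:welldef-mainfct}), note that the absence of negative symbols in the output judgement forces the output rewriting program to be completely specified, get the right-to-left implication from Soundness (Theorem~\ref{theo:algo-snd}) instantiated with $\ep=\rep$, and get the left-to-right implication from Completeness (Theorem~\ref{theo:algo-cpt}) together with the remark following Definition~\ref{def:valid-model}. Up to and including the right-to-left implication, your argument is correct.

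The left-to-right implication, however, has two genuine gaps. First, the claim that the \PCF\ derivation of a closed term of type $\NatPCF$ is ``essentially unique'' is false: in $(\lambda x.\nb[0])(\lambda y.y)$ the simple type given to $\lambda y.y$ is completely unconstrained, so $\vdash\t:\NatPCF$ has infinitely many derivations. This matters because Theorem~\ref{theo:algo-cpt}, applied to a precise derivation $\der$, only produces a model of the program and side conditions output by $\mainfct(\emptyset;\emptyset;\forget{\der})$; if $\forget{\der}\neq\derpcf$, these involve different function symbols and different proof obligations, so you end up validating the wrong set $\scond$. (Worse, an arbitrary \dlpcfv\ typing of $\t$ need not even conclude with a type refining $\NatPCF$: the diverging term $\fix{x}$ is \dlpcfv-typable at a potential-zero arrow type, so the corollary itself is only true under the reading ``typable at a type $\Nat{\I}{\J}$'', a reading your argument silently assumes when it invokes termination.) Second, the step from typability to \emph{precise} typability with skeleton exactly $\derpcf$ --- which you explicitly leave open --- is where the entire content of this direction lies: it requires Intensional Soundness (Proposition~\ref{prop:int-snd}) to obtain termination, and then Relative Completeness (Proposition~\ref{prop:rel-cpl}) in a strengthened form that decorates the \emph{given} skeleton $\derpcf$ and yields a judgement precise in the sense of Definition~\ref{def:precise}; note that Proposition~\ref{prop:rel-cpl} as stated does not suffice, since the weight it produces is the total number of \cek\ steps rather than the minimal derivable weight, so the resulting judgement need not be precise. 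As it stands, then, the ``if'' part of the corollary is proved, but the ``only if'' part is not.
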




\section{Type Inference at Work}
\label{sec:side_cond}

The type inference algorithm presented in the previous section has been implemented in \textsc{Ocaml}\footnote{the 
source code is available at \url{http://lideal.cs.unibo.it}.}.
Programs, types, equational programs and side conditions become values of appropriately defined inductive
data structures in \textsc{Ocaml}, and the functional nature of the latter makes the implementation effort
easier. This section is devoted to discussing the main issues we have faced along the process, which is still ongoing.

The core of our implementation is an \textsc{Ocaml} function called \progname.
Taking a closed term~\t\ having \PCF\ type \tpcfone\ in input, \progname\ returns 
a typing derivation~\der, an equational program~\ep\ and a set of side conditions~\scond. The conclusion of~\der\ is a 
\dlpcf\ typing judgement for the input term. If \tpcfone\ is a first-order type, then the produced judgement
is derivable \textit{iff} all the side conditions in \scond\ are valid~(see Corollary~\ref{cor:eq-typ-sc}).
To do so, \progname\ calls (an implementation of) \mainfct\ on \t\ and a context $\fiv$ consisting of $n$ unconstrained index variables, where $n$ is the arity of $\t$.
This way, \progname\ obtains \rep\ and \scond\ as results, and then proceeds as follows:
\begin{varitemize}
\item
  If $\tpcfone$ is $\NatPCF$, then \rep\ is already completely specified and Corollary~\ref{cor:eq-typ-sc} ensures that we already have
  what we need.
\item
  If  $\tpcfone$ has a strictly positive arity, then some of the symbols in \rep\ are unspecified, and appropriate equations for them need to be added to \rep.
  Take for instance a term~\s\ of type $\NatPCF\toPCF\NatPCF$.
  \progname(\s) returns~\rep, \scond, and a typing judgement on the form
  \begin{displaymath}
    \ejudg[\rep]{a;\emptyset}{\K}{\t}{\NatU{\g(a,b)}\marr{b<\f(a)}\NatU{\j(a,b)}},
  \end{displaymath}
  where~\j\ is a positive symbol while $\f$ and $\g$ are negative, thus unspecified in $\rep$.
  $\rep$ can be appropriately ``completed'' by adding the equations $\f(a):=\inb[1]$ and $\g(a,b):=a$ to it.
  This way, we are insisting on the behaviour of $\t$ when fed with \emph{any} natural number (represented by~$a$) and when the environment needs $\t$ only once.
\end{varitemize}
How about complexity analysis? Actually, we are already there: the problem of proving the number of
machine reduction steps needed by \t\ to be at most $p:\mathbb{N}\rightarrow\mathbb{N}$ (where $p$ is, e.g. a polynomial)
becomes the problem of checking $\scval{\scondtwo}$ where \ep\ is the appropriate completion of~\rep, and $\scondtwo$ is $\scond\cup\{(\K+1)(|t|+2)\leq p(a)\}$ (Proposition~\ref{prop:int-snd}).

\paragraph{Simplifying Equations.}
Equational programs obtained in output from \progname\ contains many equations which are trivial 
(such as $\f(a)=\inb$ or $\f(a)=\g(a)$), and as such can be eliminated. Moreover, instances of forest
cardinalities and bounded sums can sometime be greatly simplified. As an example, $\sum_{a< \inb[0]}\J$
can always be replaced by $\inb[0]$. This allows, in particular, to turn $\rep$ into a set of
fewer and simpler rules, thus facilitating the next phase.

A basic simplification procedure has already been implemented, and are called by \progname\ on the output 
of \mainfct.
However, automatically treat the equational program by an appropriate prover would be of course desirable.
For this purpose, the possibility for \progname\ to interact with \textsc{Maude}~\cite{maude}, a system supporting equational 
and rewriting logic specification, is currently investigated.

\paragraph{Checking Side Conditions.}
As already stressed, once \progname\ has produced a pair $(\rep,\scond)$, the task we started from, namely complexity analysis of $\t$, is not finished, yet:
checking proof obligations in $\scond$ is as undecidable as analysing the complexity of $\t$ directly, since most of the obligations in $\scond$ are termination statements anyway.
There is an important difference, however:
statements in $\scond$ are written in a language (the first-order equational logic) which is more amenable to be treated by already existing automatic and semi-automatic tools.

Actually, the best method would be to first call as many existing automatic provers on the set of side conditions, then asking the programmer to check those 
which cannot be proved automatically by way of an interactive theorem prover.
For this purpose, we have implemented an algorithm translating a pair in the form $(\rep,\scond)$ into a \textsc{Why3}~\cite{why3} theory.
\lv{
  Indeed, \textsc{Why3} is an intermediate tool between first order logic and various theorem provers, from SMT solvers to the \textsc{Coq} proof assistant~\cite{coq}\footnote{
  \textsc{Alt-ergo}, \textsc{Cvc3}, \textsc{E-prover}, \textsc{Gappa},   \textsc{Simplify}, \textsc{Spass}, \textsc{Vampire}, \textsc{veriT}, \textsc{Yicex} and \textsc{Z3}}
  on the side conditions produced by $\progname(\t)$. Most of them are actually proved automatically, at least in the few 
  example programs we have mentioned in the course of this paper~\footnote{The results of these tests are available at \url{http://lideal.cs.unibo.it/}}.
  Actually, the symbol names used in the \textsc{Why3} theory are the same as the ones used for the annotation of the type derivation (that can be printed by \progname).
  Hence trying to check interactively some side conditions, the programmer can access both the definition of a symbol in the equational program, and the subterm of~\t\ 
  it refers to in the type derivation.
  
  However, using the proof assistant \textsc{Coq} on the few side conditions that remain to be interactively checked \textit{through} the \textsc{Why3} tool is not as simple as it should be.
  This is due to the way we express bounded sum and forest cardinality indexes in a first order logic.
  To facilitate the work of the programmer checking the side conditions, it would thus be suitable to also translate them directly into \textsc{Coq}, 
  making use of its higher order definition facilities.  The interactive theorem prover would be called \emph{directly}, but only on those side conditions that cannot 
  be proved automatically. This of course requires some special care: we would like to preserve a formal link between the \textsc{Why3} theory and 
  the statement which that have to be proved in \textsc{Coq} (or in any interactive theorem prover).
  
  
  All these issues are currently investigated and developed within \textsc{Lideal}~\cite{lideal}.}



\section{Related Work}
\label{sec:rw}
Complexity analysis of higher-order programs has been the object of study of much research.
We can for example mention the many proposals for type systems for the $\lambda$-calculus
which have been shown to correspond in an \emph{extensional sense} to, \eg\ polynomial time
computable functions as in implicit computational complexity. Many of them can be seen
as static analysis methodologies: once a program is assigned a type, an upper bound to
its time complexity is relatively easy to be synthesised. The problem with these systems,
however, is that they are usually very weak from an \emph{intentional} point of view, since
the class of typable programs is quite restricted compared to the class of all terms working
within the prescribed resource bounds.

More powerful static analysis methodologies can actually be devised.
All of them, however, are limited either to very specific forms of resource bounds or to a peculiar form
of higher-order functions or else they do not get rid of higher-order as the underlying logic.  
Consider, as an example, one of the earliest work in this direction, namely Sands's system
of cost closures~\cite{Sands90}: the class of programs that can be handled includes
the full lazy $\lambda$-calculus, but the way complexity is reasoned about remains genuinely
higher-order, being based on closures and contexts. In Benzinger's framework~\cite{Benzinger04}
higher-order programs are translated into higher-order equations, and the latter are turned into
first-order ones; both steps, and in particular the second one, are not completeness-preserving. 
Recent works on amortised resource analysis are either limited to first-order programs~\cite{HAH11} or to linear
bounds~\cite{HOAA10}. A recent proposal by Amadio and R\'egis-Gianas~\cite{AmadioRG} allows to reason
on the the cost of higher-order functional programs by way of so-called cost-annotations,
being sure that the actual behaviour of compiled code somehow reflects the annotation. The
logic in which cost annotations are written, however, is a form of \emph{higher-order} Hoare logic. 
None of the proposed systems, on the other hand, are known to be (relatively) complete in the
sense we use here.

Ghica's slot games~\cite{Ghica05} are maybe the work which is closest to ours, among the many in the literature.
Slot Games are simply ordinary games in the sense of game semantics, which are however instrumented
so as to reflect not only the observable behaviour of (higher-order) programs, but also their
performance. Indeed, slot games are fully abstract with respect to an operational theory 
of improvements due do Sands~\cite{Sands91}: this can be seen as the counterpart of our relative completeness
theorem. An aspect which has not been investigated much since Ghica's proposal is whether slot games
provides a way to perform actual verification of programs, maybe via some form of model checking. 
As we have already mentioned, linear dependency can be seen as a way to turn games and strategies
into types, so one can see the present work also as an attempt to keep programs and strategies
closer to each other, this way facilitating verification.


\section{Conclusions}
\label{sec:concl}

A type inference procedure for \dlpcf\ has been introduced which, given a \PCF\ term, reduces the
problem of finding a type derivation for it to the one of solving proof obligations on
an equational program, itself part of the output. Truth of the proof obligations correspond to termination
of the underlying program. Any type derivation in \dlpcf\ comes equipped with an expression bounding
the complexity of evaluating the underlying program. Noticeably, proof obligations and the related
equational program can be obtained in polynomial time in the size of the input \PCF\ program.

The main contribution of this paper consists in having shown that linear dependency is not only a very
powerful tool for the precise analysis of higher-order functional programs, but is also a way to 
effectively and efficiently turn a complex problem (that of evaluating the time complexity of an higher-order program)
into a much easier one (that of checking a set of proof obligations for truth).

Although experimental evaluation shows that proof obligations can potentially be handled by modern
tools mixing automatic and semi-automatic reasoning, as explained in Section~\ref{sec:side_cond}, much remains
to be done about the technical aspects of turning proof obligations into a form which is suitable to
automatic or semi-automatic solving. Actually, many different tools could conceivably be of help here, each
of them requiring a specific input format.
This implies, however, that the work described here, although not providing a fully-fledged out-of-the-box methodology, has the merit of allowing to factor a complex non-well-understood
problem into a much-better-studied problem, namely verification of first-order inequalities on the natural
numbers.


\lvsv{\bibliographystyle{abbrv}}{\bibliographystyle{splncs03}}
\bibliography{bibli}

\end{document}